\tikzset{zigzag/.style={decorate, decoration=zigzag}}
\renewcommand{\restriction}{\mathord{\upharpoonright}}
\newcommand{\supp}{\textup{supp}}
\newcommand{\cat}{\mathbf}
\newtheorem{theorem}{Theorem}[section]
\newtheorem{lemma}[theorem]{Lemma}
\newtheorem{definition}[theorem]{Definition}
\newtheorem{corollary}[theorem]{Corollary}
\newtheorem{proposition}[theorem]{Proposition}
\begin{document}
	\title{Quantum fields on semi-globally hyperbolic space-times}
	\author{Daan Janssen%
		\thanks{Electronic address: \texttt{janssen@itp.uni-leipzig.de}}}
	\affil{Institute for Theoretical Physics,\\University of Leipzig,\\Germany}
	\maketitle
	
	\noindent We introduce a class of space-times modeling singular events such as evaporating black holes and topology changes, which we dub as semi-globally hyperbolic space-times. On these space-times we aim to study the existence of reasonable quantum field theories. We establish a notion of linear scalar quantum field theories on these space-times, show how such a theory might be constructed and introduce notions of global dynamics on these theories. Applying these constructions to both black hole evaporation and topology changing space-times, we find that existence of algebras can be relatively easily established, while the existence of reasonable states on these algebras remains an unsolved problem.
	
	\section{Introduction}
	In the past sixty years we could witness the emergence of a robust framework to describe quantum field theory on curved space-times. This framework, often referred to as local or algebraic quantum field theory, is versatile enough to study free quantum field theories, perturbative theories or even non-perturbative interacting theories, both in terms of specific integrable models and via an axiomatic approach, on a whole host of background space-times (see \cite{brunettiAdvancesAlgebraicQuantum2015} for an overview). However, not all background space-times lend themselves particularly well to constructing quantum field theories on them. In most cases where AQFT's are considered, the background space-times are assumed to be globally hyperbolic. This is an assumption on the global causal structure of a background that, not coincidentally, is often also imposed when studying the classical wave equation (or related PDE's) on curved space-times.

	While global hyperbolicity is a very convenient property in many contexts, specifically when one is interested in the global dynamics of a theory, every now and then one encounters a physically interesting space-time that is not globally hyperbolic. In some cases one would still like to construct a quantum field theory on these. Examples that may be of interest are space-times with closed time-like curves, i.e. space-times admitting `time-machines', space-times with time-like (conformal) boundaries, like AdS space-times, topology changing space-times, for instance space-times with dynamically formed wormholes, or space-times with more general naked singularities, such as a fully evaporating black hole space-time. While for some of these space-times it can be said that their physical relevance is mostly speculative, as is the case for wormhole space-times, one cannot ignore non globally hyperbolic space-times altogether.\\
	
	\noindent In recent years the space-times that have garnered the most attention when it comes to constructing quantum field theories on them, besides globally hyperbolic space-times that is, are those with time-like boundaries. Examples of work related to this are \cite{dappiaggiCasimirEffectPoint2015,wrochnaHolographicHadamardCondition2017,beniniAlgebraicQuantumField2018}. No doubt this is partly due to the advent of holographic dualities in physics, most notably the AdS/CFT correspondence which is closely associated with string theory, but also because some well-known classical solutions to the Einstein equations actually admit (singular) time-like boundaries, see for instance a Kerr black hole \cite{oneillGeometryKerrBlack1995}. One could argue that admitting time-like boundaries is actually only a very minor generalization with regards to global hyperbolicity. In fact many of the geometric properties of globally hyperbolic space-times carry over to `globally hyperbolic space-times with boundaries' without too much change \cite{hauStructureGloballyHyperbolic2019}. However on the side of field theories, both classical and quantum, such a generalization does warrant the introduction of appropriate boundary condition in order to fix time-evolution of the states in a theory.
	
	On the other side of the spectrum, in a sense very far removed from the class of globally hyperbolic space-times, are space-times with closed time-like curves. Quantum fields on  space-times such as the space-like cylinder have been considered in \cite{fewsterQuantumFieldTheory1996,kayQuantumFieldsCurved1997}, which lead to the introduction of a key property that many agree a quantum field on any (non globally hyperbolic) space-time should adhere to, namely that of F-locality. This property, very loosely speaking, entails that quantum field theories on any space-time should be locally equivalent to a theory on globally hyperbolic space-times. A more recent study of quantum fields on space-times with closed time-like curves can be found in \cite{tolksdorfQuantumPhysicsFields2018}. This work mostly focuses on investigating the D-CTC condition, a model for closed time-like curves in quantum computational network as introduced in \cite{deutschQuantumMechanicsClosed1991}, in the algebraic framework for quantum fields theory on curved space-times. However, this work also contains an explicit construction of a QFT on Pollitzer space-times, that indeed contain closed time-like curves and the resulting theory is in fact also F-local. While we do not consider space-times with closed time-like curves, F-locality will still be a key feature that we shall adhere to.\\
	
	\noindent So what space-times do we focus our attention on here and why? As mentioned, quantum fields on space-times with time-like boundaries have been relatively well studied (at least in the context of linear scalar fields), and shall therefore not be the focus of this work. Instead, we shall focus on space-times with more discrete (naked) singularities. A main motivation for studying quantum fields on such space-times comes from the supposed structure of evaporating black holes (originating in \cite{hawkingParticleCreationBlack1975}, and more recently studied from the perspective of causal structures in \cite{lesourdCausalStructureEvaporating2019}), however we shall in fact study a more general class of space-time for which black hole evaporation space-times are just one example, that we will treat in \ref{sec:bhe}. We shall dub this class the semi-globally hyperbolic space-times. Beyond the black hole evaporation space-times, or at least space-times of the conformal class that (semi-classically evolved) evaporating black holes are expected to be a part of, it also contains, for instance, the aforementioned topology changing space-times.
	
	As we will make precise in this paper, to us constructing a (linear scalar) quantum field theory specifically means constructing algebras generated by `operator valued distributions' on a space-time, satisfying an appropriate equation of motion. While in the case of globally hyperbolic space-times such an approach is in one-to-one correspondence with building an algebra from (symplectic smearings of) solutions to the classical equation of motion (see \cite{waldQuantumFieldTheory1994}), this correspondence need no longer hold if one treats more general space-times. This means that the problem of extending classical and quantum field theory to non-globally hyperbolic space-times can become somewhat further separated than on globally hyperbolic space-times.
	
	We rigorously define the class of semi-globally hyperbolic space-times in section \ref{sec:sgh}. In the section that follows, section \ref{sec:qft}, we propose a notion of an (f-local) linear scalar field theory on these space-times, describe how such a theory can be constructed and discuss how one should interpret notions of global dynamics on these theories. Thereafter, in section \ref{sec:maxsem} and \ref{sec:nonmax} we shall discuss applications of our construction to some example space-times, where we make the further distinction between so-called maximally semi-globally hyperbolic space-times, such as the black hole evaporation space-time, and other semi-globally hyperbolic space-times.
	
\section{Semi-globally hyperbolic space-times}
	\label{sec:sgh}
	In this section we define the our new class of space-times. However, before we write down this definition, let us recall how globally hyperbolic space-times are defined and what some of their important properties are. Many of these definitions and results, as well as a more general overview of global causal structures, can be found in \cite{minguzziCausalHierarchySpacetimes2008}.
	\begin{definition}
		\label{def:globhyp}
		A space-time $M$ (without boundaries) is \textup{globally hyperbolic} if there exists a (smooth) hypersurface $\Sigma\subset M$ such that each inextendible time-like curve in $M$ crosses $\Sigma$ exactly once. Such a surface $\Sigma$ is referred to as a (smooth) \textup{Cauchy surface}.\footnote{One can drop the smoothness condition without changing the class of space-times described by the definition, see \cite{bernalSmoothCauchyHypersurfaces2003}.}
	\end{definition}
	There is another way to characterize globally hyperbolic space-times which is often useful.
	\begin{proposition}
		A space-time $M$ is globally hyperbolic if and only if it is \textup{causal} (i.e. contains no closed causal curves) and for each $x,y\in M$ the set $J^+(x)\cap J^-(y)$ is compact.\footnote{Recall that $y\in J^+(x)$ if there is a future directed causal curve starting at $x$ and ending at $y$, and that $y\in J^-(x)$ if $x\in J^+(y)$. Furthermore, for a given set $U\subset M$ one defines $J^\pm(U)=\bigcup_{x\in U} J^\pm(x)$. The sets $I^\pm(x)$ and $I^\pm(U)$ are defined similarly, but with time-like curves instead of causal curves. }
	\end{proposition}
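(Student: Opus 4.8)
The plan is to prove the two implications separately, using the limit curve theorem for sequences of causal curves as the main tool throughout. Assume first that $M$ admits a Cauchy surface $\Sigma$. Causality is then almost immediate: a closed causal curve can be deformed to a closed timelike curve, which upon unrolling yields an inextendible timelike curve that either misses $\Sigma$ or meets it more than once, contradicting Definition~\ref{def:globhyp} together with the achronality of $\Sigma$. For the compactness of $J^+(x)\cap J^-(y)$, I would take a sequence $(p_n)$ in this set and causal curves $\gamma_n$ running from $x$ through $p_n$ to $y$, apply the limit curve theorem to extract a limiting causal curve from $x$ to $y$, and use the product splitting $M\cong\mathbb{R}\times\Sigma$ induced by $\Sigma$ to confine the $p_n$ to a region where a subsequence converges, necessarily to a point of the closed diamond $J^+(x)\cap J^-(y)$.

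For the converse I would follow Geroch's volume-function argument (see \cite{minguzziCausalHierarchySpacetimes2008}). The first step is to upgrade the hypothesis of causality to strong causality: compactness of the diamonds $J^+(x)\cap J^-(y)$ rules out the partially imprisoned almost-closed causal curves that obstruct strong causality, so that causality together with compact diamonds already forces it (this is the refinement underlying the footnote to Definition~\ref{def:globhyp}). Fixing a finite Borel measure $\mu$ on $M$ with $\mu(M)=1$, I would set $t^-(x)=\mu(J^-(x))$, $t^+(x)=\mu(J^+(x))$ and $f=\ln(t^-/t^+)$. One checks that $t^-$ is strictly increasing and $t^+$ strictly decreasing along every future-directed causal curve, so that $f$ is a time function, and that along any inextendible causal curve $f$ ranges over all of $\mathbb{R}$; the latter uses that, by strong causality and compactness of the diamonds, an inextendible causal curve cannot be imprisoned in a compact set, forcing $t^-\to 0$ and $t^+\to 0$ at the two ends. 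A level set $\Sigma=f^{-1}(c)$ is then achronal and met exactly once by each inextendible timelike curve, i.e. a topological Cauchy surface, which can be upgraded to a smooth one via the result cited in the footnote.

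The main obstacle I anticipate is establishing the continuity of the volume functions $t^\pm$, which is exactly the step where compactness of the causal diamonds is indispensable. Continuity amounts to the maps $x\mapsto J^\pm(x)$ being both outer and inner continuous: upper semicontinuity of $t^\pm$ follows from the limit curve theorem, while lower semicontinuity requires that no measure be concentrated on $\partial J^\pm(x)$, which one arranges by choosing $\mu$ with $\mu(\partial J^\pm(x))=0$ and by using compactness to control the limits of the diamonds. Once continuity is secured the remaining verifications are routine, so I expect the bulk of the work to lie there and in the preliminary upgrade from causality to strong causality.
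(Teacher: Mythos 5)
First, a point of reference: the paper does not prove this proposition at all — it is stated as a recollection of standard causality theory, with the reader deferred to \cite{minguzziCausalHierarchySpacetimes2008}. So the benchmark is the classical literature proof, and your outline does follow exactly that route: Geroch's volume-function construction for the hard direction, together with the refinement that ``causal'' suffices in place of ``strongly causal''. In that direction your sketch is sound: you correctly identify the upgrade to strong causality, the role of non-imprisonment in forcing the volume functions to exhaust $(0,1)$ along inextendible curves, and the continuity of $t^\pm$ as the genuinely technical step (your device of choosing $\mu$ with $\mu(\partial J^\pm(x))=0$ to pass between $I^\pm$ and $J^\pm$ is fine). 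One small misattribution: the footnote to Definition~\ref{def:globhyp} cites \cite{bernalSmoothCauchyHypersurfaces2003} for the smoothness of Cauchy surfaces, not for the causal-versus-strongly-causal refinement; these are different results, and you should invoke the former only at the very last step, when upgrading your topological level set to a smooth hypersurface.

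There is, however, a genuine gap in the easy direction, in the step ``a closed causal curve can be deformed to a closed timelike curve.'' This claim is false in general: a closed causal curve can be deformed to a closed timelike one only if it is not an achronal null geodesic loop, and space-times with achronal closed null geodesics but no closed timelike curves exist (for instance, Minkowski space quotiented by a null translation, or the cylinder $\mathbb{R}_t\times\mathbb{S}^1_x$ with metric $2\,dt\,dx+t^2dx^2$, where $\{t=0\}$ is a closed null geodesic while $t$ is strictly monotone along every timelike curve). So your argument disposes only of closed \emph{timelike} curves; the null-loop case is precisely the hard one and needs a separate argument. The standard repair runs as follows: first show that the Cauchy surface $\Sigma$ is achronal and that $M$ decomposes as the disjoint union $I^-(\Sigma)\sqcup\Sigma\sqcup I^+(\Sigma)$; a push-up argument then confines any closed causal curve $\gamma$ entirely to one of these three pieces. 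In each case one constructs, by repeatedly pushing up around the loop, a past-inextendible \emph{timelike} curve that spirals toward $\gamma$ while remaining in $I^+(\Sigma)$ (or dually in $I^-(\Sigma)$); its inextendible extension then never meets $\Sigma$, contradicting Definition~\ref{def:globhyp}. Without some version of this winding construction (or the equivalent domain-of-dependence machinery showing that every inextendible \emph{causal} curve meets $\Sigma$), the causality claim is not established. Relatedly, in your compactness argument the phrase ``confine the $p_n$ to a region where a subsequence converges'' is where all the work lies — the limit curve theorem only applies once one knows the curves $\gamma_n$ stay in a compact set, and establishing that confinement from the splitting $M\cong\mathbb{R}\times\Sigma$ is the actual content of that direction — but there the idea is right and only the execution is compressed, whereas the deformation claim is an incorrect statement that must be replaced.
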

	Some important features of this class of space-times are as follows.
	\begin{proposition}
		Let $M$ globally hyperbolic with $\Sigma\subset M$ a smooth Cauchy surface, then
		\begin{enumerate}
			\item Given any $\Sigma'\subset M$ a further Cauchy surface, then $\Sigma\cong\Sigma'$, i.e. Cauchy surfaces are diffeomorphic.
			\item There exists a Cauchy time-function, i.e. a continuous function $T:M\rightarrow\mathbb{R}$ that strictly increases along each future directed causal curve such that each non-empty level set $T^{-1}(\{t\})$ is a Cauchy surface.
			\item Given a Cauchy time-function $T:M\rightarrow \mathbb{R}$ and $\Sigma_t=T^{-1}(\{t\})$ for some $t\in T(M)$, then $M\equiv T(M)\times\Sigma_t$.\footnote{This is known as Geroch's Theorem \cite{gerochDomainDependence1970}.} 
		\end{enumerate}
	\end{proposition}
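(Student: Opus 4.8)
The plan is to establish the three claims in the order (2), (3), (1), since the existence of a Cauchy time-function is the engine that drives the other two. All three are classical results in Lorentzian causality theory, so I would follow the strategy originating with Geroch and refined by Bernal and S\'anchez, treating the conceptual construction as the main content and importing the hardest analytic step.

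For (2), I would begin with Geroch's volume-function construction. Fix a finite Borel measure $\mu$ on $M$ equivalent to the metric volume measure, which exists since $M$ is second countable, and set $t^-(x)=\mu(J^-(x))$, $t^+(x)=\mu(J^+(x))$. Since $I^\pm(x)$ is open and nonempty we have $t^\pm(x)>0$, so $T(x)=\ln\!\left(t^-(x)/t^+(x)\right)$ is well defined. Monotonicity is immediate: if $y\in J^+(x)$ with $x\neq y$ then $J^-(x)\subseteq J^-(y)$ and $J^+(y)\subseteq J^+(x)$, so $t^-$ is nondecreasing and $t^+$ nonincreasing along future-directed causal curves; the key point is that in a globally hyperbolic space-time these inclusions are \emph{strict in measure} along any such curve, yielding strict monotonicity of $T$. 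Continuity of $t^\pm$, hence of $T$, follows from the fact that global hyperbolicity makes $M$ causally simple, so that the causal boundaries $\partial J^\pm(x)$ are achronal topological hypersurfaces and therefore $\mu$-null. Finally one checks that each nonempty level set $T^{-1}(\{t\})$ is met exactly once by every inextendible timelike curve: strict monotonicity gives ``at most once'', while the intermediate value theorem applied to $T$ along such a curve gives ``at least once''. This realizes the level sets as Cauchy surfaces.

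For (3), I would introduce a complete future-directed timelike vector field $X$ on $M$; such a field exists because $M$ is time-orientable, and after rescaling by a positive function we may assume its flow $\Phi$ is complete. Each integral curve of $X$ is an inextendible timelike curve, hence crosses $\Sigma_t$ exactly once by the Cauchy property, so $(s,p)\mapsto\Phi_s(p)$ is a smooth bijection $\mathbb{R}\times\Sigma_t\rightarrow M$ with smooth inverse, giving $M\cong\mathbb{R}\times\Sigma_t$. Reparametrizing each flow line by the value of the Cauchy time-function $T$ along it, which is strictly monotonic there, replaces the $\mathbb{R}$-factor by $T(M)$ and yields $M\cong T(M)\times\Sigma_t$. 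Part (1) then follows almost for free: any other Cauchy surface $\Sigma'$ also meets each flow line of $X$ exactly once, so projection along $\Phi$ restricts to a diffeomorphism $\Sigma'\rightarrow\Sigma_t\cong\Sigma$.

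The main obstacle is not the topological content of this argument but obtaining genuine \emph{smoothness} throughout. Geroch's volume function is a priori only continuous, and its level sets only topological Cauchy surfaces; promoting $T$ to a smooth function with everywhere timelike gradient—so that the level sets are smooth spacelike Cauchy surfaces and the reparametrization in (3) is a diffeomorphism rather than a mere homeomorphism—is precisely the delicate analytic step carried out by Bernal and S\'anchez. I would therefore invoke their smoothing results, cited in the footnote to Definition \ref{def:globhyp}, at this point rather than reprove them, treating the continuous construction above as the conceptual skeleton and the smoothing as an imported black box.
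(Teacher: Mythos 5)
The paper never proves this proposition: it is recalled as standard background, with the content delegated to the cited literature (Geroch's theorem \cite{gerochDomainDependence1970}, the smoothing results of \cite{bernalSmoothCauchyHypersurfaces2003}, and the survey \cite{minguzziCausalHierarchySpacetimes2008}). Your reconstruction follows exactly that literature --- Geroch's volume functions for (2), the flow of a complete timelike vector field for (3) and (1), Bernal--S\'anchez imported as a black box for smoothness --- so there is no alternative route to compare; the only question is whether your sketch is sound. Most of it is: strict monotonicity via the measure argument, continuity via closedness of the causal relation, bijectivity of the flow map (which tacitly uses the absence of closed timelike curves), and transversality of the flow to a smooth achronal hypersurface in (1) are all correct.

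There is, however, one genuine gap, and it sits precisely where global hyperbolicity does its work. In (2) you obtain ``at least once'' from ``the intermediate value theorem applied to $T$ along such a curve''. The IVT only applies once you know that $T$ attains values on both sides of the given level $t$ along every inextendible timelike curve $\gamma$; i.e.\ you must show $t^+(\gamma(s))\rightarrow 0$ toward the future end of $\gamma$ and $t^-(\gamma(s))\rightarrow 0$ toward the past end, so that $T=\ln(t^-/t^+)\rightarrow\pm\infty$ (this is the reason the logarithm is introduced at all). This is not a formality: continuity and strict monotonicity of volume time-functions already hold on spacetimes far weaker than globally hyperbolic ones, and by themselves they only make the level sets achronal, not Cauchy. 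The missing argument is: if some $p$ lay in $\bigcap_s I^+(\gamma(s))$, then a future-inextendible portion of $\gamma$ would be imprisoned in the compact set $J^+(\gamma(s_0))\cap J^-(p)$, which is impossible in a strongly causal spacetime; hence $\bigcap_s I^+(\gamma(s))=\emptyset$, and $\mu(I^+(\gamma(s)))\rightarrow 0$ by continuity from above of the finite measure $\mu$. Compactness of causal diamonds enters the proof of (2) here and only here. The omission is particularly relevant to this paper: the semi-globally hyperbolic spacetimes of Definition \ref{def:sgh} carry time-functions that are continuous and strictly increasing along causal curves, yet whose level sets are generally not Cauchy surfaces, so an argument that does not use this step cannot be correct as stated.
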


	\noindent The existence of a global time-function on a space-time is a feature that goes under the name of \textit{stable causality}. This is one of a hierarchy of causality conditions, of which global hyperbolicity is the strongest. Clearly a stably causal space-time does not admit any closed time-like curves, but in fact this property is equivalent to that even when the metric of a stably causal space-time is perturbed (in a way made precise in \cite{minguzziCausalHierarchySpacetimes2008}), no closed time-like curves appear. This is a property we would like to keep in the class of space-times we are considering. Furthermore, as stated we are interested in space-times that, unlike in the case of time-like boundaries, only admit a discrete (i.e. locally finite) number of naked singularities. We can make this precise in the following way.
	\begin{definition}
		\label{def:sgh}
		Let $M$ a stably causal space-time. We say $M$ is \textup{semi-globally hyperbolic} if there exists a time-function $T:M\rightarrow \mathbb{R}$, where for each $a,b\in T(M)$ with $a<b$ there exist a finite set $\{t_i\}_{i=1}^n$ with $t_0=a$, $t_n=b$ and $t_i<t_{i+1}$ such that for each $i<n$ the (open) space-time $T^{-1}((t_i,t_{i+1}))$ is globally hyperbolic and $T\restriction_{T^{-1}((t_i,t_{i+1}))}$ is a Cauchy time-function. We call such a time-function \textup{semi-Cauchy}.
	\end{definition}
	\noindent Some examples from this class are given by the Penrose diagrams in figure \ref{fig:semglobhyp}.
	\begin{figure}[h]
		\centering
		\includegraphics{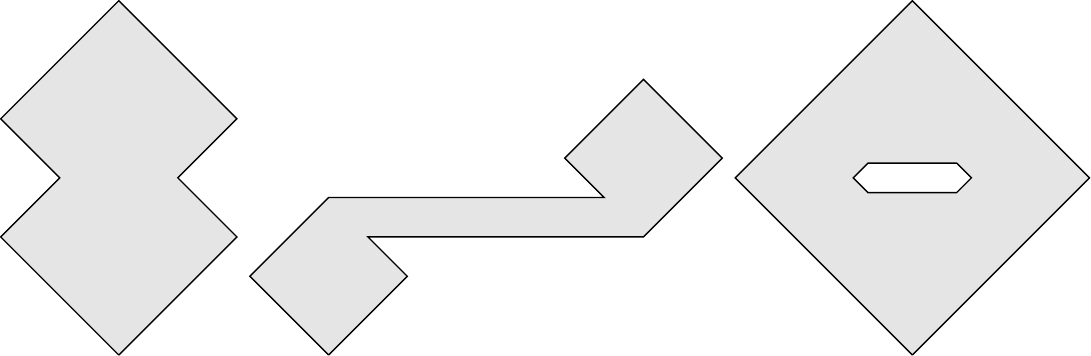}
		\caption{Some Penrose diagrams of two-dimensional semi-globally hyperbolic space-times}
		\label{fig:semglobhyp}
	\end{figure}
	It should be noted that many semi-globally hyperbolic may be embedded in to globally hyperbolic space-times, although generally not as a causally convex subspace-time. In fact the most simple examples of these are globally hyperbolic space-times from which a point has been removed. These and some similar space-times are discussed in more detail in section \ref{sec:maxsem}. We furthermore note that, just as globally hyperbolic space-times (see \cite{fewsterAlgebraicQuantumField2015}), the semi-globally hyperbolic space-times form a category where the morphisms are given by causally convex embeddings. To show this, we give an alternative description of semi-globally hyperbolic space-times, which will also prove useful in constructing quantum field theories.
	\begin{proposition}
		\label{thm:cover}
		A space-time is semi-globally hyperbolic iff for each connected component there is a countable cover $\{M_j:j\in J\}$ of open causally convex globally hyperbolic space-times such that there is a locally finite total order $\leq$ on $J$ satisfying for $j< j'$
		$$M_j\cap M_{j'}\neq\emptyset\text{ if and only if $j,j'$ nearest neighbours},$$
		$$I^+(M_j\cap M_{j'})\cap(M_j\cup M_{j'})\subset M_{j'}, \text{ and }I^-(M_j\cap M_{j'})\cap(M_j\cup M_{j'})\subset M_j.$$
	\end{proposition}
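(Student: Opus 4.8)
\section*{Proof proposal}

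The plan is to prove the two implications separately, using throughout the equivalence (recalled in the text above) between stable causality and the existence of a global time function, together with the compactness characterisation of global hyperbolicity. For the forward direction I would start from a semi-Cauchy time function $T$ and first observe that, after restricting to a connected component, the set $B\subset\mathbb{R}$ of \emph{breaking values} — those $t$ across which global hyperbolicity of the slabs degenerates — is locally finite. This is immediate from Definition \ref{def:sgh}: the finite partition $\{t_i\}$ attached to any interval $[a,b]$ contains all breaking values in $[a,b]$, so $B$ is locally finite and may be enumerated as an order-isomorphic copy $\{s_k\}$ of an interval in $\mathbb{Z}$. Between consecutive breaking values the open slab $A_k:=T^{-1}((s_k,s_{k+1}))$ is globally hyperbolic by hypothesis and causally convex because $T$ is strictly monotone along causal curves; the $A_k$ cover everything except the breaking surfaces $\Sigma_{s_k}:=T^{-1}(\{s_k\})$, which they meet only in their closure.

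To cover the breaking surfaces I would set $B_k:=\mathrm{int}\,D(\Sigma_{s_k})\cap T^{-1}((s_k-\delta_k,s_k+\delta_k))$, the interior of the Cauchy development of the acausal level set $\Sigma_{s_k}$, truncated to a thin slab with $\delta_k<\tfrac12\min(s_k-s_{k-1},s_{k+1}-s_k)$. Each $B_k$ is globally hyperbolic, since an interior Cauchy development is globally hyperbolic with $\Sigma_{s_k}$ as a Cauchy surface and truncating by a causally convex slab preserves this, and it is causally convex and contains all of $\Sigma_{s_k}$; note that $B_k$ may be disconnected, exactly as for punctured Minkowski space, where the middle piece is the double wedge. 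Interleaving produces the candidate cover with total order $\cdots<A_{k-1}<B_k<A_k<B_{k+1}<\cdots$. The choice of $\delta_k$ forces $B_k$ to meet only the two adjacent slabs, giving the nearest-neighbour property, and the two $I^\pm$ containments then follow from monotonicity of $T$: the future of an overlap sits at strictly larger $T$-values, and within the relevant union only the later piece reaches those values.

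For the converse I would, on each globally hyperbolic $M_j$, fix a Cauchy time function $T_j$ and glue these into a single continuous $T$ that is strictly increasing along every future-directed causal curve. Here the hypotheses are essential: the nearest-neighbour condition forces a causal curve to traverse the pieces in the order prescribed by $\leq$, while the containments $I^\pm(M_j\cap M_{j'})\cap(M_j\cup M_{j'})\subset M_{j'},M_j$ make each overlap a spacelike collar separating the earlier piece from the later one, so the $T_j$ can be rescaled to match monotonically on overlaps — for instance by sending the range of each $T_j$ into a prescribed sequence of consecutive intervals and patching with a partition of unity subordinate to the cover. Existence of such a $T$ yields stable causality. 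Finally, local finiteness of $\leq$ means only finitely many pieces meet $T^{-1}([a,b])$ for any $a<b$, giving a finite partition whose sub-slabs each lie inside a single $M_j$ or a single overlap; these are globally hyperbolic and carry $T$ as a Cauchy time function, which is the semi-Cauchy property.

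The step I expect to be the main obstacle is the gluing in the converse: manufacturing one globally monotone time function from the local $T_j$ and then checking that the resulting slabs between breaking values are genuinely globally hyperbolic with $T$ restricting to a Cauchy time function. This is precisely where the two $I^\pm$ containments do their work, since without them the overlaps need not be consistently ordered and the patched function could fail to be monotone; the crux is thus to convert those causal inclusions into a quantitative matching of the $T_j$ on overlaps. A secondary subtlety, on the forward side, is verifying that $\mathrm{int}\,D(\Sigma_{s_k})$ is an honest open globally hyperbolic neighbourhood of the possibly non-smooth level set $\Sigma_{s_k}$, which is also the point at which allowing disconnected cover pieces becomes unavoidable.
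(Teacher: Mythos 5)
Your forward direction rests on the claim that between consecutive breaking values the slab $A_k=T^{-1}((s_k,s_{k+1}))$ is globally hyperbolic ``by hypothesis''. That is not what Definition \ref{def:sgh} provides: it only says that each compact interval admits \emph{some} finite partition into globally hyperbolic slabs, and those partition points need not be breaking values. To conclude that the maximal slab between two consecutive breaking values is globally hyperbolic you need a gluing lemma --- that finitely many overlapping slabs, each globally hyperbolic with $T\restriction$ Cauchy, union to such a slab, and that an increasing union of such slabs is again one. This is true, but it is a substantive missing step (an inextendible-causal-curve argument), not a restatement of the hypothesis. The paper's proof sidesteps it entirely: it indexes the cover by the partition points themselves, whether or not they are ``genuine'' breaking values, and takes each piece to be a truncated domain of dependence $M_{j_n}=D(T^{-1}((j_n,j_{n+1})))\cap T^{-1}\left(\left(\tfrac{j_{n-1}+j_n}{2},\tfrac{j_{n+1}+j_{n+2}}{2}\right)\right)$, which absorbs your $A$- and $B$-type pieces into a single family; that the level set $\Sigma_{j_n}$ lies inside $D$ of the slab is a short inextendibility argument. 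Relatedly, your one-line justification of the $I^{\pm}$ containments (``only the later piece reaches those $T$-values'') fails for the pairs involving $B_k$: the overlap $A_{k-1}\cap B_k$ has future points lying in $A_{k-1}$ whose $T$-values are still below $s_k$, so membership in $B_k$ is not decided by $T$-values; one must actually show such points remain in $\mathrm{int}\,D(\Sigma_{s_k})$ (doable by prepending causal segments, but it must be done).

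The converse contains the more serious gap, and you locate it yourself without repairing it. Patching the rescaled $T_j$ with a partition of unity does not produce a time function: on an overlap one has $T=T_j+(1-\chi)(T_{j+1}-T_j)$, and along a causal curve the variation picks up a term proportional to the derivative of $\chi$ times $(T_{j+1}-T_j)$, which has no sign --- a convex combination of time functions with non-constant weights is in general not a time function. So the concrete mechanism you propose is precisely the step that fails, and no quantitative matching of ranges rescues blending. The paper's proof replaces blending by cutting: choose a spacelike Cauchy surface $\Sigma_n$ of each overlap $M_n\cap M_{n+1}$, remove $\Sigma_{n-1}\cup I^-(\Sigma_{n-1})\cup\Sigma_n\cup I^+(\Sigma_n)$ from $M_n$ to obtain open, causally convex, globally hyperbolic pieces $\tilde M_n$ that are pairwise disjoint and, together with the $\Sigma_n$, exhaust $M$; equip each with a Cauchy time function $T_n:\tilde M_n\rightarrow(n-1,n)$; prove by a limit argument along inextendible timelike curves that $T_n$ extends continuously to $\Sigma_{n-1}$ and $\Sigma_n$ with constant values $n-1$ and $n$; and then concatenate. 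Monotonicity across the interfaces is then automatic, and $T$ is semi-Cauchy because its level slabs are exactly the $\tilde M_n$. Your intuition that the $I^{\pm}$ hypotheses make each overlap a separating collar is correct --- in the paper they are what force the $\tilde M_n$ to be disjoint and the decomposition exhaustive --- but converting that intuition into a proof requires cut-and-paste, not a partition of unity.
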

	\begin{proof}
		Without loss of generality assume $M$ is connected. Suppose $M$ is semi-globally hyperbolic and $T:M\rightarrow \mathbb{R}$ a semi-Cauchy time function. Note that for any $n\in\mathbb{Z}$ we can find a finite set $J_n=\{t_0,...,t_m\}\subset[n,n+1]$ such that $t_0=n$ $t_m=n+1$ and $t_i<t_{i+1}$ such that $T^{-1}((t_i,t_{i+1}))$ causally convex globally hyperbolic. Now define $J=\cup_{n\in\mathbb{Z}}J_n$, which is a countable locally finite totally ordered set $J=\{j_n:n\in\mathbb{Z}\}\subset\mathbb{R}$. For $j_n,j_{n+1}\in J$ nearest neighbours such that $j_n<j_{n+1}$, we set $$M_{j_n}=D(T^{-1}((j_n,j_{n+1})))\cap T^{-1}\left(\left(\frac{j_{n-1}+j_n}{2},\frac{j_{n+1}+j_{n+2}}{2}\right)\right).\footnote{Here $D(U)\subset M$ is the \textit{domain of dependence}, i.e. the set of all points $x\in M$ such that all inextendible causal curves through $x$ intersect $U\subset M$.}$$
		Note that $M_j$ is causally convex globally hyperbolic, as $D(T^{-1}((j_n,j_{n+1})))$ is causally convex globally hyperbolic and $T^{-1}\left(\left(\frac{j_{n-1}+j_n}{2},\frac{j_{n+1}+j_{n+2}}{2}\right)\right)$ causally convex. Furthermore $M_{j_n}\cap M_{j_m}\neq\emptyset$ if and only if $\vert n-m\vert\leq 1$.
		
		We now show that $\Sigma_{j_n}\subset M_{j_n}$. Suppose $x\in\Sigma_{j_n}= T^{-1}(\{j_n\})$, then clearly $$x\in T^{-1}\left(\left(\frac{j_{n-1}+j_n}{2},\frac{j_{n+1}+j_{n+2}}{2}\right)\right).$$ Now, let $\gamma_x:\mathbb{R}\rightarrow M$ be an inextendible causal curve through $x$. Since $T$ is a time function, and hence strictly increasing along causal curves, there must be a $x'\in \gamma_x(\mathbb{R})$ such that $j_n<T(x')<j_{n+1}$, which means $\gamma_x$ goes through $\Sigma_{T(x')}\subset M_{j_n}$. Since this holds for any time-like curve through $x$, we see $x\in M_{j_n}$ and hence $\Sigma_{j_n}\subset M_{j_n}$. Since this means that for each $t\in\mathbb{R}$ there is an $n\in\mathbb{Z}$ such that $T^{-1}(\{t\})\subset M_n$, we see that $$M=\bigcup_{j\in J}M_j.$$
		
		Observe that $U_n=M_{j_n}\cap M_{j_{n+1}}$ is itself causally convex and hence globally hyperbolic, that $\Sigma_{j_{n+1}}$ is a Cauchy surface for $U$ and that $I^+(U)=U\cup I^+(\Sigma_{j_{n+1}})$. Now suppose $$x\in I^+(\Sigma_{j_{n+1}})\cap (M_{j_n}\cup M_{j_{n+1}}),$$ let $T(x)>t'>j_{n+1}>t>j_n$, then any inextendible causal curve trough $x$ crosses either $\Sigma_{t'}$ (i.e. $x\in M_{j_{n+1}}$) or it crosses $\Sigma_t$ (i.e. $x\in M_{j_n}$), but in the latter case by continuity of $T$ we also see that the curve must cross $\Sigma_{t'}$, from which we conclude that $$I^+(\Sigma_{j_{n+1}})\cap (M_{j_n}\cup M_{j_{n+1}})\subset M_{j_{n+1}}.$$
		This implies 
		$$I^+(M_{j_n}\cap M_{j_{n+1}})\cap(M_{j_n}\cup M_{j_{n+1}})\subset M_{j_{n+1}},$$
		and a similar argument shows that 
		$$I^-(M_{j_n}\cap M_{j_{n+1}})\cap(M_{j_n}\cup M_{j_{n+1}})\subset M_{j_{n}}.$$
		This proves that every semi-globally hyperbolic space-time has a cover with the desired properties.\\
		\newline
		Now suppose we have a countable cover $\{M_j:j\in J\}$ as in the proposition. Without loss of generality we can assume that $J= \mathbb{Z}$. Observe that $M_n\cap M_{n+1}\neq\emptyset$ is globally hyperbolic and hence we can define for each $n$ a space-like Cauchy surface $\Sigma_n$ of $M_n\cap M_{n+1}$, which means that $M_n\cap M_{n+1}\subset I^-(\Sigma_n)\cup\Sigma_n\cup I^+(\Sigma_n)$. Now define 
		$$\tilde M_n=M_n\setminus (\Sigma_{n-1}\cup I^-(\Sigma_{n-1})\cup \Sigma_{n}\cup I^+(\Sigma_{n})).$$
		First observe that $\left(I^-(\Sigma_{n-1}) \cup\Sigma_{n-1}\right)\cap M_\subset M_n$ is a (relatively) closed subset on $M_n$, as 
		$$M_n\cap(\Sigma_{n-1}\cup I^-(\Sigma_{n-1}))=\left(M_n\cap M_{n-1}\right)\setminus I^+(\Sigma_{n-1}),$$
		and $I^+(\Sigma_{n-1})$ is open, which follows from \cite[cor. 2.9]{penroseTechniquesDifferentialTopology1972b}. Similarly $\left(I^+(\Sigma_{n}) \cup\Sigma_{n}\right)\cap M_\subset M_n$ is closed. This means in particular that
		$\tilde M_n$ is open.
		It is also causally convex, as for $a,b\in \tilde{M}_{n}$ we have $J^+(a)\cap J^-(b)\subset M_n$ compact, and if we suppose $J^+(a)\cap J^-(b)\cap \Sigma_{n-1}\cap I^-(\Sigma_{n-1})\neq\emptyset$, this means $a\in J^-(\Sigma_{n-1})\cup\Sigma_{n-1}\subset \overline{I^-(\Sigma_{n-1})}\cup\Sigma_{n-1}$, but as $\left(I^-(\Sigma_{n-1}) \cup\Sigma_{n-1}\right)\cap M_n$ is closed in $M_n$, we see that $a\in \Sigma_{n-1}\cup I^-(\Sigma_{n-1})$, which contradicts $a\in \tilde M_n$. A similar contradiction can be derived for $\Sigma_{n}\cup I^+(\Sigma_{n})$ from which we see that $J^+(a)\cap J^-(b)\subset \tilde M_n$ compact and hence $\tilde M_n$ is globally hyperbolic. Now we can define a Cauchy time-function $T_n:\tilde M_n\rightarrow(n-1,n)$.
		
		Let $\gamma:(0,1)\rightarrow \tilde M_n$ a future directed inextendible time-like curve such that $\lim_{\lambda\rightarrow 1}\gamma(\lambda)=\Sigma_n$. Clearly $T_n\circ\gamma$ is a bounded monotonically increasing function, so $\lim_{\lambda\rightarrow 1}(T_n\circ\gamma)(\lambda)\leq n$ exists. Also, since $\gamma$ is inextendible and for each $t\in (n-1,n)$ there is a Cauchy surface $\Sigma_t$ with $T(\Sigma_t)=\{t\}$, then for each $t\in (n-1,n)$ there will be an $\lambda\in (0,1)$ such that $(T\circ\gamma)(\lambda)=t$, hence 
		$$\lim_{\lambda\rightarrow 1}(T_n\circ\gamma)(\lambda)= n.$$
		A similar argument can be made for curves approaching $\Sigma_{n-1}$ and as a result we see that we can define a unique continuous extension $T_n:(\tilde M_n\cup \Sigma_{n-1}\cup\Sigma_n)\rightarrow [n-1,n]$ satisfying $T_n(\Sigma_n)=\{n\}$ and $T_n(\Sigma_{n-1})=\{n-1\}$.
		
		From our assumptions we clearly see that for $n\neq m$ we have $\Sigma_n\cap\Sigma_m=\emptyset$. Moreover, suppose $x\in\tilde M_n\cap\tilde M_m\subset M_n\cap M_m$, then $n,m$ nearest neighbours. Assume without loss of generality that $m=n+1$. It follows that $x\in \Sigma_n$, $x\in I^-(\Sigma_n)$ or $x\in I^+(\Sigma_n)$, where we note that these three options are mutually exclusive. Since $x\in \tilde M_n$, this means $x\not\in I^+(\Sigma_n)\cup\Sigma_n$, hence $x\in I^-(\Sigma_n)$, so $x\not\in \tilde M_{n+1}$, which is a contradiction. Hence we see that $\tilde M_n\cap\tilde M_m=\emptyset$. What follows is that 
		$$M=\bigcup_{n\in\mathbb{Z}}\tilde M_n\cup\Sigma_n,$$
		is a disjoint union and we can uniquely extend all $T_n$'s to a semi-Cauchy time-function $T:M\rightarrow \mathbb{R}$ such that $T\restriction_{\tilde M_n}=T_n$. This implies that $M$ is semi-globally hyperbolic.
	\end{proof}

	\noindent Since for each causally convex $U\subset M$ for $M=\bigcup_{j\in J}M_j$ connected and semi-globally hyperbolic with the cover satisfying the properties of proposition \ref{thm:cover}, the cover $U=\bigcup_{j\in J}(U\cap M_j)$ satisfies the same properties (for each connected component of $U$). The result below follows.
	\begin{corollary}
		Let $M$ semi-globally hyperbolic and $U\subset M$ causally convex. Then $U$ is also semi-globally hyperbolic.
	\end{corollary}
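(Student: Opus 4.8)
The plan is to invoke the covering characterisation of Proposition \ref{thm:cover} and to show that intersecting an adapted cover of $M$ with $U$ yields an adapted cover of $U$, component by component. Since that proposition phrases semi-global hyperbolicity per connected component, I would first reduce to the case of a single connected component $V$ of the (open) sub-space-time $U$. A connected component of an open causally convex set is again open and causally convex: any causal curve in $M$ joining two points of $V$ lies in $U$ by causal convexity, and being connected and meeting $V$ it lies in the component $V$. It therefore suffices to exhibit the required cover for each such $V$. I would fix the connected component $M^{(0)}\supset V$ of $M$ together with a cover $\{M_j:j\in J\}$ of $M^{(0)}$ as in Proposition \ref{thm:cover}, take $J=\mathbb{Z}$ without loss of generality, and set $J_V=\{j\in J: V\cap M_j\neq\emptyset\}$ with the induced order.

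Next I would check the easy properties of the candidate cover $\{V\cap M_j: j\in J_V\}$. Each $V\cap M_j$ is open and, being an intersection of two causally convex sets, causally convex in $M$; as a causally convex open subset of the globally hyperbolic $M_j$ it is itself globally hyperbolic, its causal diamonds coinciding with those of $M$ and remaining compact and contained in $V\cap M_j$. The index set $J_V$ is countable and, as a subset of the locally finite total order $J$, is itself locally finite and totally ordered. The two chronological containments are inherited by monotonicity alone: since every timelike curve in $V$ is one in $M$ we have $I^\pm_V(\,\cdot\,)\subset I^\pm_M(\,\cdot\,)\cap V$, whence for $j<j'$
$$I^+_V(V\cap M_j\cap M_{j'})\cap\big((V\cap M_j)\cup(V\cap M_{j'})\big)\subset \big(I^+_M(M_j\cap M_{j'})\cap (M_j\cup M_{j'})\big)\cap V\subset M_{j'}\cap V,$$
and symmetrically for $I^-$.

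The real work is the nearest-neighbour equivalence, where the subtlety is that passing to $U$ may empty out some overlaps or whole slabs, so that adjacency in $J_V$ need no longer coincide with adjacency in $J$. I would settle this with a nerve argument exploiting connectedness of $V$. Since $\{V\cap M_j:j\in J_V\}$ is an open cover of the connected set $V$, its nerve is connected: a partition of $J_V$ into two parts with no overlaps between them would split $V$ into two disjoint nonempty open sets. An edge $\{j,j'\}$ of the nerve occurs only when $V\cap M_j\cap M_{j'}\neq\emptyset$, which forces $M_j\cap M_{j'}\neq\emptyset$ and hence, by Proposition \ref{thm:cover} applied to $M$, that $j,j'$ are nearest neighbours in $J$. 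Thus the nerve is a connected subgraph of the path graph on $J$ supported on $J_V$, and such a graph is connected only if $J_V$ is a contiguous block of $J$ and every consecutive pair in it is actually an edge. This yields exactly the claim: nearest neighbours in $J_V$ are precisely the consecutive pairs of $J$ inside this block, and for those $V\cap M_j\cap M_{j'}\neq\emptyset$, while any non-adjacent pair of $J_V$ is non-adjacent in $J$, so $M_j\cap M_{j'}=\emptyset$ and a fortiori $V\cap M_j\cap M_{j'}=\emptyset$. With all hypotheses of Proposition \ref{thm:cover} verified for each component $V$, that proposition gives that $U$ is semi-globally hyperbolic. I expect this nerve/connectedness step to be the main obstacle, since everything else reduces to monotonicity of $I^\pm$ and to the stability of causal convexity and global hyperbolicity under intersection.
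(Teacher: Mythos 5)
Your proof is correct and takes essentially the same route as the paper, which disposes of the corollary in one sentence by observing that intersecting a time-ordered cover $\{M_j:j\in J\}$ of $M$ with $U$ yields a cover of each connected component of $U$ satisfying the properties of Proposition \ref{thm:cover}. Your nerve/connectedness argument rigorously justifies the nearest-neighbour condition --- the one nontrivial point, which the paper leaves entirely implicit --- but this is a verification of the same construction rather than a genuinely different approach.
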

	\noindent As these aforementioned covers play a crucial role in the rest of this paper, we ought to give them a name.
	\begin{definition}
		Let $M$ semi-globally hyperbolic with a cover $\{M_j:j\in J\}$ satisfying the properties of proposition \ref{thm:cover}, we say this is a \textup{time-ordered cover}.
	\end{definition}
	\noindent Given the examples in figure \ref{fig:semglobhyp}, we can draw in (some choice of) a time-ordered cover as can be seen in figure \ref{fig:semglobhypcover}.
	\begin{figure}[h]
		\centering
		\includegraphics{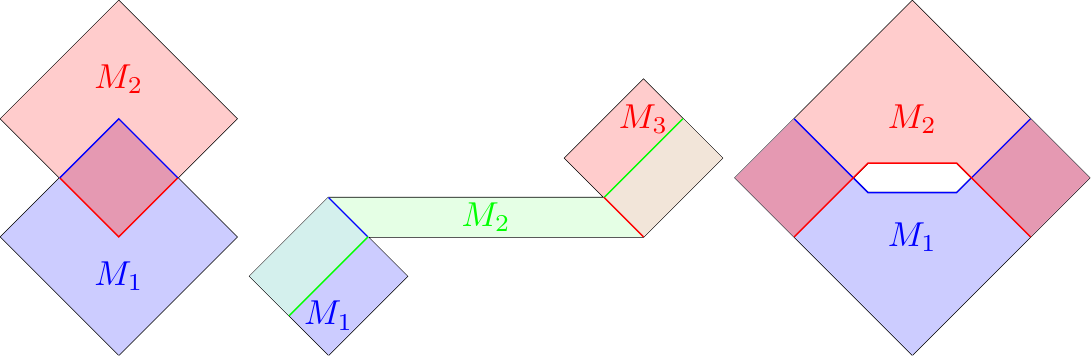}
		\caption{Some Penrose diagrams of two-dimensional semi-globally hyperbolic space-times with time-ordered covers drawn in}
		\label{fig:semglobhypcover}
	\end{figure}
	These time-ordered covers are in general not unique, but for a certain subclass of semi-globally hyperbolic space-times we can associate a preferred cover.
	\begin{definition}
		\label{def:max_cover}
		We say a space-time $M$ is \textup{maximally semi-globally hyperbolic} if it is semi-globally hyperbolic and there is a time-ordered cover $\{M_j:j\in J\}$ such that for each causally convex globally hyperbolic $U\subset M$ there is a $j\in J$ such that $U\subset D(M_j)$ and such that $M_j\subset D(M_{j'})$ implies $D(M_j)=D(M_{j'})$. We refer to the cover $\{D(M_j):j\in J\}$ as the \textup{maximal cover}.
	\end{definition}
	\begin{proposition}
		The maximal cover on a maximally semi-globally hyperbolic space-time is unique.
	\end{proposition}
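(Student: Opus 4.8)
The plan is to show that the collection $\{D(M_j):j\in J\}$ is independent of the chosen time-ordered cover satisfying Definition \ref{def:max_cover}. So I would begin by supposing $\{M_j\}_{j\in J}$ and $\{N_k\}_{k\in K}$ are two such covers, and prove that they determine the same family of domains of dependence, i.e. $\{D(M_j):j\in J\}=\{D(N_k):k\in K\}$ as collections of subsets of $M$. Two preparatory facts will be used repeatedly: first, for any open causally convex globally hyperbolic $U$ one has $U\subset D(U)$ directly from the definition of the domain of dependence; second, $D(U)$ is again causally convex and globally hyperbolic. The latter is exactly the kind of statement already invoked in the proof of Proposition \ref{thm:cover}, where $D(T^{-1}((j_n,j_{n+1})))$ was treated as a causally convex globally hyperbolic set, so I would regard it as available.

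The core of the argument is a short sandwiching. Fix $j_0\in J$ and set $V=D(M_{j_0})$, which is causally convex globally hyperbolic by the second fact. Applying the first defining property of Definition \ref{def:max_cover} (every causally convex globally hyperbolic set lies in some $D$ of the cover) to the cover $\{N_k\}$ yields a $k_0$ with $V\subset D(N_{k_0})$; applying the same property of the cover $\{M_j\}$ to the causally convex globally hyperbolic set $D(N_{k_0})$ yields a $j_1$ with $D(N_{k_0})\subset D(M_{j_1})$. Hence $D(M_{j_0})\subset D(N_{k_0})\subset D(M_{j_1})$, and in particular $M_{j_0}\subset D(M_{j_0})\subset D(M_{j_1})$. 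The second defining property of Definition \ref{def:max_cover} for the cover $\{M_j\}$ then forces $D(M_{j_0})=D(M_{j_1})$, so the whole chain collapses and $D(M_{j_0})=D(N_{k_0})$.

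This shows $\{D(M_j):j\in J\}\subseteq\{D(N_k):k\in K\}$, and the reverse inclusion follows by interchanging the roles of the two covers, giving the claimed equality and hence uniqueness. I expect the only real subtlety to be the two preparatory facts about domains of dependence — in particular that $D(U)$ inherits causal convexity and global hyperbolicity from $U$ — rather than the combinatorial sandwich itself, which is then almost immediate. It is also worth noting that the argument exhibits the intrinsic meaning of the maximal cover: the sets $D(M_j)$ are precisely the maximal elements of the family $\{D(U):U\subset M\text{ causally convex globally hyperbolic}\}$ ordered by inclusion, a characterization manifestly independent of any cover.
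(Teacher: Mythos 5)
Your proof is correct and takes essentially the same route as the paper's: both arguments sandwich a domain of dependence from one cover between domains of dependence from the other, i.e. $D(M_{j_0})\subset D(N_{k_0})\subset D(M_{j_1})$, and then collapse the chain using the second property of Definition \ref{def:max_cover}. The only cosmetic difference is that the paper applies the first property of the definition to the patches $M_j$ and $N_k$ themselves (using monotonicity and idempotence of $D$ implicitly), whereas you apply it directly to $D(M_{j_0})$ and $D(N_{k_0})$, which instead requires the preparatory fact, correctly flagged by you, that $D(U)$ is again causally convex and globally hyperbolic.
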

	\begin{proof}
		Suppose $\{D(M_j):j\in J\}$ and $\{D(M_k):k\in K\}$ both maximal covers, then for any $j\in J$ the space-time $M_j$ is causally convex globally hyperbolic and hence there is a $k\in K$ such that $M_j\subset D(N_k)$, similarly there is a $j'\in J$ such that $N_k\subset D(M_{j'})$, which implies $D(M_{j})=D(M_{j'})$, from which is also follows that $D(N_k)=D(M_j)$. Hence for each $j\in J$ there is a $k\in K$ such that $D(M_j)=D(N_k)$ and vice versa.
	\end{proof}

	\noindent We have now characterized the class of space-times on which we shall attempt to define linear scalar quantum field theories. It is worth mentioning again that one subclass of these space-times that we will treat in section \ref{sec:bhe} are the (spherically symmetric) black hole evaporation space-times. These space-times are characterized by their (symmetry reduced) Penrose diagram, most famously drawn in \cite{hawkingParticleCreationBlack1975}. As can be seen in figure \ref{fig:bhe}, these space-times are in fact maximally semi-globally hyperbolic.
	\begin{figure}[h]
		\begin{center}
			\includegraphics{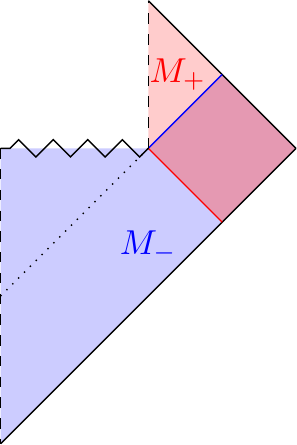}
		\end{center}
		\caption{Hypothesised Penrose diagram of a fully evaporating spherically symmetric black hole with its maximal cover}
		\label{fig:bhe}
	\end{figure}

	\section{Linear scalar quantum fields on semi-globally hyperbolic space-times}
	\label{sec:qft}
	For our purposes, (real) linear scalar quantum field theory is a quantization of a classical field satisfying the equation of motion
	\begin{equation}
		\label{eq:KG}
		(\Box-V)\phi=0.
	\end{equation}
	Here $\Box=\nabla^\mu\nabla_\mu$ with $\nabla$ the Levi-Civita connection on a given background space-time and $V$ some potential.\footnote{We take the metric to have $(-+++)$ signature.} If we want this potential $V$ to be locally covariant, it should take the form $V=m+\xi R$, with $m$ the mass of the field and $\xi$ a dimensionless coupling of the field to the Ricci scalar $R$.
	For the purposes of this paper we consider $m$ and $\xi$ to be fixed. The construction of the linear scalar quantum field theory on globally hyperbolic space-times, including some important features of this theory, are reviewed in appendix \ref{sec:qftapp}.

	\subsection{Algebras from states}
	\label{sec:constr}
	Based on the qualities of the real linear scalar quantum field on globally hyperbolic space-times, we set out some list of requirements that we want our quantum field theories on semi-globally hyperbolic space-times to satisfy. It should be noted that we do not assume that every semi-globally hyperbolic space-time $M$ admits a sensible quantum field theory. Luckily, we can take a page from \cite{kayQuantumFieldsCurved1997} to get us started on some very minimal requirements for a free scalar quantum field theory on any space-time. So for $M$ a non-globally hyperbolic space-time, we require that any real scalar quantum field theory (i.e. some unital *-algebra $\mathcal{A}(M)$) has the following properties.
	\begin{itemize}
		\item \textit{Net structure}: There is some set $\mathcal{O}$ of open neighbourhoods in $M$ containing at least all relatively compact subspaces, such that for each $U\in \mathcal{O}$ there is an algebra $$\mathcal{A}(M;U)\subset\mathcal{A}(M),$$ that in itself defines a quantum field theory on $U$, and where for any $U,V\in\mathcal{O}$ with $U\subset V$ we have $$\mathcal{A}(M;U)\subset\mathcal{A}(M;V).$$
		\item \textit{F-locality}: For each $x\in U\in\mathcal{O}$ there is a (a priori not necessarily causally convex) globally hyperbolic $N\in \mathcal{O}$ with $x\in N\subset U$ such that $$\mathcal{A}(M;N)\cong\mathcal{A}(N),$$
		via a net-preserving *-isomorphism, with $\mathcal{A}(N)$ the algebra of definition \ref{def:linscal}.
	\end{itemize}
	The net structure is essential to localize observables of a quantum field theory to regions in a space-time. F-locality on the other hand tells us that in some arbitrarily small globally hyperbolic neighbourhood our quantum field theory matches the standard theory on globally hyperbolic space-times (for some given potential $V$). This assumption can be understood from the fact that locally we are not able to discern the global causal structure of a space-time, so one would expect that at this local level one can also not distinguish between a quantum field theory on globally hyperbolic and non-globally hyperbolic space-times, which is fully in keeping with the traditions of local quantum field theory. That being said, one can argue that on the basis of this logic, the theory should match that of definition \ref{def:linscal} on any globally hyperbolic region, and not just on some arbitrarily small one. However in the context that F-locality was first discussed, namely in the presence of closed time-like curves, such a requirement is too restrictive to even begin to construct a quantum field theory \cite{kayPrincipleLocalityQuantum1992}. Nevertheless, on our semi-globally hyperbolic space-times we can often do a bit better, especially on maximally semi-globally hyperbolic space-times. To keep some further bounds on what kind of theories one would deem acceptable, we also introduce an additional requirement.
	\begin{itemize}
		\item The algebra $\mathcal{A}(M)$ can be generated by elements of the form $\hat\phi(f)$ for $f\in\mathcal{D}(M)$ such that for $f,g\in \mathcal{D}(M)$, the commutator $[\hat\phi(f),\hat\phi(g)]\in\mathcal{A}(M)$ is in the center of the algebra.
	\end{itemize}
	In principle one could allow theories with more involved commutation relations, but we deem this a natural generalization of the fact that in the globally hyperbolic case these commutators are always a multiple of the identity. At the very least this property requires that the commutation relation of our algebra is always specified to some degree, unlike for instance in the approach set out in \cite{beniniAlgebraicQuantumField2018}. As an upshot of this restriction, one can still sensibly define quasi-free states on such algebras.\\

	\noindent Let us capture all the required features in the following definition for linear scalar fields on semi-globally hyperbolic space-times.
	\begin{definition}
		\label{def:scalfield}
		Given a semi-globally hyperbolic space-time $M$ with time-ordered cover $\{M_i\subset M\}_{i\in I}$, we say a *-algebra $\mathcal{A}(M)$ is a \textup{scalar field algebra} on $M$ if there is a surjective *-homomorphism $\varphi:\mathcal{B}(M)\rightarrow\mathcal{A}(M)$,
		defining a net structure $$\mathcal{A}(M;U)=\varphi(\mathcal{B}(M;U))$$
		such that for each $i\in I$ there is a *-isomorphism $\iota_i:\mathcal{A}(M_i)\rightarrow\mathcal{A}(M;M_i)$, such that given the natural $\iota'_i:\mathcal{B}(M_i)\rightarrow\mathcal{B}(M)$ and surjective *-homomorphism $\varphi_i:\mathcal{B}(M_i)\rightarrow\mathcal{A}(M_i)$ (via the quotient of definition \ref{def:linscal}), the following diagram commutes
		\begin{center}
			\includegraphics{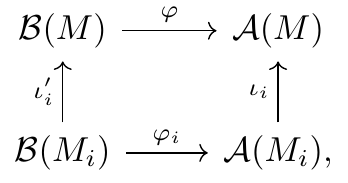}
		\end{center}
and where, denoting for $f\in\mathcal{D}(M)$ $$\hat\phi(f)=\varphi(\hat\psi(f)),$$
we require that for all $f,g,h\in\mathcal{D}(M)$
$$[[\hat\phi(f),\hat\phi(g)],\hat\phi(h)]=0.$$
\end{definition}

\noindent Note that the fact that commutators $[\hat\phi(f),\hat\phi(g)]$ are part of the center of the algebra also allow us to construct a Weyl-like C*-algebra $\mathfrak{A}(M)$ associated with the *-algebra $\mathcal{A}(M)$. Namely, as the group C*-algebra of $Z_{s.a.}(\mathcal{A}(M))\times\hat\phi(\mathcal{D}(M))$ with $Z_{s.a.}(\mathcal{A}(M))$ the self-adjoint elements of the center of the algebra and the group operation defined by $$\left(a,\hat\phi(f)\right)\cdot\left(b,\hat\phi(g)\right)=\left(a+b+\frac{i}{2}[\hat\phi(f),\hat\phi(g)],\hat\phi(f+g)\right).$$ If for $N\subset M$ one defines $\mathfrak{A}(M;N)\subset \mathfrak{A}(M)$ as the smallest C*-subalgebra containing all elements $\{0\}\times\hat\phi(\mathcal{D}(N))\subset\mathfrak{A}(M)$, then we see that for each $i\in I$ $$\mathfrak{A}(M;M_i)\cong\mathfrak{A}(M_i),$$
via a net-preserving isomorphism, with $\mathfrak{A}(M_i)$ the standard Weyl-algebra on the globally hyperbolic space-time $M_i$. We conclude that from a scalar field algebra $\mathcal{A}(M)$ we can construct an F-local net of C*-algebras on $M$.\\

\noindent As alluded to earlier, we have in fact imposed a stronger notion of F-locality on these quantum field theories than what is usually considered. One comment we should make here is that this definition depends on the choice of time-ordered cover (or via proposition \ref{thm:cover} a choice of semi-Cauchy time-function). It is a priori not guaranteed that a scalar field algebra with respect to some time-ordered cover will also be a scalar field algebra with respect to any other time-ordered cover. In the case of maximally semi-globally hyperbolic space-times this situation is less dire, as here one most naturally defines linear scalar algebras with respect to the maximal cover. This then guarantees that such an algebra is a linear scalar algebra with respect to any cover. Nevertheless, with our mind set on some applications in section \ref{sec:nonmax}, we don't want to restrict our definition to just maximally semi-globally hyperbolic space-times.

Having a *-algebra with a net structure is a good start, but of course one also is interested in what kind of states such an algebra admits. A particularly central question is whether a linear scalar algebra actually admits states in the first place. While the C*-algebra $\mathfrak{A}(M)$ associated with a scalar field algebra $\mathcal{A}(M)$ will always admit states (see \cite[Lemma 2.3.23]{bratteliAlgebrasNeumannAlgebras1987}), this does not mean that these states correspond to states on $\mathcal{A}(M)$, as $n$-point function of an arbitrary state on $\mathfrak{A}(M)$ are generally not defined. Since $n$-point functions are crucial objects in many uses of quantum field theory, especially those satisfying the Hadamard property (see \cite{fewsterArtState2018}), we view existence of $n$-point functions as imperative for any reasonable scalar field algebra. This requirement is the starting point of the construction in the rest of this section, where we particularly focus on 2-point functions, quasi-free states and one-particle structures, existence of which are of course equivalent to the existence of $n$-point functions for all $n$.

\subsubsection{A gluing procedure of one-particle structures}
As mentioned above, we shall be constructing quantum field theories on semi-globally hyperbolic space-times from (extended) one particle structures.
\begin{definition}
\label{def:ext_ops}
For a semi-globally hyperbolic space-time $M$ with time-ordered cover $\{M_j:j\in J\}$ we say $(K,H)$ with $H$ a Hilbert space and $K:\mathcal{D}(M)\rightarrow H$ a real linear map is an \textup{extended one-particle structure} if
$$\overline{K(\mathcal{D}(M))+iK(\mathcal{D}(M))}=H,$$
and for each $j\in J$ the pair $(K_j,H_j)$ with
$$K_j=K\restriction_{\mathcal{D}(M_j)},$$
and
$$H_j=\overline{K(\mathcal{D}(M_j))+iK(\mathcal{D}(M_j))},$$
form a one-particle structure on $(\mathcal{D}(M_j),\sigma_{M_j})$.
\end{definition}
\noindent Again here the choice of time-ordered cover influences what one would call an extended one-particle structure, but in any case we have the following.
\begin{proposition}
Given the assumptions of the definition above, an extended one-particle structure $(K,H)$ satisfies $$K\circ(\Box-V)=0.$$
\end{proposition}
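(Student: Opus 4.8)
\section*{Proof proposal}

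The plan is to reduce the global identity to the one-particle structures $K_j$ on the individual globally hyperbolic patches $M_j$, on which the corresponding statement is already available from the standard theory, and then to patch the pieces together with a partition of unity. Concretely, I would first recall from the globally hyperbolic construction (the one-particle structures of Definition \ref{def:linscal} and the appendix) that each $(K_j,H_j)$, being a one-particle structure on the symplectic space $(\mathcal{D}(M_j),\sigma_{M_j})$, annihilates the image of the equation of motion, i.e. $K_j((\Box-V)w)=0$ for every $w\in\mathcal{D}(M_j)$. The reason is that $\sigma_{M_j}$ is built from the causal propagator $E_j$ and degenerates on $(\Box-V)\mathcal{D}(M_j)$, so that a one-particle structure factors through $\mathcal{D}(M_j)/(\Box-V)\mathcal{D}(M_j)$, the quotient on which $\sigma_{M_j}$ is non-degenerate.

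The obstacle that remains is that a general $f\in\mathcal{D}(M)$ need not be supported in a single cover element, so the patchwise identity cannot be applied directly. To get around this I would choose a smooth partition of unity $\{\chi_j\}_{j\in J}$ subordinate to the cover $\{M_j\}_{j\in J}$, which exists because $M$ is paracompact and, the cover being a time-ordered (hence locally finite) cover, only finitely many $\chi_j$ are nonzero on the compact set $\supp f$. Then $\chi_j f$ has compact support contained in $M_j$, so $\chi_j f\in\mathcal{D}(M_j)$, and since $\Box-V$ is a differential (hence local, support-non-increasing) operator, $(\Box-V)(\chi_j f)\in\mathcal{D}(M_j)$ as well.

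It then remains to combine these observations. Writing $f=\sum_j\chi_j f$ as a finite sum and using linearity of $\Box-V$ together with real-linearity of $K$, I would compute
\begin{equation}
  K((\Box-V)f)=K\Big(\sum_j(\Box-V)(\chi_j f)\Big)=\sum_j K_j\big((\Box-V)(\chi_j f)\big)=0,
\end{equation}
where the middle equality uses $K\restriction_{\mathcal{D}(M_j)}=K_j$ together with $(\Box-V)(\chi_j f)\in\mathcal{D}(M_j)$, and the last equality is the patchwise identity recalled above. Since $f\in\mathcal{D}(M)$ was arbitrary, this yields $K\circ(\Box-V)=0$.

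The hard part will be the patchwise input $K_j\circ(\Box-V)=0$: it is not forced by the bare symplectic data, since the one-particle bound contains no information when its argument lies in the radical of $\sigma_{M_j}$, but rather by the convention, inherited from the globally hyperbolic theory, that a one-particle structure is a structure on the quotient by $(\Box-V)\mathcal{D}(M_j)$. Verifying that this is exactly what Definition \ref{def:ext_ops} imports on each patch is the crux; once it is in hand, the partition-of-unity gluing is routine.
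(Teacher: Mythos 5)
Your proposal is correct and is essentially the paper's own proof: the paper likewise writes $f=\sum_{j\in J'}f_j$ with $f_j\in\mathcal{D}(M_j)$ (a finite, partition-of-unity decomposition) and concludes $K((\Box-V)f)=\sum_j K_j((\Box-V)f_j)=0$. The ``crux'' you flag at the end is in fact immediate, since the paper's definition of a one-particle structure on $(\mathcal{D}(M_j),\sigma_{M_j})$ explicitly includes the condition $K_j\circ(\Box-V)=0$, so no further verification is needed.
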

\begin{proof}
Note that given an open cover $\{M_j:j\in J\}$ and $f\in\mathcal{D}(M)$, we can always find a finite $J'\subset J$ with for each $j\in J'$ an $f_j\in \mathcal{D}(M_j)$ such that $f=\sum_{j\in J'}f_j$. This means
$$K((\Box-V)f)=\sum_{j\in J'}K((\Box-V)f_j)=\sum_{j\in J'}K_j((\Box-V)f_j)=\sum_{j\in J'}0=0.$$
\end{proof}
\noindent For each extended one-particle structure one has a (real) pre-inner product $\mu$ and pre-symplectic structure $\sigma$ on $\mathcal{D}(M)$ via
$$\mu(f,g)=Re(\langle Kf,Kg\rangle),$$
$$\sigma(f,g)=2Im(\langle Kf,Kg\rangle).$$
Here we note that for any $j\in J$
$$\sigma\restriction_{\mathcal{D}(M_j)^2}=\sigma_{M_j},$$
with $\sigma_{M_j}$ the natural symplectic structure as defined in theorem \ref{thm:causprop}.\\
\newline
In general these one-particle structures will not be `god-given' and one has to construct them. One way to do this is via compatible locally defined one-particle structures.	The main result of this section is therefore as follows.
\begin{theorem}
A semi-globally hyperbolic space-time $M$ with time-ordered cover $\{M_j:j\in J\}$ admits an extended one-particle structure if and only if there is a chain of one-particle structures $\{(K_j,H_j):j\in J\}$, one for each globally hyperbolic patch of the cover, such that for each $j<j'$ nearest neighbours we have for $f,g\in\mathcal{D}(M_j\cup M_j')$ that
$$\langle K_jf,K_jg\rangle_j=\langle K_{j'}f,K_{j'}g\rangle_{j'}.$$
\end{theorem}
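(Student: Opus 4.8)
The statement is an equivalence, so the plan is to prove the two implications separately, but both hinge on one preliminary geometric observation: for nearest neighbours $j<j'$ the union $M_j\cup M_{j'}$ is globally hyperbolic and the overlap $M_j\cap M_{j'}$ contains a Cauchy surface $\Sigma$ of this union, with $\Sigma\subset M_j$ and $\Sigma\subset M_{j'}$. This should follow from the time-ordered cover axioms: the conditions $I^\pm(M_j\cap M_{j'})\cap(M_j\cup M_{j'})\subset M_{j'}$ (respectively $\subset M_j$) force every inextendible causal curve of $M_j\cup M_{j'}$ to traverse the overlap, where it meets $\Sigma$ exactly once. The purpose of this is that the usual time-slice (Cauchy evolution) property then lets me regard each $K_j$ as defined on all of $\mathcal{D}(M_j\cup M_{j'})$: a one-particle structure on a globally hyperbolic space-time annihilates $(\Box-V)\mathcal{D}(M_j)$ and hence factors through the causal propagator, and since any $f\in\mathcal{D}(M_j\cup M_{j'})$ is equivalent modulo $(\Box-V)\mathcal{D}(M_j\cup M_{j'})$ to some $\tilde f\in\mathcal{D}(M_j)$ supported near $\Sigma$, I may set $K_jf:=K_j\tilde f$. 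This is exactly the interpretation under which the compatibility identity in the statement typechecks, and fixing it is the conceptual crux of the result.

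For the forward implication, suppose $(K,H)$ is an extended one-particle structure. By definition each $(K_j,H_j)$ is already a one-particle structure on $(\mathcal{D}(M_j),\sigma_{M_j})$, and the inner products $\langle\cdot,\cdot\rangle_j$ are the restrictions to $H_j\subset H$ of the inner product of $H$. I would then show that, for $f\in\mathcal{D}(M_j\cup M_{j'})$, the time-slice extension $K_jf$ coincides with the global vector $Kf$: writing $f-\tilde f=(\Box-V)h$ with $h\in\mathcal{D}(M_j\cup M_{j'})\subset\mathcal{D}(M)$ and invoking the earlier proposition that $K\circ(\Box-V)=0$, one gets $Kf=K\tilde f=K_j\tilde f=K_jf$, and likewise $K_{j'}f=Kf$. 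Hence $\langle K_jf,K_jg\rangle_j=\langle Kf,Kg\rangle=\langle K_{j'}f,K_{j'}g\rangle_{j'}$, which is the desired chain condition.

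For the converse I would glue. Given the compatible chain, the assignment $K_jf\mapsto K_{j'}f$ for $f\in\mathcal{D}(M_j\cup M_{j'})$ is inner-product preserving by hypothesis, hence well-defined and isometric on the subspace it spans; because $\mathcal{D}(M_j),\mathcal{D}(M_{j'})\subset\mathcal{D}(M_j\cup M_{j'})$ its domain is dense in $H_j$ and its range dense in $H_{j'}$, so it extends to a unitary $V_{j,j'}:H_j\to H_{j'}$. Identifying $J\cong\mathbb{Z}$, the $V_{j,j'}$ make $\{H_j\}$ a directed system of unitaries whose inductive limit is a single Hilbert space $H$ with unitaries $W_j:H_j\to H$ satisfying $W_{j'}V_{j,j'}=W_j$. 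I would then define $K:\mathcal{D}(M)\to H$ by choosing, for $f\in\mathcal{D}(M)$, a finite partition $f=\sum_j f_j$ with $f_j\in\mathcal{D}(M_j)$ (possible since the cover is locally finite) and setting $Kf=\sum_j W_jK_jf_j$.

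The main work is to show $K$ is independent of the partition, and this is where the telescoping structure of the cover does the heavy lifting. If $\sum_j h_j=0$ with $h_j\in\mathcal{D}(M_j)$, I set $g_n=\sum_{m\le n}h_m=-\sum_{m>n}h_m$; since only nearest neighbours overlap, $g_n$ is supported in $M_n\cap M_{n+1}$, and $h_n=g_n-g_{n-1}$. Using $W_{n+1}K_{n+1}g_n=W_{n+1}V_{n,n+1}K_ng_n=W_nK_ng_n$ (valid because $g_n$ lives in the overlap), the sum $\sum_n W_nK_nh_n$ telescopes to zero, giving well-definedness; real-linearity of $K$ is then immediate. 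Finally I would verify the two conditions of Definition \ref{def:ext_ops}: for $f\in\mathcal{D}(M_j)$ the partition may be taken trivial, so $K\restriction_{\mathcal{D}(M_j)}=W_j\circ K_j$ and $\overline{K(\mathcal{D}(M_j))+iK(\mathcal{D}(M_j))}=W_j(H_j)$, which is unitarily the given $(K_j,H_j)$; and since each $W_j$ is onto $H$, this space is all of $H$, yielding at once the global density $\overline{K(\mathcal{D}(M))+iK(\mathcal{D}(M))}=H$. I expect the only genuinely delicate point to be the geometric lemma underpinning the time-slice extension; once that is in hand, the analytic gluing is routine.
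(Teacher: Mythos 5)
Your proof pivots on a preliminary geometric claim that is false for time-ordered covers: that $M_j\cup M_{j'}$ is globally hyperbolic and that every inextendible causal curve in the union must traverse the overlap $M_j\cap M_{j'}$. The simplest semi-globally hyperbolic space-times already violate this. Take the punctured space-time $M=N\setminus\{p\}$ of section \ref{sec:punctured} with its maximal cover $M_-=N\setminus\overline{I^+(p)}$, $M_+=N\setminus\overline{I^-(p)}$: the union $M_-\cup M_+$ is all of $M$, which is not globally hyperbolic, and a causal curve running into the puncture stays inside $\overline{I^-(p)}\setminus\{p\}$ and never meets $M_-\cap M_+$. Consequently the ``time-slice extension'' of $K_j$ to $\mathcal{D}(M_j\cup M_{j'})$ does not exist in general (the compatibility condition is to be read on $\mathcal{D}(M_j\cap M_{j'})$, as in the definition of a compatible pair, definition \ref{def:glued_ops}), and, more damagingly, the unitaries $V_{j,j'}:H_j\rightarrow H_{j'}$ on which your entire converse direction rests need not exist. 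On the intersection, the assignment $K_jf\mapsto K_{j'}f$ is only defined on the overlap subspace $O_j=\overline{K_j(\mathcal{D}(M_j\cap M_{j'}))+iK_j(\mathcal{D}(M_j\cap M_{j'}))}$, which can be a \emph{proper} closed subspace of $H_j$, so it only yields a partial isometry between overlap subspaces, not a unitary between the full spaces. The compatible pair $(K_-,H_-)$, $(K_+',H_+')$ constructed in section \ref{sec:punctured} (where $K_+'$ appends to $K_+$ the value at the puncture of the propagated solution) is an explicit counterexample: there the overlap data is dense in $H_-$ but not in $H_+'$, so any unitary intertwining the overlap data would map a dense subspace onto a non-dense one, a contradiction; yet the chain is compatible and the theorem's conclusion holds.

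This is precisely why the paper does not glue along unitaries of the full one-particle spaces. Its minimal gluing (definition \ref{def:glued_ops}) identifies only the overlap subspaces via the partial unitary $U:O_1\rightarrow O_2$ and takes the quotient of $H_1\oplus H_2$ by the null space of the pre-inner product built from $U$, $p_1$, $p_2$; the resulting space is isomorphic to $p_1^\perp H_1\oplus O_2\oplus p_2^\perp H_2$, in general strictly larger than either $H_j$, and the chain case is then handled by induction over finite intervals and a direct limit (definition \ref{def:gluedchain}), much as in your inductive-limit step. What survives of your argument: the forward direction is fine once the condition is read on intersections (restrictions of a global inner product trivially agree there), and your telescoping verification of well-definedness of $K$ on partitions is a clean way to organise that check. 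But as written, your converse proves the theorem only under the additional hypothesis that each $(K_j,H_j)$ is determined by its overlap data (for instance when the associated quasi-free states have the Reeh--Schlieder property with respect to the overlaps), which the theorem does not assume.
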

\noindent We prove this statement by explicitly constructing such an extended one-particle structure from a given chain via a procedure that we will refer to as a \textit{minimal gluing}.\\
\newline
To describe the minimal gluing of one-particle structure on patches of a general semi-globally hyperbolic space-time, we first treat the simpler case of a space-time that is just the union of two globally hyperbolic space-times.
\begin{definition}
\label{def:glued_ops}
Let $M$ a space-time and $M_1,M_2\subset M$ causally convex globally hyperbolic such that they form an open cover of $M$. Suppose for each $j\in\{1,2\}$ we have a one-particle structure $(K_j,H_j)$ for the pre-symplectic space $(D(M_j),\sigma_j)$ such that for $f,g\in\mathcal{D}(M_1\cap M_2)$
$$\langle K_1f,K_1g\rangle_1=\langle K_2f,K_2g\rangle_2,$$
we say these one-particle structures form a \textup{compatible pair}.

Now let $p_j$ the Hilbert space projection of $H_j$ onto $$O_j=\overline{K_j(\mathcal{D}(M_1\cap M_2))+iK_j(\mathcal{D}(M_1\cap M_2))},$$ and $U:O_1\rightarrow O_2$ the unitary such that $UK_1f=K_2f$ for each $f\in\mathcal{D}(M_1\cap M_2)$. We define the \textup{minimally glued one-particle structure} $(\tilde{K},\tilde{H})=(K_1,H_1)\Vert(K_2,H_2)$ with
$$\tilde H=H_1\oplus H_2/\{(\psi_1,\psi_2):\langle\psi_1,\psi_1\rangle_1+\langle\psi_2,\psi_2\rangle_2+\langle\psi_1,U^*p_2\psi_2\rangle_1+\langle\psi_2,Up_1\psi_1\rangle_2=0\},$$
where we define the inner product on $\langle.,.\rangle$ on $\tilde{H}$ as
$$\langle[(\psi_1,\psi_2)],[(\phi_1,\phi_2)]\rangle=\langle\psi_1,\phi_1\rangle_1+\langle\psi_2,\phi_2\rangle_2+\langle\psi_1,U^*p_2\phi_2\rangle_1+\langle\psi_2,Up_1\phi_1\rangle_2,$$
and with
$\tilde{K}:\mathcal{D}(M)\rightarrow \tilde{H}$ defined by $$\tilde K(f_1+f_2)=[(K_1f_1,K_2f_2)],$$
for $f_i\in\mathcal{D}(M_i)$.
\end{definition}
\noindent In the following lemma we prove that the definitions above are well-defined
\begin{lemma}
In the definition above $\langle.,.\rangle$ is a well-defined inner product on the Hilbert space $\tilde{H}$ and $\tilde{K}:\mathcal{D}(M)\rightarrow \tilde{H}$ is a well-defined real linear map.
\end{lemma}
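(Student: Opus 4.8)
The plan is to read the quotient in Definition~\ref{def:glued_ops} as the passage from a positive semi-definite Hermitian form to a genuine inner product. Write $v=(\psi_1,\psi_2)$, $w=(\phi_1,\phi_2)\in H_1\oplus H_2$ and set
\[
B(v,w)=\langle\psi_1,\phi_1\rangle_1+\langle\psi_2,\phi_2\rangle_2+\langle\psi_1,U^*p_2\phi_2\rangle_1+\langle\psi_2,Up_1\phi_1\rangle_2,
\]
so that the subset quotiented out is exactly $N:=\{v:B(v,v)=0\}$. I would first record the single identity that does all the bookkeeping: since $U:O_1\to O_2$ is unitary and $p_j$ is the orthogonal projection onto $O_j$, one has $\langle U^*p_2\psi_2,\phi_1\rangle_1=\langle\psi_2,Up_1\phi_1\rangle_2$ for all arguments, obtained by inserting $p_1$ (legitimate since $U^*p_2\psi_2\in O_1$) and then using $\langle U^*x,y\rangle_1=\langle x,Uy\rangle_2$ with $x\in O_2$, $y\in O_1$. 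From this identity $B$ is Hermitian, $B(v,w)=\overline{B(w,v)}$.

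Next I would prove positivity, which is the crux. Using the identity above with $w=v$, the two cross terms are complex conjugates, so $B(v,v)=\|\psi_1\|_1^2+\|\psi_2\|_2^2+2\,\mathrm{Re}\langle\psi_2,Up_1\psi_1\rangle_2$. Since $Up_1\psi_1\in O_2$ I may insert $p_2$ for free and estimate by Cauchy--Schwarz together with $\|Up_1\psi_1\|_2=\|p_1\psi_1\|_1\le\|\psi_1\|_1$, giving $|\langle\psi_2,Up_1\psi_1\rangle_2|\le\|\psi_1\|_1\|\psi_2\|_2$ and hence $B(v,v)\ge(\|\psi_1\|_1-\|\psi_2\|_2)^2\ge 0$. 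Thus $B$ is a positive semi-definite Hermitian form, and the Cauchy--Schwarz inequality valid for such forms shows that $B(v,v)=0$ forces $B(v,\cdot)=0$; that is, $N$ is the radical of $B$, in particular a linear subspace. Consequently $B$ descends to a well-defined, positive-definite inner product on $(H_1\oplus H_2)/N$ that is independent of the choice of representatives (changing $v$ by an element of the radical does not alter $B(v,w)$), and $\tilde H$ is the completion of this quotient.

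It remains to show $\tilde K$ does not depend on the chosen decomposition $f=f_1+f_2$. If $f_1+f_2=f_1'+f_2'$ with $f_i,f_i'\in\mathcal{D}(M_i)$, then $h:=f_1-f_1'=f_2'-f_2$ has support in $M_1\cap M_2$, so $h\in\mathcal{D}(M_1\cap M_2)$, and by real-linearity of $K_1,K_2$ the two candidate values differ by $(K_1h,-K_2h)$. Here $K_1h\in O_1$, $K_2h\in O_2$ and $UK_1h=K_2h$, whence $Up_1K_1h=K_2h$ and $U^*p_2K_2h=K_1h$; substituting these into $B((K_1h,-K_2h),(K_1h,-K_2h))$ makes the diagonal and cross terms cancel in pairs, so the difference lies in $N$ and $\tilde K$ is well-defined. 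Real-linearity of $\tilde K$ then follows immediately from that of $K_1,K_2$ and the linearity of the quotient map. The main obstacle is the positivity estimate: one must ensure that the off-diagonal terms, which couple the two Hilbert spaces only through the overlap subspaces $O_j$, cannot overwhelm the diagonal. The projections $p_j$ and the unitary $U$ are precisely what keep the cross terms bounded by $\|\psi_1\|_1\|\psi_2\|_2$, and the Hermitian-symmetry identity is what must be checked carefully first, since positivity, the radical computation, and the cancellation for $\tilde K$ all rest on it.
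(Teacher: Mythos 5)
Your treatment of Hermiticity, positive semi-definiteness, the radical, and the well-definedness of $\tilde K$ is correct, and your cancellation argument for $\tilde K$ is essentially the paper's (you should also record, as the paper does, that a decomposition $f=f_1+f_2$ with $f_i\in\mathcal{D}(M_i)$ exists in the first place, e.g.\ via a partition of unity). The genuine gap is at the sentence ``$\tilde H$ is the completion of this quotient.'' Definition \ref{def:glued_ops} defines $\tilde H$ as the quotient $H_1\oplus H_2/N$ itself, not its completion, and the lemma asserts that this quotient with the induced inner product is a Hilbert space; completeness is therefore part of what must be proved, and it is not automatic. For a bounded positive semi-definite Hermitian form on a Hilbert space, the quotient by the radical need not be complete in the induced norm: take $B(x,y)=\sum_n n^{-2}x_n\overline{y_n}$ on $\ell^2$, whose radical is $\{0\}$ but whose induced norm is incomplete. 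Your estimate $B(v,v)\ge\left(\Vert\psi_1\Vert_1-\Vert\psi_2\Vert_2\right)^2$ cannot close this gap, since it gives no control on representatives modulo $N$; silently passing to the completion changes the object being constructed, and the rest of the paper (the density corollary, the iterated gluing of Definition \ref{def:gluedchain}) uses $\tilde H$ as defined.

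The paper closes this gap by rewriting the form as a sum of squares,
\begin{align*}
B\bigl((\psi_1,\psi_2),(\phi_1,\phi_2)\bigr)=\langle p_1^\bot\psi_1,p_1^\bot\phi_1\rangle_1+\langle p_2^\bot\psi_2,p_2^\bot\phi_2\rangle_2+\langle Up_1\psi_1+p_2\psi_2,Up_1\phi_1+p_2\phi_2\rangle_2,
\end{align*}
which does double duty: it makes positive semi-definiteness immediate (no Cauchy--Schwarz needed), and it shows that $(\psi_1,\psi_2)\mapsto(p_1^\bot\psi_1,\,p_2^\bot\psi_2,\,Up_1\psi_1+p_2\psi_2)$ is a surjection onto the complete space $p_1^\bot H_1\oplus p_2^\bot H_2\oplus O_2$ with kernel exactly $N$ and with $B$ as the pulled-back inner product, so the quotient is isometrically isomorphic to a Hilbert space. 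The paper phrases this as a direct Cauchy-sequence argument: given a Cauchy sequence $[(\psi_{1,n},\psi_{2,n})]$, the three components $p_1^\bot\psi_{1,n}$, $p_2^\bot\psi_{2,n}$ and $Up_1\psi_{1,n}+p_2\psi_{2,n}$ are Cauchy in complete spaces, and their limits are reassembled into a representative of the limit class. To repair your proof, establish this identity (it follows from the same insertion and unitarity manipulations you already use for Hermiticity) and then run the completeness argument; everything else in your write-up can stand as is.
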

\begin{proof}
Starting with the claims on the inner product, we first note that for $(\psi_1,\psi_2),(\phi_1,\phi_2)\in H_1\oplus H_2$ the map $$((\psi_1,\psi_2),(\phi_1,\phi_2))\mapsto \langle\psi_1,\phi_1\rangle_1+\langle\psi_2,\phi_2\rangle_2+\langle\psi_1,U^*p_2\phi_2\rangle_1+\langle\psi_2,Up_1\phi_1\rangle_2$$ is a positive semi-definite sesquilinear map (i.e. it is a pre-inner product). This follows immediately from the fact that we can rewrite
\begin{align*}
&\langle\psi_1,\phi_1\rangle_1+\langle\psi_2,\phi_2\rangle_2+\langle\psi_1,U^*p_2\phi_2\rangle_1+\langle\psi_2,Up_1\phi_1\rangle_2=\\&\langle p_1^\bot\psi_1,p_1^\bot\phi_1\rangle_1+\langle p_2^\bot\psi_2,p_2^\bot\phi_2\rangle_2+\langle Up_1\psi_1+p_2\psi_2,Up_1\phi_1+p_2\phi_2\rangle_2.
\end{align*}
We can now recognize that $\tilde{H}$ is the semi-norm reduction of $H_1\oplus H_2$ with respect to the pre-inner product above, hence $\langle.,.\rangle$ is well-defined.

For $\tilde{H}$ to form a Hilbert space with the inner product $\langle.,.\rangle$, we need to show it is complete. Suppose $[(\psi_{1,n},\psi_{2,n})]$ to form a Cauchy sequence in $\tilde{H}$. Using the rewritten form of the inner product, we can see that $p_1^\bot\psi_{1,n}$ is Cauchy in $p_1^\bot H_1$, $p_2^\bot\psi_{2,n}$ in $p_2^\bot H_2$ and $Up_1\psi_1+p_2\psi_2$ in $O_2$. Therefore we can define $\psi_1=\lim_{n\rightarrow\infty}p_1^\bot\psi_{1,n}\in p_1^\bot H_1$, $\psi_2=\lim_{n\rightarrow\infty}p_2^\bot\psi_{2,n}\in p_2^\bot H_2$ and $\psi_O=\lim_{n\rightarrow\infty}Up_1\psi_1+p_2\psi_2\in O_2$, and define $\Psi=[(\psi_1,\psi_O+\psi_2)]$. Using that $p_i\psi_i=0$, we now see that
\begin{align*}
\lim_{n\rightarrow \infty}&\Vert[(\psi_{1,n},\psi_{2,n})]-[(\psi_1,\psi_O+\psi_2)]\Vert=\\\lim_{n\rightarrow \infty}&\Vert p_1^\bot(\psi_{1,n}-\psi_1)\Vert_1+\Vert p_2^\bot(\psi_{2,n}-\psi_O-\psi_2)\Vert_2\\&+\Vert Up_1(\psi_{1,n}-\psi_1)+p_2(\psi_{2,n}-\psi_O-\psi_2)\Vert_2=\\
\lim_{n\rightarrow \infty}&\Vert p_1^\bot\psi_{1,n}-\psi_1\Vert_1+\Vert p_2^\bot\psi_{2,n}-\psi_2\Vert_2+\Vert Up_1\psi_{1,n}+p_2\psi_{2,n}-\psi_O\Vert_2=0,
\end{align*}
from which we conclude that $\tilde{H}$ is complete.

Now to show that $\tilde{K}$ is well defined, first observe that for any $f\in\mathcal{D}(M)$ there exist $f_i\in\mathcal{D}(M_i)$ such that $f=f_1+f_2$, as can for instance be seen using partitions of unity. Now suppose $f_1+f_2=g_1+g_2$ for $g_i\in\mathcal{D}(M_i)$, observe that $f_1-g_1=g_2-f_2\in\mathcal{D}(M_1\cup M_2)$.
We now see that
\begin{align*}
&\Vert[(K_1f_1,K_2f_2)]-[(K_1g_1,K_2g_2)]\Vert=\Vert[(K_1(f_1-g_1),K_2(f_2-g_2))]\Vert=\\
&\Vert UK_1(f_1-g_1)+K_2(f_2-g_2)\Vert_2=\Vert K_2(f_1-g_1+f_2-g_2)\Vert_2=\Vert K_20\Vert_2=0.
\end{align*}
Hence, $\tilde{K}$ is well defined.
\end{proof}

\noindent Using very similar arguments as above, we get the following.
\begin{corollary}
$\tilde K(D(M))+i\tilde K(D(M))$ is dense in $\tilde{H}$.
\end{corollary}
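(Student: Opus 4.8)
The plan is to reduce the statement to the density of the complex spans $A_j := K_j(\mathcal{D}(M_j)) + iK_j(\mathcal{D}(M_j))$ in the $H_j$, which holds by hypothesis since each $(K_j,H_j)$ is a one-particle structure for $(\mathcal{D}(M_j),\sigma_j)$. First I would write $q\colon H_1\oplus H_2\rightarrow\tilde H$ for the quotient map of the definition and observe that every $f\in\mathcal{D}(M)$ admits a decomposition $f=f_1+f_2$ with $f_i\in\mathcal{D}(M_i)$, while conversely every pair $(f_1,f_2)\in\mathcal{D}(M_1)\times\mathcal{D}(M_2)$ arises this way (as $M_i\subset M$). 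Since $\tilde K$ is real linear, combining this with $\tilde K(f_1+f_2)=q(K_1f_1,K_2f_2)$ yields the identification
$$\tilde K(\mathcal{D}(M))+i\tilde K(\mathcal{D}(M))=q(A_1\oplus A_2),$$
where $A_1\oplus A_2$ is dense in $H_1\oplus H_2$ for the direct-sum norm. It then suffices to show that $q$ carries dense sets to dense sets.

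The one technical point, and really the only content beyond bookkeeping, is that $q$ is bounded, so that convergence in $H_1\oplus H_2$ descends to $\tilde H$. Here I would reuse the rewritten form of the pre-inner product from the previous lemma, which gives
$$\|q(\psi_1,\psi_2)\|^2=\|p_1^\bot\psi_1\|_1^2+\|p_2^\bot\psi_2\|_2^2+\|Up_1\psi_1+p_2\psi_2\|_2^2.$$
Bounding each projection by the identity and using unitarity of $U$ together with the elementary inequality $\|a+b\|^2\le 2\|a\|^2+2\|b\|^2$, one reads off $\|q(\psi_1,\psi_2)\|^2\le 3(\|\psi_1\|_1^2+\|\psi_2\|_2^2)$, so $q$ is continuous.

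Since $q$ is then a continuous surjection onto $\tilde H$ and $A_1\oplus A_2$ is dense in $H_1\oplus H_2$, the image $q(A_1\oplus A_2)$ is dense in $\tilde H$, which is exactly the assertion. If one prefers to avoid the quotient map abstractly, matching the phrase \emph{very similar arguments as above}, the same estimate can be run as a direct approximation: given $[(\psi_1,\psi_2)]\in\tilde H$, choose $\alpha_{j,n}\in A_j$ with $\alpha_{j,n}\rightarrow\psi_j$ in $H_j$ and bound $\|[(\psi_1,\psi_2)]-[(\alpha_{1,n},\alpha_{2,n})]\|$ with the rewritten expression above exactly as in the completeness argument, each of the three summands being controlled by $\|\psi_1-\alpha_{1,n}\|_1$ and $\|\psi_2-\alpha_{2,n}\|_2$. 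I do not anticipate a genuine obstacle; the only thing to be careful about is that the quotient seminorm is not the direct-sum norm, so density must be transported through the bound on $q$ rather than taken for granted.
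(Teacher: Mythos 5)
Your proof is correct and is essentially the paper's argument: the paper dismisses this corollary with ``very similar arguments as above,'' meaning exactly what you do --- approximate $\psi_j$ by elements of $K_j(\mathcal{D}(M_j))+iK_j(\mathcal{D}(M_j))$ and control the quotient norm via the rewritten form of the pre-inner product from the preceding lemma. Your packaging of the estimate as boundedness of the quotient map $q$ (so that a continuous surjection carries the dense set $A_1\oplus A_2$ to a dense set) is only a cosmetic reorganization of that same computation, as you yourself note with the direct approximation variant.
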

\noindent Since the maps $\tilde{K}\restriction_{\mathcal{D}(M_i)}$ define one-particle structures equivalent to $K_i$, $(\tilde K,\tilde H)$ indeed define extended one-particle structures. However generally this extension is not unique.\footnote{Uniqueness is guaranteed if one of the two one-particle structures, or rather their associated quasi-free states, satisfy the Reeh-Schlieder property (see for instance \cite{strohmaierMicrolocalAnalysisQuantum2002}), or at least satisfy it with respect to $M_1\cap M_2$. In fact suppose $(K_1,H_1)$ satisfies the Reeh-Schlieder property, this means that $p_1^\bot H_1=\{0\}$.}\\
\newline
Until now we have only considered space-times that were built from two globally hyperbolic patches. Let us now generalize our gluing procedure to arbitrary semi-globally hyperbolic space-times.
\begin{definition}
\label{def:gluedchain}
Let $M=\bigcup_{j\in J}M_j$ be semi-globally hyperbolic with time-ordered cover. Let $(K_j,H_j)$ be one-particle structures that are pairwise compatible, we then refer to $\{((K_j,H_j)):j\in J\}$ as a \textup{compatible chain} of one-particle structures. We now define $(\tilde K,\tilde H)=\Vert\{((K_j,H_j)):j\in J\}$ as follows.
Let $j,j'\in J$ with $j<j'$, for the interval $[j,j']$ we define $(\tilde K_{[j,j']},\tilde H_{[j,j']})$ inductively via
$$(\tilde K_{[j,j']},\tilde H_{[j,j']})=(\tilde K_{[j,j']\setminus\{j'\}},\tilde H_{[j,j']\setminus\{j'\}})\Vert(K_{j'},H_{j'}).$$
Now for $[j,j']\subset[k,k']\subset J$ we define
$\varphi_{[j,j'],[k,k']}:\tilde H_{[j,j']}\rightarrow \tilde H_{[k,k']}$ as the unique isometric embedding such that for $f\in\mathcal{D}\left(\bigcup_{l\in[j,j']}M_l\right)$ we have $$\varphi_{[j,j'],[k,k']}\tilde K_{[j,j']}f= \tilde K_{[k,k']}f.$$
We now define $\tilde H$ as the direct limit of the directed set $\{\tilde H_{[j,j']};j<j'\}$, i.e.
$$\tilde H=\left.\bigsqcup_{j,j'\in J,j<j'}\tilde H_{[j,j']}\middle/\sim\right.$$
where
$(H_{[j,j']},\psi_{[j,j']})\sim(H_{[k,k']},\psi_{[k,k']})$ if there are $l,l'\in J$ such that
$[j,j']\cup[k,k']\subset[l,l']$ and $\varphi_{[j,j'],[l,l']}\psi_{[j,j']}=\varphi_{[k,k'],[l,l']}\psi_{[k,k']}$.
This also yields an natural isometric embedding $\varphi_{[j,j'],J}:\tilde H_{[j,j']}\rightarrow \tilde H$.

For $f\in \mathcal{D}(M)$, take some $[j,j']\subset J$ such that $\text{supp}(f)\in\bigcup_{l\in[j,j']}M_l$, and now define
$$\tilde Kf= \varphi_{[j,j'],J}\tilde K_{[j,j']}f.$$
\end{definition}
\noindent Applying this definition to a slightly more simple setting, we get the following result from straightforward calculation.
\begin{proposition}
Let $M$ a semi-globally hyperbolic space-time $M$ with time-ordered cover $M=\bigcup_{j=0}^nM_i$ and a minimal extension $(\tilde K,\tilde H)=\Vert_{j=0}^n(K_i,H_i)$ defined from a compatible chain of one-particle structures as in definition \ref{def:gluedchain}, define $$p_j:H_j\rightarrow\overline{K_j(\mathcal{D}(M_j\cap M_{j-1}))+iK_j(\mathcal{D}(M_j\cap M_{j-1}))},$$ as the projector w.r.t the inner product $\langle.,.\rangle_j$ and 
\begin{align*}U_j:&\overline{K_j(\mathcal{D}(M_j\cap M_{j-1}))+iK_j(\mathcal{D}(M_j\cap M_{j-1}))}\\&\rightarrow \overline{K_{j-1}(\mathcal{D}(M_j\cap M_{j-1}))+iK_{j-1}(\mathcal{D}(M_j\cap M_{j-1}))}
\end{align*}
the natural unitary map, we find for $f\in\mathcal{D}(M_0)$ and $g\in\mathcal{D}(M_n)$
$$\langle \tilde Kf,\tilde Kg\rangle=\langle K_0f,U_1p_1...U_np_nK_ng\rangle_0.$$
Similarly, defining
$p_j':H_{j-1}\rightarrow\overline{K_{j-1}(\mathcal{D}(M_j\cap M_{j-1}))+iK_{j-1}(\mathcal{D}(M_j\cap M_{j-1}))}$ as the projector w.r.t. $\langle.,.\rangle_{j-1}$, we find
$$\langle \tilde Kf,\tilde Kg\rangle=\langle U_n^*p_n'...U_1^*p_1'K_0f,K_ng\rangle_n.$$
\end{proposition}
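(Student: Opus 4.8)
The plan is to prove both identities by induction on $n$, using the two-patch gluing of Definition \ref{def:glued_ops} as the building block and the inductive definition of the chain gluing in Definition \ref{def:gluedchain}. I will spell out only the first identity; the second is the mirror image (equivalently, one conjugates and tracks the symmetric cross-term in the two-patch inner product), so I would relegate it to a one-line remark.

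For the base case $n=1$ we have $(\tilde K,\tilde H)=(K_0,H_0)\Vert(K_1,H_1)$, and for $f\in\mathcal{D}(M_0)$, $g\in\mathcal{D}(M_1)$ the two glued vectors are $\tilde Kf=[(K_0f,0)]$ and $\tilde Kg=[(0,K_1g)]$. I would substitute these into the explicit inner product of Definition \ref{def:glued_ops}: three of the four terms vanish and what remains is $\langle K_0f,U^*p_1K_1g\rangle_0$, where $U^*$ and $p_1$ are precisely the maps called $U_1$ and $p_1$ in the statement. This is the claimed formula for $n=1$.

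For the inductive step, write $(\tilde K_{[0,n]},\tilde H_{[0,n]})=(\tilde K_{[0,n-1]},\tilde H_{[0,n-1]})\Vert(K_n,H_n)$. For $f\in\mathcal{D}(M_0)\subset\mathcal{D}\big(\bigcup_{l<n}M_l\big)$ and $g\in\mathcal{D}(M_n)$ the glued vectors are $[(\tilde K_{[0,n-1]}f,0)]$ and $[(0,K_ng)]$, so the two-patch formula again collapses to
$$\langle\tilde K_{[0,n]}f,\tilde K_{[0,n]}g\rangle=\langle\tilde K_{[0,n-1]}f,U^*p_nK_ng\rangle_{[0,n-1]},$$
where $p_n$ projects $H_n$ onto the overlap and $U$ is the two-patch unitary with $U\tilde K_{[0,n-1]}h=K_nh$ for $h\in\mathcal{D}(M_{n-1}\cap M_n)$; here I use the time-ordered cover property, which gives $\big(\bigcup_{l<n}M_l\big)\cap M_n=M_{n-1}\cap M_n$.

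The heart of the argument, and the step I expect to be the main obstacle, is to identify the vector $\zeta:=U^*p_nK_ng\in\tilde H_{[0,n-1]}$ well enough to feed it into the inductive hypothesis, since $\zeta$ a priori lives in the glued space rather than in a single $H_j$. I would write $p_nK_ng$ as a limit $\lim_k K_nh_k$ of complexified images of functions $h_k\in\mathcal{D}(M_{n-1}\cap M_n)$, which is possible because $p_nK_ng$ lies in the closure of the overlap. Applying the isometries $U^*$ and the proposition's unitary $U_n$ (each fixed on the dense set of overlap functions, landing in $\tilde H_{[0,n-1]}$ and $H_{n-1}$ respectively) then gives $\zeta=\lim_k\tilde K_{[0,n-1]}h_k$ and $U_np_nK_ng=\lim_k K_{n-1}h_k$, the limits existing by boundedness. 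Since each $h_k\in\mathcal{D}(M_{n-1})$, the inductive hypothesis gives
$$\langle\tilde K_{[0,n-1]}f,\tilde K_{[0,n-1]}h_k\rangle_{[0,n-1]}=\langle K_0f,U_1p_1\cdots U_{n-1}p_{n-1}K_{n-1}h_k\rangle_0$$
for every $k$; letting $k\to\infty$ and using continuity of the bounded operator $U_1p_1\cdots U_{n-1}p_{n-1}$ on the right (and of the inner product on the left) converts the right-hand side into $\langle K_0f,U_1p_1\cdots U_{n-1}p_{n-1}U_np_nK_ng\rangle_0$, which with the previous display is exactly the asserted identity. The two points needing care are the existence of these limits and the consistency of the two ``natural unitaries'' $U$ and $U_n$ on overlap functions; both reduce to the defining relations $U\tilde K_{[0,n-1]}h=K_nh$, $U_nK_nh=K_{n-1}h$ together with the compatibility hypothesis $\langle K_jh,K_jh'\rangle_j=\langle K_{j'}h,K_{j'}h'\rangle_{j'}$, which guarantees these maps are genuinely well-defined isometries.
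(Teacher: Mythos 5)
Your proposal is correct, and it coincides with what the paper intends: the paper offers no written proof at all, stating only that the result follows ``from straightforward calculation'' applied to Definition \ref{def:gluedchain}, and your induction — unwinding the inductive chain gluing via the two-patch inner-product formula of Definition \ref{def:glued_ops}, with the density/limit argument to push $U^*p_nK_ng$ through the inductive hypothesis — is precisely that calculation carried out carefully. The second identity indeed needs no separate induction; it follows from the first by noting that $U_n^*p_n'\cdots U_1^*p_1'$ is the adjoint of $U_1p_1\cdots U_np_n$ viewed as a composition of partial isometries $H_n\rightarrow H_0$.
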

\noindent Here we see explicitly that the fact that our covers admit some natural time-ordering matters in the construction, as projectors $p_i$ and $p_j$, even when extended to $\tilde H$, generally do not commute, hence we need some (geometric) input to decide on the ordering.

\subsubsection{Constructing the algebra from one-particle structures}
Our goal of this section is as follows. Given a set of extended one-particle structures on a semi-globally hyperbolic space-time $M=\bigcup_{j\in J}M_j$, we want to construct the smallest scalar field algebra $\mathcal{A}(M)$ such that a given set of one-particle structures extend to Fock-space representations of this algebra.\\
\newline
For a single extended one particle structure, we define the associated scalar field algebra in the following way.
\begin{definition}
\label{def:1stextended_algebra}
Let $M$ semi-globally hyperbolic and $\{M_j:j\in J\}$ the ordered cover. Given $(K,H)$ an extended one-particle structure, we define the scalar field algebra $\mathcal{A}_{(K,H)}(M)$ as follows. Let $\mathcal{B}(M)$ the the pre-field algebra of linear observables on $M$. Now let $\mathcal{J}_{(K,H)}\subset \mathcal{B}(M)$ the smallest ideal such that for each $f,g\in\mathcal{D}(M)$ we have
$$[\hat\psi(f),\hat\psi(g)]-i\sigma(f,g)\in \mathcal{J}_{(K,H)},$$
and for each $f\in\mathcal{D}(M)$ satisfying
$$\mu(f,f)=0,$$
we have $\psi(f)\in \mathcal{J}_{(K,H)}$. We define
$$\mathcal{A}_{(K,H)}(M)=\left.\mathcal{B}(M)\middle/\mathcal{J}_{(K,H)}\right..$$
\end{definition}
\noindent That this algebra satisfies definition \ref{def:scalfield}, can be seen from the following lemma.
\begin{lemma}
\label{lem:ext_alg_emb}
Let $M$, $\{M_j:j\in J\}$, $(K,H)$ and $\mathcal{A}_{(K,H)}(M)$ as in definition \ref{def:1stextended_algebra}. For $j\in J$, let $\mathcal{J}_j\subset\mathcal{B}(M_j)$ the ideal of definition \ref{def:linscal} and $\iota'_j:\mathcal{B}(M_j)\rightarrow\mathcal{B}(M;M_j)$ the natural isomorphism. Then
$$\mathcal{J}_{(K,H)}\cap\mathcal{B}(M;M_j)=\iota'_j(\mathcal{J}_j).$$
\end{lemma}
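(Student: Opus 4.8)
\emph{The plan.} The inclusion $\iota'_j(\mathcal{J}_j)\subseteq \mathcal{J}_{(K,H)}\cap\mathcal{B}(M;M_j)$ is the easy half. For $f,g\in\mathcal{D}(M_j)$ one has $\sigma(f,g)=\sigma_{M_j}(f,g)$ and $\mu(f,f)=\|K_jf\|^2$, so every defining generator of $\mathcal{J}_j$ --- the commutators $[\hat\psi(f),\hat\psi(g)]-i\sigma_{M_j}(f,g)$, the $\mu$-null elements, and the equation-of-motion elements $\hat\psi((\Box-V)h)$ (which are $\mu$-null because $K\circ(\Box-V)=0$) --- is carried by $\iota'_j$ to a defining generator of $\mathcal{J}_{(K,H)}$ that already lies in $\mathcal{B}(M;M_j)$; as $\iota'_j$ is a $*$-homomorphism the whole ideal lands in $\mathcal{J}_{(K,H)}\cap\mathcal{B}(M;M_j)$. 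The substance is the reverse inclusion, which I would recast as an injectivity statement. Since $\iota'_j(\mathcal{J}_j)\subseteq\mathcal{J}_{(K,H)}$, the composite $\varphi\circ\iota'_j:\mathcal{B}(M_j)\to\mathcal{A}_{(K,H)}(M)$ kills $\mathcal{J}_j$ and descends to a $*$-homomorphism $\bar q:\mathcal{A}(M_j)\to\mathcal{A}_{(K,H)}(M)$ with $\bar q([\hat\psi(f)])=\hat\phi(f)$. Because $\iota'_j$ is an isomorphism onto $\mathcal{B}(M;M_j)$, the desired equality $\mathcal{J}_{(K,H)}\cap\mathcal{B}(M;M_j)=\iota'_j(\mathcal{J}_j)$ is equivalent to injectivity of $\bar q$, so the entire proof reduces to showing $\bar q$ injective (which at the same time shows that the maps $\iota_i$ of definition \ref{def:scalfield} are genuine isomorphisms).

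\emph{Presenting both algebras as CCR algebras.} I would next make the two sides explicit. Since $\ker K\subseteq\mathrm{rad}\,\sigma$ (if $Kf=0$ then $\sigma(f,g)=2\,\mathrm{Im}\langle Kf,Kg\rangle=0$ for all $g$), after imposing the commutation relations the elements $\hat\phi(f)$ with $\mu(f,f)=0$, i.e.\ $f\in\ker K$, are central, and quotienting by the ideal they generate yields an isomorphism $\mathcal{A}_{(K,H)}(M)\cong\mathrm{CCR}(\mathcal{D}(M)/\ker K,\sigma)$, the unital $*$-algebra generated by a real-linear $\hat\phi$ with $[\hat\phi(f),\hat\phi(g)]=i\sigma(f,g)$ and $\hat\phi$ vanishing on $\ker K$. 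Applying definition \ref{def:linscal} on the globally hyperbolic patch $M_j$ in the same way gives $\mathcal{A}(M_j)\cong\mathrm{CCR}(\mathcal{D}(M_j)/\ker K_j,\sigma_{M_j})$, where one uses that for a one-particle structure on a globally hyperbolic space-time $\ker K_j=(\Box-V)\mathcal{D}(M_j)=\mathrm{rad}\,\sigma_{M_j}$: the inclusion $\supseteq$ is the identity $K\circ(\Box-V)=0$, and $\subseteq$ is the standard nondegeneracy of the one-particle structure on the solution space. Under these identifications $\bar q$ is exactly the homomorphism induced by the linear map $[f]_{\ker K_j}\mapsto[f]_{\ker K}$ between the two quotient spaces.

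\emph{Injectivity, and the main obstacle.} It then remains to show $\bar q$ is injective. First, the underlying map of (pre-)symplectic spaces $\mathcal{D}(M_j)/\ker K_j\to\mathcal{D}(M)/\ker K$ is injective and symplectic: it preserves the form because $\sigma\restriction_{\mathcal{D}(M_j)^2}=\sigma_{M_j}$, and its kernel is $(\ker K\cap\mathcal{D}(M_j))/\ker K_j$, which is trivial since $K_j=K\restriction_{\mathcal{D}(M_j)}$ forces $\ker K\cap\mathcal{D}(M_j)=\ker K_j$. The remaining, and main, step is the purely algebraic fact that the polynomial CCR functor sends an injective symplectic map to an injective $*$-homomorphism. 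I would prove this with a Poincar\'e--Birkhoff--Witt argument: choose an ordered real basis of $\mathcal{D}(M)/\ker K$ extending one of the image of $\mathcal{D}(M_j)/\ker K_j$, observe that the ordered monomials in the $\hat\phi$ of these basis vectors form a $\mathbb{C}$-basis of $\mathrm{CCR}(\mathcal{D}(M)/\ker K,\sigma)$, and note that those built only from basis vectors of the subspace are precisely a basis of $\mathrm{CCR}(\mathcal{D}(M_j)/\ker K_j,\sigma_{M_j})$; linear independence of the larger set then forces $\bar q$ injective. (Alternatively one may invoke faithfulness of the Fock representation on $\mathcal{F}(H)$, under which $\bar q$ is the restriction of the field operators to $\mathcal{D}(M_j)$.) The main obstacle is this last step: making the normal-ordering/basis argument rigorous for these typically infinite-dimensional pre-symplectic spaces, and in particular checking that passing to the quotient by $\ker K$ introduces no relations beyond centrality of $\hat\phi(\ker K)$. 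Everything geometric has by then been reduced to the single transparent identity $\ker K\cap\mathcal{D}(M_j)=\ker K_j$.
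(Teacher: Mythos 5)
Your proof is correct, but it takes a genuinely different route from the paper's. The paper argues directly at the level of ideals: it asserts that any element of $\mathcal{J}_{(K,H)}\cap\mathcal{B}(M;M_j)$ can be written as a finite sum $\sum_n b_n(\text{generator})c_n$ with the coefficients $b_n,c_n$ \emph{and} the generators all localized in $\mathcal{B}(M;M_j)$, and then matches these localized generators of $\mathcal{J}_{(K,H)}$ with those of $\iota'_j(\mathcal{J}_j)$ --- commutator generators agree because $\sigma\restriction_{\mathcal{D}(M_j)^2}=\sigma_{M_j}$, and $\mu$-null generators agree with the $\sigma_{M_j}$-degenerate ones via the Cauchy--Schwarz bound $\vert\sigma_{M_j}(f,g)\vert^2\leq 4\mu(f,f)\mu(g,g)$ in one direction and $K\circ(\Box-V)=0$ together with theorem \ref{thm:causprop} in the other. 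You instead recast the equality as injectivity of the induced map $\bar q:\mathcal{A}(M_j)\rightarrow\mathcal{A}_{(K,H)}(M)$, present both algebras as CCR algebras over the pre-symplectic quotients $\mathcal{D}(M_j)/\ker K_j$ and $\mathcal{D}(M)/\ker K$, reduce the geometry to the single identity $\ker K\cap\mathcal{D}(M_j)=\ker K_j$, and close with a PBW-basis (or Fock-faithfulness) argument. The two proofs share the same computational core (the restriction identity for $\sigma$ and the identification of $\mu$-null test functions in $\mathcal{D}(M_j)$ with $(\Box-V)\mathcal{D}(M_j)$), but they differ in where the real work is placed: the paper's opening claim --- that membership in the intersection is ``equivalent to'' admitting a fully localized decomposition --- is precisely the nontrivial localization step, and the paper leaves it unjustified; your reformulation absorbs exactly that difficulty into the injectivity of the CCR functor on injective pre-symplectic maps, which you then settle by a standard (if infinite-dimensional) PBW argument. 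In that sense your proof is longer but closes a gap the paper elides; moreover your alternative route via faithfulness of the Fock representation is very close in spirit to proposition \ref{prop:fock_faith}, which the paper already proves and could have cited here to the same effect.
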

\begin{proof}
Suppose $$a\in\mathcal{J}_{(K,H)}\cap\mathcal{B}(M;M_j)\subset\mathcal{B}(M;M_j),$$ this is equivalent to that $a=\sum_{n=0}^Na_n$ where for each $n$ there are $b_n,c_n\in\mathcal{B}(M;M_j)$ such that either
\begin{enumerate}
\item $f,g\in\mathcal{D}(M)$ with $\text{supp}(f),\text{supp}(g)\subset M_j$ such that
$$a_n=b_n([\hat\psi(f),\hat\psi(g)]-i\sigma(f,g))c_n$$
or
\item $f\in\mathcal{D}(M)$ with $\text{supp}(f)\subset M_j$ and $\mu(f,f)=0$ such that
$$a_n=b_n\hat\psi(f)c_n.$$
\end{enumerate}
In the first case, we clearly see that $a_n\in\iota'_j(\mathcal{J}_j)$, as we know that $\text{supp}(f),\text{supp}(g)\subset M_j$ $$\sigma(f,g)=\sigma_{M_j}(f,g).$$
In the second case, observe that
$$\vert\sigma_{M_j}(f,g)\vert^2\leq 4\mu(f,f)\mu(g,g).$$
It follows that
$\text{supp}(f)\subset M_j$ and $\tilde{\mu}(f,f)=0$ imply that $\sigma_{M_j}(f,g)=0$ for all $g\in\mathcal{D}(M;M_j)$. This implication also goes the other way, as we know that if $\sigma_{M_j}(f,g)=0$ for all $g\in\mathcal{D}(M;M_j)$, then $f\in(\Box-V)\mathcal{D}(M;M_j)$ and hence $\mu(f,f)=0$.

We conclude that $a\in\mathcal{J}_{(K,H)}\cap\mathcal{B}(M;M_j)$ if and only if $a=\sum_na_n$ with $a_n\in\iota'_j(\mathcal{J}_j)$, which is in turn equivalent to $a\in\iota'_j(\mathcal{J}_j)$.
\end{proof}

\noindent It should be clear that the quantum field algebra $\mathcal{A}_{(K,H)}(M)$ admits a quasi-free state $\omega$ for which the two-point function is given by $\mu$ via $$\omega\left(\hat\phi_{(K,H)}(f)\hat\phi_{(K,H)}(g)\right)=\langle K f, K g\rangle=\mu(f,g)+\frac{i}{2}\sigma(f,g).$$ In fact we can easily lift the discussion following definition \ref{def:fock} to this algebra to see that the GNS representation associated with the state above also yields a faithful Fock-space representation where the one-particle Hilbert space matches $H$.\\
\newline
We have seen how, assuming the fact that a semi-globally hyperbolic space-time admits a compatible chain of one-particle structures, we can construct a quantum field theory on this space-time that admits a Fock-space representation with a one-particle Hilbert space that corresponds to the gluing of this particular compatible chain of one-particle structures. However in general there may be many compatible chains that can be extended in several ways and one may not want to select a preferred one. We will reflect on this further when we treat some example space-times. For now suppose that we have some set $E$ consisting of numerous extended one-particle structures, we wish to construct an algebra that for each $(K,H)\in E$ admits a quasi-free state $\omega$ with matching two-point function and Fock-space representation. Such an algebra can be easily constructed using these Fock space representations.
\begin{definition}
\label{def:extended_algebra}
Let $M$ semi-globally hyperbolic and $\{M_j:j\in J\}$ the ordered cover. For a set $E$ consisting of extended one-particle structures, let $(\mathfrak{H}_{(K,H)},\pi_{(K,H)})$ be the Fock space representation of $\mathcal{A}_{(K,H)}(M)$ associated with $(K,H)\in E$. We now define $\mathcal{A}_E$ the algebra generated by operators $\{\hat\phi_E(f):f\in\mathcal{D}(M)\}$ on $$\mathfrak{H}_E=\bigoplus_{(K,H)\in E}\mathfrak{H}_{(K,H)},$$
defined via
$$\hat\phi_E(f)\left(\Psi_{(K,G)}\right)_{(K,H)\in E}= \left(\pi_{(K,G)}\left(\hat\psi_{(K,G)}(f)\right)\Psi_{(K,G)}\right)_{(K,H)\in E}.$$
\end{definition}
\noindent Clearly, this algebra is a scalar field algebra in the sense of definition \ref{def:scalfield}. It is tempting to view the choice of which one-particle structures one uses to define the theory as some choice of boundary condition. However, while these are not two completely unrelated choices, as we will see for instance in section \ref{sec:punctured}, not every choice of a one-particle structure corresponds to a particular boundary condition. Nevertheless, there is an intimate relationship between the global dynamics of the theory and the set $E$. This is something we make precise below.

\subsection{Dynamics on scalar field algebras}
Recall that quantum field theories on globally hyperbolic space-times ought to satisfy the time-slicing axiom. This can be seen as quantum field theoretical version of theorem \ref{thm:cauchy_prob}, i.e. the fact that there is some well-posed global dynamics. We wish to also have a notion of global dynamics on our scalar field algebras. This is given by the definition below.

\begin{definition}
\label{def:dynamics}
Given a semi-globally hyperbolic space-time with time-ordered cover $M=\cup_{j\in J}M_j$, we say a quantum field theory $\mathcal{A}(M)$ is \textup{past predictive} if for each $j\in J$ we have  $$\mathcal{A}(M;M_j)=\mathcal{A}(M;\cup_{j'\leq j}M_{j'}).$$
Similarly it is \textup{future predictive} if
$$\mathcal{A}(M;M_j)=\mathcal{A}(M;\cup_{j'\geq j}M_{j'}).$$
We say a theory has \textup{complete dynamics} if the theory is both past and future predictive.\footnote{The notion of past predictiveness can also be seen as that a theory does not suffer from information loss, in the sense of \cite{unruhInformationLoss2017}. One could equally say that future predictiveness is the same as that theory does not suffer information gain, however unlike information loss this is not a popular terminology.}
\end{definition}
\noindent Here we should stress once again the choice of time-ordered cover matters for this definition. However here this is very natural, as we have seen in proposition \ref{thm:cover} that the time-ordered cover relates to a choice of semi-Cauchy time-function and hence to a global notion of time. Clearly what notion of time one adheres may play a role in what one would consider the global dynamics. In fact this is already true for globally hyperbolic space-times, where if one chooses a time-function that is not Cauchy (but for instance only semi-Cauchy), the time-slicing axiom fails in a sense. That is to say, once one knows the $n$-point functions of a state in the neighbourhood of an equal time surface, this need not uniquely fix them on the entire space-time.

In the case that our theory has been constructed from a single extended one-particle structure, these predictiveness properties can be one-to-one related with properties of this structure.
\begin{proposition}
Given $M=\cup_{j\in J}M_j$ semi-globally hyperbolic and $(K,H)$ an extended one-particle structure, then $\mathcal{A}_{(K,H)}(M)$ is future predictive if and only if for each $j\in J$
$$K(\mathcal{D}(M_j))=K(\mathcal{D}(\cup_{j'\geq j}M_{j'})).$$
\end{proposition}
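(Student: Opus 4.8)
The plan is to strip away the algebraic packaging and reduce the statement to a transparent fact about the one-particle space $H$. First I would note that $\mu(f,f)=\operatorname{Re}\langle Kf,Kf\rangle=\lVert Kf\rVert^2$, so the relation $\mu(f,f)=0$ appearing in Definition \ref{def:1stextended_algebra} is equivalent to $Kf=0$; combined with faithfulness of the Fock representation this shows $\hat\phi(f)=0$ in $\mathcal{A}_{(K,H)}(M)$ iff $Kf=0$, so each generator $\hat\phi(f)$ depends only on the vector $Kf\in H$. Consequently, for any open $U$ the local algebra $\mathcal{A}(M;U)$ is exactly the polynomial CCR $*$-algebra generated by the field operators attached to the real subspace $L_U:=K(\mathcal{D}(U))\subseteq H$. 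Setting $U=M_j$ and $V=\bigcup_{j'\ge j}M_{j'}$ (an open set, with $L_U\subseteq L_V$ since $U\subseteq V$), the proposition reduces to the clean claim that $\mathcal{A}(M;U)=\mathcal{A}(M;V)$ if and only if $L_U=L_V$, in which case future predictiveness is just $L_{M_j}=L_{\bigcup_{j'\ge j}M_{j'}}$.

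The ``if'' direction is immediate. If $L_U=L_V$, then for every $g\in\mathcal{D}(V)$ there is an $f\in\mathcal{D}(U)$ with $Kf=Kg$, so $\hat\phi(g)=\hat\phi(f)\in\mathcal{A}(M;U)$; since such elements generate $\mathcal{A}(M;V)$ and the reverse inclusion holds automatically from $U\subseteq V$, the two algebras coincide.

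For the ``only if'' direction I would work inside the faithful Fock representation on $\mathfrak{H}_{(K,H)}$ with vacuum $\Omega$, where $\hat\phi(f)$ acts as $a^*(Kf)+a(Kf)$ for the creation and annihilation operators $a^*,a$, so that $\hat\phi(f)\Omega=Kf$. Let $P_1$ denote the orthogonal projection onto the one-particle space $H$. The key computational fact is that any element of $\mathcal{A}(M;U)=\mathrm{CCR}(L_U)$ can be rewritten, by using the central commutation relations to carry out normal ordering, as a finite complex-linear combination of Wick monomials $:\hat\phi(f_1)\cdots\hat\phi(f_m):$ with all $f_i\in\mathcal{D}(U)$; applied to the vacuum such a monomial produces the pure $m$-particle vector $Kf_1\vee\cdots\vee Kf_m$, so $P_1$ annihilates it unless $m=1$, in which case it returns $Kf_1\in L_U$. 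Thus $P_1A\Omega\in\operatorname{span}_{\mathbb{C}}L_U$ for every $A\in\mathcal{A}(M;U)$.

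The main obstacle is exactly that this naive bound only places $P_1A\Omega$ in the \emph{complex} span of $L_U$, whereas $L_U$ is merely a real subspace of $H$ and the target $Kg$ carries no ambient reality. I would resolve this using self-adjointness: given $g\in\mathcal{D}(V)$, the element $\hat\phi(g)$ is self-adjoint (as $g$ is a real test function) and by hypothesis lies in $\mathcal{A}(M;U)$. Since every Wick monomial of the self-adjoint fields $\hat\phi(f_i)$ is again self-adjoint, writing $\hat\phi(g)=\sum_\alpha c_\alpha W_\alpha$ and replacing it by $\tfrac12(\hat\phi(g)+\hat\phi(g)^*)=\sum_\alpha\operatorname{Re}(c_\alpha)W_\alpha$ forces all coefficients to be real. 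Applying $P_1(\cdot)\Omega$ then gives
$$Kg=P_1\hat\phi(g)\Omega=\sum_\alpha\operatorname{Re}(c_\alpha)\,Kf_\alpha\in L_U,$$
a genuine real-linear combination of elements of $L_U$, where the sum runs over the degree-one monomials. Hence $L_V\subseteq L_U$, so $L_U=L_V$; translating back through $U=M_j$ and $V=\bigcup_{j'\ge j}M_{j'}$ yields the stated characterization of future predictiveness.
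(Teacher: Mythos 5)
Your proof is correct, and its skeleton matches the paper's: the ``if'' direction is the same one-line observation ($K(f-g)=0$ gives $\mu(f-g,f-g)=0$, hence $\hat\phi(f)=\hat\phi(g)$), and the ``only if'' direction, like the paper's, works inside the Fock representation and uses self-adjointness to deal with reality of coefficients. However, your extraction mechanism is genuinely different. The paper takes $\hat\phi(f)=b$ with $b$ a polynomial in fields $\hat\phi(g_i)$, $g_i\in\mathcal{D}(M_j)$, re-runs the commutator/derivation computation of Proposition \ref{prop:fock_faith} to show that every coefficient of total degree $\geq 2$ vanishes, uses self-adjointness to write $b=c+\hat\phi(g)$ with $c\in\mathbb{R}$ and $g\in\mathcal{D}(M_j)$, and finally argues that $\hat\phi(f-g)=c$ is only possible for $c=0$, whence $K(f-g)=0$. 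You never prove any coefficients vanish: you Wick-order the expansion and apply $P_1(\cdot)\,\Omega$, which simply ignores every normal-ordered monomial of degree $\neq 1$ and returns the degree-one part directly. This buys a shorter argument, and it makes fully explicit the key subtlety—that $L_U=K(\mathcal{D}(M_j))$ is only a \emph{real} subspace, so landing in $\operatorname{span}_{\mathbb{C}}L_U$ is not by itself enough—which the paper handles only through its parenthetical self-adjointness remark. The cost is that you import Wick's theorem (the normal-ordered expansion of arbitrary field products), standard but never developed in the paper, whereas the paper reuses machinery it has already proved. One phrasing quibble: taking the self-adjoint part does not ``force all coefficients to be real'' (the Wick monomials need not be linearly independent); it shows that $\hat\phi(g)$ \emph{admits} a representation with real coefficients, which is all your argument actually uses.
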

\begin{proof}
Clearly if $K(\mathcal{D}(M_j))=K(\mathcal{D}(\cup_{j'\geq j}M_{j'}))$, then for any $f\in\mathcal{\mathcal{D}(\cup_{j'\geq j}M_{j'})}$ there exists an $g\in\mathcal{D}(M_j)$ such that $K(f-g)=0$ and hence
$$\mu(f-g,f-g)=0.$$
Hence $\hat\phi_{(K,H)}(f)=\hat\phi_{(K,H)}(g)$ which implies $$\mathcal{A}_{(K,H)}(M;M_{j})\subset\mathcal{A}_{(K,H)}(M;\cup_{j'\geq j}M_{j'})\subset \mathcal{A}_{(K,H)}(M;M_{j}).$$
\newline
Reversely, given that for each $f\in\mathcal{D}(\cup_{j'\geq j}M_{j'})$ there is a $b\in\mathcal{A}(M;M_j)$ such that
$$b=\hat\phi_{(K,H)}(f).$$
Then there are $g_1,...,g_N\in\mathcal{D}(M_j)$ such that
$$b=\sum_{k_1,...k_N=0}^Mc_{k_1,...,k_N}(\hat\phi(f_1))^{k_1}...(\hat\phi(f_N))^{k_N},$$ and using a similar calculation as the proof of proposition \ref{prop:fock_faith}, we find that
$c_{k_1,...,k_N}=0$ for $k_1+...+k_N>1$, so we know that (given that $b$ is self-adjoint) there is some $c\in\mathbb{R}$ and $g\in\mathcal{D}(M_j)$ such that
$$b=c+\hat\phi_{(K,H)}(g),$$
which implies $\hat\phi_{(K,H)}(f-g)=c$, and from the definition of $\mathcal{A}_{(K,H)}$ this can only be so for $c=0$, which means that $$\mu(f-g,f-g)=0$$ and hence $K(f-g)=0$. Therefore $K(\mathcal{D}(M_j))=K(\mathcal{D}(\cup_{j'\geq j}M_{j'}))$.
\end{proof}

\noindent We have now shown how, given a non-empty set $E$ of extended one-particle structures on semi-globally hyperbolic space-times $M$, a quantum field theory $\mathcal{A}_E(M)$ can be constructed such that each $(K,H)\in E$ corresponds to a one-particle subspace of a Fock-space representation on $\mathcal{A}_E(M)$. Furthermore we have discussed some aspects of dynamics on these theories. In the next section we will consider some example space-times and show how in various contexts a choice of $E$ can lead to theories with varying degrees of predictiveness.

\section{Applications to maximally semi-globally hyperbolic space-times}
\label{sec:maxsem}
Here we consider maximally semi-globally hyperbolic space-times. Unless stated otherwise we always define our scalar field algebras with respect to the maximal cover, which means that these theories behave as expected on each causally convex globally hyperbolic neighbourhood. Firstly, we treat space-times that are globally hyperbolic up to a missing point (and hence have a unique completion to a globally hyperbolic space-time), then we will treat space-times with a larger missing region, such that in particular embeddings into globally hyperbolic space-times are highly non-unique, and thirdly we will treat some space-times that cannot be embedded into globally hyperbolic space-times, with particular attention to black hole evaporation space-times.
\subsection{Punctured globally hyperbolic space-times}
\label{sec:punctured}
The most elementary class of semi-globally hyperbolic space-time, beyond globally hyperbolic space-times themselves, are space-times that are globally hyperbolic up to a missing point. So in particular, for $N$ globally hyperbolic, and some $p\in N$, the space-time $M=N\setminus\{p\}$ is semi-globally hyperbolic. Note in particular that any Cauchy time-function $T:N\rightarrow \mathbb{R}$ restricts to a semi-Cauchy time-function $T\restriction_M$. It is not difficult to see that this space-time is in fact maximally semi-globally hyperbolic, where the maximal cover is given by $M_-=N\setminus \overline{I^+(p)}$ and $M_+=N\setminus \overline{I^-(p)}$, as illustrated in figure \ref{fig:punct}.
\begin{figure}[h]
	\begin{center}
		\includegraphics{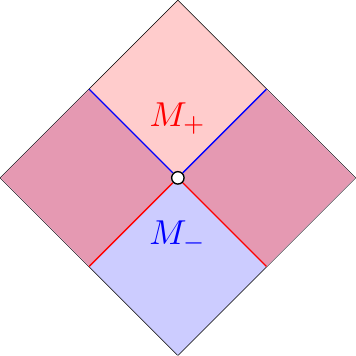}
	\end{center}
	\caption{A punctured globally hyperbolic space-time with maximal cover}
	\label{fig:punct}
\end{figure}
That this space-time admits a quantum field theory should be clear from the fact that it can be embedded in a globally hyperbolic space-time. Each quasi-free state $\omega_N$ on $\mathcal{A}(N)$ defines a one-particle structure $(K_N,H_N)$ on $(N,\sigma_N)$, which can be restricted to $M_{\pm}$ to yield a compatible pair of one-particle structures, and so forth. One may however wonder if this is the only interesting quantum field theory one can construct on such a space-time.

In many cases, one wouldn't expect that the absence of a single point in a space-time should leave an imprint on the overall physics. After all, what kind of physics can be associated with a single point? Certainly in classical field theories, where one is often interested in smooth solutions to equations of motions on smooth space-times, removing a point out of the background will generally not yield any ambiguities of solutions, as long as we restrict ourselves to smooth solutions. In fact, such a result also holds in quantum field theory, where the role of smooth solutions to the equation of motion are now played by (local) Hadamard states, that play an essential role in the definition of higher order observables such as the stress-energy tensor, Wick products and time-ordered products, see \cite{waldQuantumFieldTheory1994, hollandsLocalWickPolynomials2001,hollandsExistenceLocalCovariant2002}. In fact as a consequence of \cite[theorem 3.5]{verchContinuitySymplecticallyAdjoint1997}, we see that given that $\omega,\omega'$ Hadamard states on a scalar field algebra $\mathcal{A}(M)$, by which we mean that their restriction to $\mathcal{A}(M;M_\pm)$ defines a Hadamard state in the sense of \cite{radzikowskiMicrolocalApproachHadamard1996}, it holds that 
$$\omega=\omega'\iff\omega\restriction_{\mathcal{A}(M;M_-)}=\omega'\restriction_{\mathcal{A}(M;M_-)}\iff\omega\restriction_{\mathcal{A}(M;M_+)}=\omega'\restriction_{\mathcal{A}(M;M_+)}.$$

\noindent Nevertheless, is some contexts one may also be interested in states that are not Hadamard. This is especially true when one views the missing point in our space-time as some place-holder for singular behaviour, say a point interaction of some kind.	
In globally hyperbolic space-times it is well known via the propagation of singularities theorem (see \cite{radzikowskiMicrolocalApproachHadamard1996}) that if a state satisfies the Hadamard condition around some Cauchy surface, it satisfies these conditions on the entire space-time. The same cannot be said for the space-time $M$ introduced here, firstly it does not contain any global Cauchy surface, but at a more fundamental level the missing point can be a source for singular behaviour in distributions. That is to say, suppose that a 2-point function $w_2$ on the space-time $M$ satisfies the Hadamard conditions on $M_-$, i.e. following the definitions of \cite{radzikowskiMicrolocalApproachHadamard1996}
$$WF'(w_2\restriction_{D(M_-)^2})=\{(x,k;x',k')\in T(M_-\times M_-)\setminus\mathbf{0}:(x,k)\sim(x',k'), k\vartriangleright0\},$$
then 
\begin{align*}
	WF'(w_2)\subset&\{(x,k;x',k')\in T(M\times M)\setminus\mathbf{0}:(x,k)\sim(x',k'), k\vartriangleright0\}\\
	&\cup\{(x,k;x',k')\in T(M\times M)\setminus\mathbf{0}:TM_+\ni(x,k)\sim(p,\xi)\\
	&\;\;\;\;\;\text{ or }TM_+\ni(x',k')\sim(p,\xi)\text{ for some }\xi\in T_pN\}.
\end{align*}

\noindent Let us give an example. Let $(K_-,H_-)$ and $(K_+,H_+)$ compatible Hadamard one-particle structures inducing faithful Fock space representations on $M_-$ and $M_+$ respectively, then we can define
$$K_+'\rightarrow\mathcal{D}(M_+)\rightarrow H_+\times \mathbb{C},$$
$$K_+'f=(K_+f,G^+(f,p)),$$
$$H_+'=\overline{K_+'(\mathcal{D}(M_+))+iK_+'(\mathcal{D}(M_+))},$$
such that for $f,g\in\mathcal{D}(M_+)$ $$\langle K_+'f,K_+'g\rangle_+'=\langle K_+f,K_+g\rangle_++G^+(f,p)G^+(g,p).\footnote{Note that the new contribution to this inner product has to be real, otherwise it would not respect the pre-symplectic structure on $\mathcal{D}(M_+)$.}$$
By \cite[theorem 3.5]{verchContinuitySymplecticallyAdjoint1997} we can deduce from the Hadamard property that $K_1(D(M_1\cup M_2))+iK_1(D(M_1\cup M_2))$ is dense in $H_1$, therefore 
$$(K,H)=(K_-,H_-)\Vert(K_+,H_+),\;(K',H')=(K_-,H_-)\Vert(K_+',H_+')$$ are the unique extended one-particle structures of these compatible pairs. Since $(K,H)$ corresponds to a quasi-free state on $\mathcal{A}(N)$, we see that
$$\mathcal{A}_{(K,H)}(M)\cong\mathcal{A}(N;M)=\mathcal{A}(N).$$
In particular, $\mathcal{A}_{(K,H)}(M)$ has complete dynamics.
However, in the case of $(K',H')$, the best we can do is find a surjective homomorphism,
$$\mathcal{A}_{(K',H')}(M)\twoheadrightarrow\mathcal{A}(N),$$
with $\hat\phi_{(K',H')}(f)\mapsto\hat\phi_N(f)$. More specifically
$$\mathcal{A}(N)\cong\left.\mathcal{A}_{(K',H')}(M)\middle/ \mathcal{I}\right.,$$
with $\mathcal{I}$ the smallest ideal containing all $\hat\phi_{(K',G')}(f)$ with $$f\in\mathcal{D}(M)\cap(\Box-V)\mathcal{D}(N).$$

It is worth mentioning, that dividing out the ideal $\mathcal{I}$ is not the only way to reduce the theory $\mathcal{A}_{(K',H')}(M)$ to $\mathcal{A}(N)$. 
For example, given $\lambda_1,\lambda_2\in\mathbb{R}$, we can define $\mathcal{I}_{\lambda_1,\lambda_2}$ as the smallest ideal such that for any $f\in\mathcal{D}(M)\cap(\Box-V)\mathcal{D}(N)$ $$\hat\phi_{(K',G')}(f)+\lambda_1G^+(f,p)+\lambda_2G^-(f,p)\in\mathcal{I}_{\lambda_1,\lambda_2}.$$ Even though we have now chosen a different ideal, we see that also in this case we have
$$\left.\mathcal{A}_{(K',H')}(M)\middle/ \mathcal{I}_{\lambda_1,\lambda_2}\right.\cong \mathcal{A}(N).$$
This is not surprising, as one can view the fields $[\hat\phi_{(K',H')}(f)]_{\mathcal{I}_{\lambda_1,\lambda_2}}$ as quantizing the classical equation of motion $$(\Box-V)\phi=-(\lambda_1+\lambda_2)\delta(.,p)$$ on $N$. It is well known that quantizing the real scalar field with any external (classical) source always results in the same algebra (see for instance \cite{fewsterAlgebraicQuantumField2019}).

These examples illustrate that different choices of extended one-particle structures do not necessarily correspond with different choices of `boundary conditions', whatever this means for a discrete boundary. In this case it may be sensible to think of a source term at this point $p$ as corresponding to a boundary condition on $M$. Of course as discussed different choices of these boundary conditions generate the same algebra with the same net-structure, and hence could be though of as equivalent. What we have seen from the theories defined by $(K,H)$ and $(K',H')$ is that they differ in how much `uncertainty' of such a boundary condition is present in the theory, rather than what actually the boundary condition is. This intimitely relates to whether the theory defined has complete dynamics or not.
\subsection{Space-times with macroscopic gaps}
\label{sec:mac_gap}
In the previous subsection we considered space-times that were globally hyperbolic up to one missing point, but of course this can be generalized to larger `missing regions', for instance a space-like disk. Let us give a slightly more abstract description of what kind of space-time we consider in this section.

Let $M=M_-\cup M_+$ maximally semi-globally hyperbolic with $T:M\rightarrow\mathbb{R}$ a semi-Cauchy time-function such that $T^{-1}(\{0\})$ is a smooth Cauchy surface of $M_-\cap M_+$, and $\Sigma_{\pm t}=T^{-1}(\{\pm t\})$ a Cauchy surface of $M_{\pm}$ for each $t>0$. So far this is nothing new, but now we assume that $$\Sigma_{-t}\cong\Sigma_t,$$
and that $M$ can be isometrically embedded in some $N$ such that $$\Sigma_\pm=\lim_{t\downarrow 0}\Sigma_{\pm t},$$
which should be interpreted as the sets of all limits of sequences in $\prod_{n\in\mathbb{N}}\Sigma_{\pm \frac{1}{n}}$, are also smooth Cauchy surfaces of $\mathcal{D}(M_\pm)\subset N$. Furthermore we require that $\Sigma_0=\Sigma_+\cap\Sigma_-$ and that there is a diffeomorphism $$d:\Sigma_-\rightarrow\Sigma_+,$$ that leaves $\Sigma_0$ invariant.

Note that nowhere we have explicitly required that the ambient space-time $N$ is in itself globally hyperbolic, though in practice we can often choose it to be that way. In this case such an embedding can be used to directly define a quantum field theory on $M$ via $\tilde{\mathcal{A}}(M)=\mathcal{A}(N;M)$ which, due to the fact that $N$ is by no means uniquely defined, already shows that there are a lot of different free scalar quantum field theories that can be defined on $M$. However, there is another way one can construct such quantum field theories, which do not make direct use of $N$, but only use the surfaces $\Sigma_\pm$ and the diffeomorphism $d$.

Suppose we have a one-particle structure $(K_-,H)$ on $M_-$. Consider a test-function $f_+\in \mathcal{D}(M_+)$, we know that $\Delta_{M_+}(f_+)\in \mathcal{E}(M_+)$ induces a function $\varphi_+(f_+)=\Delta_{M_+} (f_+)\restriction_{\Sigma_+}\in\mathcal{D}(\Sigma_+)$ and a smooth distribution $\pi_+(f_+)\in\mathcal{E}'(\Sigma_+)$ given by $$(\pi_+(f_+),u)=\int_{\Sigma_+}dA(u\nabla_{\mathbf{n}}\Delta_{M_+}(f_+)).$$
We can now define $\varphi_-(f_+)=\varphi_+(f_+)\circ d\in \mathcal{D}(\Sigma_+)$
and $\pi_-(f_+)\in \mathcal{E}'(\Sigma_-)$ via $$(\pi_-(f_+),u)=(\pi_+(f_+),u\circ d^{-1}).$$
It is not hard to see that $\pi_-(f_+)$ has a smooth compact integration kernel and hence via theorem \ref{thm:cauchy_prob} $\varphi_-(f_+)$ and $\pi_-(f_+)$ define a unique spatially compact solution to the Klein-Gordon equation $\Delta_-(f_+)\in\mathcal{E}(M_-)$. We can now find an $f_+'\in\mathcal{D}(M_-)$ such that $\Delta_{M_-}(f_+')=\Delta_-(f_+)$, which we use to define $K_+:\mathcal{D}(M_+)\rightarrow H$ via $K_+(f_+)=K_-(f_+')$. Here we note that $K_-(f_+')$ is independent of the choice of $f_+'\in\Delta_{M_-}^{-1}(\{\Delta_-(f_+)\})$.
One can show that $(K_+,H)$ defines a one-particle structure on $M_+$. After all for $f_+,g_+\in\mathcal{D}(M_+)$ we have 
\begin{align*}
	2Im(\langle K_+f_+,K_+g_+\rangle)=&2Im(\langle K_-f_+',K_-g_+'\rangle)\\
	=&\sigma_{M_-}(f_+',g_+')\\
	=&(\pi_-(g_+),\varphi_-(f_+))-(\pi_-(g_-),\varphi_-(f_-))\\
	=&(\pi_+(g_+),\varphi_+(f_+))-(\pi_+(g_-),\varphi_+(f_-))\\
	=&\sigma_{M_+}(f_+,g_+),
\end{align*}
furthermore $(K_+\circ(\Box-V) )(f_+)=K_-(0)=0$, and since $d$ is invertible $K_+(\mathcal{D}(M_+))=K_-(\mathcal{D}(M_-))$. From the fact that $d$ leaves $\Sigma_0$ invariant, we also see that $$K_-\restriction_{\mathcal{D}(M_-\cap M_+)}=K_+\restriction_{\mathcal{D}(M_-\cap M_+)},$$ which implies in particular that $(K_-,H)$ and $(K_+,H)$ form a compatible pair.

In principle we can now define $(\tilde{K},H)=(K_-,H)\Vert(K_+,H)$. However in general this one-particle structure does not lead to a theory with well-posed global dynamics. In this particular case, we can define $K:\mathcal{D}(M)\rightarrow H$ more directly, where for $f=f_++f_-$ with $f_\pm\in\mathcal{D}(M_\pm)$ we set
$$Kf=K_+f_++K_-f_-.$$
The quantum field theory $\mathcal{A}_{(K,H)}(M)$ that this extended one-particle structure $$(K,H)$$ defines actually only depends on the choice of $d$ and not on $(K_-,H)$. In fact we can give an alternative construction of this algebra that bypasses the use of these extended one-particle structures. We first extend the maps $\pi_-$ and $\varphi_-$ to $\mathcal{D}(M)$ in the natural way, i.e. such that for $f_-\in\mathcal{D}(M_-)$ we have $\varphi_-(f_-)=\Delta_{M_-}(f_-)\restriction_{\Sigma_-}$ and so forth. We then see that
$$\mathcal{A}_{(K,H)}(M)=\mathcal{A}_d(M)=\left.\mathcal{B}(M)\middle/\mathcal{J}_d\right.,$$
with $\mathcal{J}_d$ the smallest ideal such that for each $f,g\in\mathcal{D}(M)$ we have that $$[\hat\psi(f),\hat\psi(g)]-i((\pi_-(g),\varphi_-(f))-(\pi_-(g),\varphi_-(f)))\in \mathcal{J}_d,$$
and given that $\varphi_-(f)=0$ and $\pi_-(f)=0$ we have 
$$\hat\psi(f)\in\mathcal{J}_d.$$
Note also that this theory has complete dynamics, which is due to the fact that $d$ is a diffeomorphism. Nevertheless, as we will see in section \ref{sec:top_change}, the construction above can be generalized to the case where $d$ is a manifold embedding instead of a diffeomorphism, but in this case these theories will generally not be either past or future predictive.
\subsection{Black hole evaporation space-times}
\label{sec:bhe}
The examples we have considered so far could mostly be embedded in globally hyperbolic space-times. While these examples did allow us to play around with the constructions outlined in section \ref{sec:constr}, these space-times itself, or rather the quantum field theories on these space-times that deviate from the theories on their embedding into globally hyperbolic space-times, are of limited physical interest. We now focus our attention on a class of maximally semi-globally hyperbolic space-time that have played a relevant role in the physics literature over the past decades, namely the `black hole evaporation' space-times. First introduced in \cite{hawkingParticleCreationBlack1975}, (spherically symmetric) black hole evaporation space-times are characterized by their Penrose diagram, drawn in figure \ref{fig:bhe}, and supposedly model a black hole that loses energy due to Hawking radiation (a quantum field theoretical phenomenon which was also first discussed in \cite{hawkingParticleCreationBlack1975}) and are thought to shrink into nothingness due to the semi-classical back-reaction of this radiation on the space-time metric. It should be noted that no explicit semi-classical solution to general relativity coupled to a scalar quantum field theory (via the semi-classical Einstein equations) is known that models formation and evaporation of a black hole, nor is it known if such solutions actually exist. Nevertheless the suggestion that such solutions may exist in this framework has lead to significant controversy, which often goes under the name of the information loss paradox \cite{belotHawkingInformationLoss1999,marolfBlackHoleInformation2017}. Covering this discussion goes beyond the scope of this text. Nevertheless we do reiterate the point of view pushed most notably in \cite{unruhInformationLoss2017} that, as already alluded to when we wrote down definition \ref{def:dynamics}, from the perspective of quantum field theory on curved space-time, information loss in itself is not at odds with any theoretical foundations. Quantum field theories on semi-globally hyperbolic space-times need not have complete dynamics (with respect to some semi-Cauchy time-function). The time-slicing property is only assumed to hold with respect to Cauchy surfaces. Since black hole evaporation space-times are not globally hyperbolic, information loss (or even information gain) is, at the level of semi-classical physics, not at all problematic. Of course when one believes that semi-classical physics approximates an underlying quantum gravity theory that by assumptions does not admit information loss, one would either need to explain how information losing processes can emerge in the semi-classical limit, or argue that black hole evaporation does not occur in the way that one would expect based on semi-classical arguments.

So having established that in principle a theory admitting information loss on black hole evaporation space-times is not against any foundations of quantum field theory, a question that should be addressed is whether or not one can actually construct such a theory in the first place. Using the constructions that we have considered in this paper, we argue that black hole evaporation space-times do admit linear scalar quantum field theories, however we will see that this leads to a problem that is arguably more problematic than information loss (at least from a semi-classical point of view), as it is unclear if the resulting quantum field theory admits any (physically reasonable) states.\\
\newline
Given that $M$ is a black hole evaporation space-time, the construction of $\mathcal{A}(M)$ is based on the observation that $M$ can be approximated by space-times with macroscopic gaps as discussed in section \ref{sec:mac_gap}. That is to say, the maximally globally hyperbolic space-time $M=M_+\cup M_-$ contains a nested sequence $M^{(n)}=M^{(n)}_+\cup M^{(n)}_-$ in the class of space-times introduced in section \ref{sec:mac_gap}, with $$M^{n}_\pm\subset M^{n+1}_\pm\subset M_\pm\subset M,$$
such that 
$$M=\bigcup_nM^{(n)},\text{ and } D(M^{(n)}_\pm)=M_\pm.$$
As we've seen in section \ref{sec:mac_gap} we can always construct an extended one-particle structure on $M^{(n)}$, say $(K^{(n)},H^{(n)})$, which we could even choose such that $\mathcal{A}_{(K^{(n)},H^{(n)})}(M^{(n)})$ has complete dynamics.

Now choose $E^{(n)}$ to be the set of all (unitary inequivalent) extended two-point functions on $M^{(n)}$. Note that there are unique injective maps $\iota_n: E^{(n+1)}\rightarrow E^{(n)}$ such that, given $(K^{(n)},H^{(n)})=\iota_n((K^{(n+1)},H^{(n+1)}))$, for each $f,g\in\mathcal{D}(M^{(n)})\subset \mathcal{D}(M^{(n+1)})$ we have 
$$\langle K^{(n)}f,K^{(n)}g\rangle^{(n)}=\langle K^{(n+1)}f,K^{(n+1)}g\rangle^{(n+1)}.$$
That this map is injective and unique follows from the fact that for each $f\in\mathcal{D}(M^{(n+1)})$ there is (using the wave equation) a function $g\in\mathcal{D}(M^{(n)})$ such that $K^{(n+1)}f=K^{(n+1)}g$. Therefore $K^{(n)}$ uniquely determines $K^{(n+1)}$ (up to a unitary map). By the same argument we have that $H^{(n)}\cong H^{(n+1)}$. This means that there are natural embeddings $$\mathfrak{H}_{E^{(n+1)}}\subset \mathfrak{H}_{E^{(n)}}$$ of the Hilbert spaces as given in definition \ref{def:extended_algebra}, which means in particular that we can define a Hilbert space projector $p^{(n)}:\mathfrak{H}_{E^{(n)}}\rightarrow\mathfrak{H}_{E^{(n+1)}}$.

Shifting our attention to the algebras $\mathcal{A}_{E^{(n)}}(M^{(n)})$, we note that we can define surjective maps
$$\kappa^{(n)}:\mathcal{A}_{E^{(n)}}(M^{(n)})\twoheadrightarrow\mathcal{A}_{E^{(n+1)}}(M^{(n)}),$$
given by $\hat\phi_{E^{(n)}}(f)\mapsto p^{(n)}\hat\phi_{E^{(n)}}(f)p^{(n)}$. That this map is surjective again follows from that the $\hat\phi_{E^{(n)}}$'s satisfy the wave equation. This allows us to define a nested sequence of ideals $\mathcal{I}^{(n)}\subset\mathcal{A}_{E^{(1)}}(M^{(1)})$ defined by 
$$\mathcal{I}^{(n)}=\ker(\kappa^{(n-1)}\circ...\circ\kappa^{(1)}).$$
This yields $$\mathcal{I}^{(n)}\subset \mathcal{I}^{(n+1)},\text{ and } \left.\mathcal{A}_{E^{(1)}}(M^{(1)})\middle/\mathcal{I}^{(n)}\right.\cong\mathcal{A}_{E^{(n)}}(M^{(n)}).$$
We now define
$$\mathcal{A}(M)=\left.\mathcal{A}_{E^{(1)}}(M^{(1)})\middle/\left(\bigcup_n\mathcal{I}^{(n)}\right)\right..$$
where for $f\in\mathcal{D}(M)$ one defines $\hat\phi(f)=[\hat\phi_{E^{(1)}}(f')]_{\bigcup_n\mathcal{I}^{(n)}}$ for $f'\in \mathcal{D}(M^{(1)})$ with $f-f'\in(\Box-V)\mathcal{D}(M)$. This allows $\mathcal{A}(M)$ to be given a net structure in the usual way, where we can see that for each $U\subset M$ causally convex globally hyperbolic
$$\mathcal{A}(M;U)\cong\mathcal{A}(U).$$
From here it is clear that $\mathcal{A}(M)$ is a linear scalar quantum field theory in the sense of definition \ref{def:scalfield}.\\

\noindent Until now all the theories that we constructed admitted states by design, we first constructed a state (or rather one-particle structure) that our theory should admit and built the algebra from there. The construction above does not start from a state on $\mathcal{A}(M)$, but rather from a sequence of what one could arguably call `approximate states', in the sense that these states are defined everywhere on the space-time up to some small neighbourhood of the black hole singularity, which can be made arbitrarily small. In other words, if one allows the physics to deviate (in an a priori unspecified way) from the semi-classical model in an arbitrarily small neighbourhood of the singularity, then one could hope that one of these `approximate states' defines an algebra that describes the physics away from the singularity in a sufficiently accurate way, perhaps even yielding a theory that has complete dynamics. This idea is in itself nothing new, after all one expects that near the singularity, where curvature of the space-time blows up, that the semi-classical approximation breaks down and one needs an underlying theory of quantum gravity to accurately describe the situation. For this reason researchers working in several quantum gravity communities have proposed various quantum gravity mechanisms that could play a role in this regime. A recent example of this is the black hole to white hole tunnelling mechanism that has been studied in \cite{bianchiWhiteHolesRemnants2018}. Such a mechanism would indeed yield a space-time that, outside this transition region, looks exactly like one of these $M^{(n)}$'s, where the precise choice of $(K^{(n)},H^{(n)})\in E^{(n)}$ that can be used to define the quantum field theory on $M^{(n)}$ should somehow follow from the underlying quantum gravity theory.

The question that remains, is whether $\mathcal{A}(M)$ admits any states at all. This is an open question as far as the author is concerned. From the discussion following definition \ref{def:scalfield}, one can construct a Weyl-like algebra $\mathfrak{A}(M)$ associated with $\mathcal{A}(M)$ which does admit states. However whether any of these states can be extended to $\mathcal{A}(M)$, i.e. whether these states define $n$-point functions, should be further analysed. One could hope that there exists some unit vector in $\bigcap_n\mathfrak{H}_{E^{(n)}}$, which if it existed does define a state on $\mathcal{A}(M)$, but unfortunately the set of unit vectors in $\mathfrak{H}_{E^{(n)}}$ is not compact, and only its convex hull is weakly compact (by the Banach-Alaoglu theorem, see \cite{maccluerElementaryFunctionalAnalysis2009}), hence Cantor's intersection theorem does not yield any non-zero elements in $\bigcap_n\mathfrak{H}_{E^{(n)}}$. Therefore, as far as we can tell the only way to decide whether $\mathcal{A}(M)$ admits any physically reasonable states would be an explicit calculation. This calls for further investigation.

\section{Some applications to non-maximally semi-globally hyperbolic space-times}
\label{sec:nonmax}
In the previous section we discussed some constructions of quantum field theories on maximally semi-globally hyperbolic space-times. Due to the nature of these space-times, the theories constructed automatically reduce to the standard theory on globally hyperbolic space-times on each globally hyperbolic causally convex region. However in some cases we would like to construct quantum field theories on more general semi-globally hyperbolic space-times. In this section we give two examples of such a construction.
\subsection{Topology changing space-times}
\label{sec:top_change}
The first example that we will treat is a topology changing space-times. Topology change has mostly been considered in the context of creating wormholes, see for instance \cite{morrisWormholesTimeMachines1988}. As mentioned in this paper, topology change in space-times without naked singularities is only possible when the space-time either contains closed time-like curves or is not time-orientable (see theorem 2 in \cite{gerochTopologyGeneralRelativity1967}). However, our class of semi-globally hyperbolic space-times do allow for naked singularities and hence one may be able to construct worm-hole space-times that do fall in this class. Another example where topology change plays a role are in so-called branching space-times (see for instance \cite{belnapBranchingSpacetime1992}), which are meant to model indeterminism.

While we do not explicitly consider a branching space-time nor a wormhole space-time, here we merely give a proof of principle that quantum field theories can be constructed on at least some topology changing space-time. In particular, we consider a semi-globally hyperbolic space-time $M=M_-\cup M_+$ with $M_+$ having Cauchy surfaces $\Sigma_+\cong \mathbb{R}^3$ and $M_-$ having Cauchy surfaces $\Sigma_-\cong\mathbb{R}^2\times\mathbb{S}$. We assume these space-times have a flat geometry and we have drawn a Penrose diagram of the space-time (having reduced the $\mathbb{R}^2$ symmetry) in figure \ref{fig:top_change}.

\begin{figure}[h]
	\begin{center}
		\includegraphics{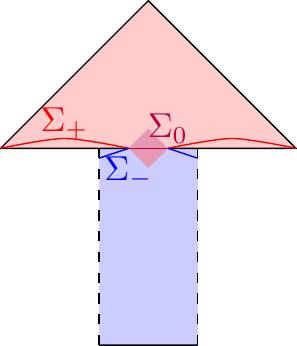}
	\end{center}
	\caption{Penrose diagram of the topology changing space-time $M$ with two spatial directions of translational symmetry suppressed}
	\label{fig:top_change}
\end{figure}

\noindent As one can see in this Penrose diagram, we are nearly in the same situation as in section \ref{sec:mac_gap}, where $\Sigma_0=\Sigma_-\cap\Sigma_+$ is a Cauchy surface of $M_-\cap M_+$, except for the fact that there is no diffeomorphism between $\Sigma_-$ and $\Sigma_+$, but we can find some submanifold $\Sigma_0\subset U\subset\Sigma_-$ such that $d:U\rightarrow\Sigma_+$ is a diffeomorphism leaving $\Sigma_0$ invariant. This allows us to define for each $f_+\in\mathcal{D}(M_+)$ an $\varphi_-(f_+)\in\mathcal{D}(U)\subset\mathcal{D}(\Sigma_-)$ and $\pi_-(f_+)\in\mathcal{E}'(U)\subset\mathcal{E}'(\Sigma_-)$ with smooth integral kernel which we can use to define an extended one-particle structure $(K,H)$ from any one-particle structure $(K_-,H)$ on $M_-$. This allows us to construct a linear scalar quantum field theory on $\mathcal{A}_d(M)$ in exactly the same way as for the macroscopic gap space-times, however with the notable difference that $\mathcal{A}_d(M)$ has no complete dynamics, while it is not past predictive. However it future predictive, as we see that $\mathcal{A}_d(M;M_+)=\mathcal{A}_d(M)$.

\subsection{Approximating time-like boundaries}
As mentioned in the introduction, one class of non-globally hyperbolic space-times on which quantum field theory has already been well-studied in \cite{beniniAlgebraicQuantumField2018}, are (certain) space-times with time-like boundaries. While these space-times are not semi-globally hyperbolic, we can still use our construction to make contact with quantum field theories on these backgrounds.

The class of space-times that we consider here are \textit{globally hyperbolic space-times with time-like boundaries} (as defined in \cite{hauStructureGloballyHyperbolic2019}). These space-times notably still satisfy a version of Geroch's splitting theorem. In particular, let $M$ the bulk of a globally hyperbolic space-time with time-like boundary $\overline{M}$, then there is a time-function $t:M\rightarrow \mathbb{R}$ such that each level surface $t=T$ is an acausal hypersurfaces $\Sigma_T$. In particular $\overline{\Sigma_T}$ is Cauchy in $\overline{M}\cong\mathbb{R}\times \overline{\Sigma_T}$. We claim that we can approximate this space-time by semi-globally hyperbolic space-times. 

We define
$$M^{(n)}=\bigcup_{k=-\infty}^\infty\bigcup_{l=0}^{2^n-1}D\left(\Sigma_{k+\frac{l}{2^n}}\right)\subset M,$$
which clearly yields a nested sequence $$M^{(n)}\subset M^{(n+1)}.$$ Furthermore for each $x\in M$ with $T(x)=t$, we can define $t'=\inf(T^{-1}(J^+(x)\cap\partial M))>t$. Choosing $k,l,m$ such that $t<k+\frac{l}{2^n}<t'$, we see that each inextendible causal curve through $x$ must cross $\Sigma_{k+\frac{l}{2^n}}$ and hence $x\in M^{(n)}$. We conclude that
$$\bigcup_{n\in\mathbb{N}}M_n=M.$$

\noindent Now in principle we could consider $E^{(n)}$ again as the set of all (unitarily inequivalent) extended one-particle structures on $M^{(n)}$. Similarly to the case of the black hole evaporation space-time, this yields a chain of surjective *-isomorphisms
$$\kappa^{(n)}:\mathcal{A}_{E^{(n)}}(M^{(n)})\twoheadrightarrow\mathcal{A}_{E^{(n+1)}}(M^{(n+1)}),$$
and allow us to define
$$\mathcal{A}(M)=\left.\mathcal{A}_{E^{(1)}}(M^{(1)})\middle/\bigcup_{n}\ker(\kappa^{(n)}\circ...\circ\kappa^{(1)})\right..$$
Here the quasi-free states $\mathcal{A}(M)$ can be characterized by the one-particle structures $(K,H)\in E$ on $M$.\footnote{Note that, if non-empty, $E$ can be identified with the inverse limit $\lim_{\leftarrow}E^{(n)}$.}

The construction outlined above is actually inequivalent to the universal extension algebra as constructed in \cite{beniniAlgebraicQuantumField2018}. Firstly $\mathcal{A}(M)$ is constructed such that there exist generators $\{\hat\phi(f):f\in\mathcal{D}(M)\}\in\mathcal{A}(M)$, where $[\hat\phi(f),\hat\phi(g)]$ is in the center of $\mathcal{A}(M)$. This is not true for the universal extension algebra. Another difference is that while the universal extension algebra is constructed to reduce to the standard quantum field algebra on each globally hyperbolic causally convex $U\subset M$, in the case of $\mathcal{A}(M)$ this is only satisfied when $U\subset D(\Sigma_T)$ for some $T\in\mathbb{R}$. Here we see that the fact that definition \ref{def:scalfield} depends on the choice of a semi-Cauchy time-function can actually be a strength instead of a weakness. The construction of \cite{beniniAlgebraicQuantumField2018} as well as the construction above can both be used for implementing local boundary conditions such as the Dirichlet condition (see \cite{dappiaggiCasimirEffectPoint2015}), however if one wants to implement non-local conditions, such as periodic boundary conditions, the former construction can no longer be used. In the case of periodic boundary conditions, where we assume our space-time $M$ to have two diffeomorphic disjoint time-like boundaries $\partial M=B_1\cup B_2$, an identification of these boundaries $\iota:B_1\rightarrow B_2$ induces a new effective causal structure $\leq_\iota$ on the space-time. In general, a causally convex $U\subset M$ need not be causally convex with respect to $\leq_\iota$, which is why once cannot expect a theory respecting these boundary conditions to reduce to the standard theory on $U$ and hence the construction of \cite{beniniAlgebraicQuantumField2018} is not suitable. However, if one can find a time-function $t:M\rightarrow \mathbb{R}$ that is compatible with $\leq_\iota$, i.e. for each $x\leq_\iota x'$ we have $t(x)\leq t(x')$, the construction outlined above is sufficiently versatile to be compatible with these boundary conditions.

However, both constructions generally do not have complete dynamics. Of course in our construction we did not make reference to a particular choice of boundary conditions, so first these will need to actually be imposed by dividing out a further ideal from $\mathcal{A}(M)$. As in \cite{beniniAlgebraicQuantumField2018}, we can construct such an ideal $\mathcal{I}_{G_\pm}$ from an adjoint pair of retarded and advanced propagators $(G_+,G_-)$ satisfying the desired boundary conditions. Here it should be noted that for this algebra $\mathcal{A}_{G_\pm}(M)=\left.\mathcal{A}(M)\middle/\mathcal{I}_{G_\pm}\right.$ to admit any states, we need to make sure that there exists an extended one-particle structure $(K,H)\in E$ such that
$$\int M dV f(G_+-G_-)g=2Im\left(\langle Kf,Kg\rangle\right).$$
As was also concluded in \cite{beniniAlgebraicQuantumField2018}, if one compares the resulting algebra to explicit constructions of quantum field theories, such as made in \cite{dappiaggiCasimirEffectPoint2015}, one finds that the resulting algebra $\mathcal{A}_{G_\pm}(M)$ does not capture all relevant degrees of freedom, only those that can be localized in the interior. This has as a consequence that the dynamics on $\mathcal{A}_{G_\pm}(M)$ is still not complete. In general, one needs to also consider boundary degrees of freedom to construct a complete theory on space-times with time-like boundaries.
\section{Concluding remarks}
We have seen that one can construct linear scalar quantum field theories on various semi-globally hyperbolic space-times. Using the constructions of section \ref{sec:constr} we saw that on any semi-globally hyperbolic space-time we can define a scalar field algebra that admits states if and only if one could define a compatible chain of one-particle structures on that space-time. Nevertheless we have also seen that in some cases, such as in the case of the evaporating black hole space-time, one can construct scalar field algebras, but it is a priori unclear if such an algebra admits any state. On the one hand this is very much in the spirit of algebraic quantum field theory, where one aims to disentangle purely algebraic considerations and questions concerning states and representations. Arguably, the observation that on non-globally hyperbolic space-times, such as the black hole evaporation space-time, constructing an algebra is not so problematic, but that the problems mainly arise when one wants to find physically reasonable states, fits very well in this tradition. On the other hand, the methods that we used to arrive at this construction explicitly made use of one-particle structures and representations of the to be constructed (sub)algebras. This means that, unlike for the well-established construction of linear scalar quantum field theory on globally hyperbolic space-times, the separation between algebras and representations has not been made very cleanly in this work. Whether this is a shortcoming of our constructions, or rather hints at the fact that the conceptual separation of algebras and their representation is a luxury that one cannot always afford, is, as far as the author of this paper is concerned, open for debate.
	
	\section{Acknowledgements}
Firstly I'd like to thank Prof. Dr. Klaas Landsman, under whose supervision I wrote my Master thesis \cite{janssendaanQuantumFieldsNonGlobally2019} that became the foundations of this paper. Secondly I want to thank my doctoral advisor, Prof. Dr. Rainer Verch, under whose supervision I was able to expand on this thesis and write this paper. I also want to thank the Deutsche Forschungsgemeinschaft for their financial support of my doctoral project as part of the Research Training Group 2522: ``Dynamics and Criticality in Quantum and Gravitational Systems" (\url{http://www.rtg2522.uni-jena.de/}).
	\newpage
	\begin{appendices}
\section{Recalling a construction on globally hyperbolic space-times}
\label{sec:qftapp}
In order to construct quantum field theories on semi-globally hyperbolic space-times, we need to recall some facts about the linear scalar (quantum) field on globally hyperbolic space-times. On these space-time \eqref{eq:KG} has a well posed initial value problem, as proven for instance in \cite{barWaveEquationsLorentzian2007}.
\begin{theorem}
	\label{thm:cauchy_prob}
	Let $M$ globally hyperbolic, $\Sigma$ a smooth acausal Cauchy surface, $\mathbf{n}\in \mathfrak{X}(M)$ normal to $\Sigma$, $f\in\mathcal{D}(M)$ and $\varphi,\pi\in\mathcal{D}(\Sigma)$. Then there is a unique $F\in\mathcal{E}(M)$ such that 
	$$(\Box-V)F=f,\;F\restriction_\Sigma=\varphi,\;\nabla_\mathbf{n}F\restriction_\Sigma=\pi.$$
	Furthermore $F\in J(\supp(f)\cup\supp(\varphi)\cup\supp(\pi))$.\footnote{To clarify some (standard) notation, $\mathfrak{X}(M)$ is the set of real vectorfields on $M$, i.e. smooth sections of the tangent bundle $TM$, $\mathcal{D}(M)$ is the vectorspace of smooth real scalar functions on $M$ with compact support (usually referred to as test functions and given an appropriate topology which we do not consider here), $\mathcal{E}(M)$ is the space of arbitrary real smooth functions on $M$ and $J(U)$ is the set of all points in $M$ connected to some point $p\in U\subset M$ via a causal curve.}
\end{theorem}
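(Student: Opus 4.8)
The plan is to prove Theorem \ref{thm:cauchy_prob} as the standard well-posedness result for a normally hyperbolic operator on a globally hyperbolic background, via the energy method. Writing $P=\Box-V$, the decisive structural fact is that $P$ has principal symbol determined by the metric, so its characteristics are exactly the null cones; this is what will tie the analytic propagation of $F$ to the causal relation $J$. First I would fix a Geroch splitting $M\cong\mathbb{R}\times\Sigma$ adapted to a Cauchy time-function whose zero level set is $\Sigma$, and rewrite $(\Box-V)F=f$ in terms of the lapse and shift of this splitting as a second-order hyperbolic equation in the time coordinate $t$. Passing to the variables $(F,\partial_t F,\nabla_\Sigma F)$ then recasts the equation as a first-order symmetric hyperbolic system, for which the prescribed data $(\varphi,\pi)$ on $\Sigma$ supply exactly the initial data on the slice $t=0$.

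I would establish uniqueness and the support statement together and before existence, since both flow from a single energy estimate. Contracting the stress--energy tensor of $P$ with the future-directed timelike vector field $\partial_t$ produces a current whose divergence is controlled by $f$ together with zeroth-order ($V$ and curvature) terms. Integrating this divergence identity over a truncated domain of dependence, a lens-shaped region capped below by $\Sigma$ and above by a later Cauchy surface $\Sigma_t$, and applying the divergence theorem, the flux contributions on the null portions of the boundary carry a favorable sign by the dominant-energy property of the wave stress tensor. A Gr\"onwall inequality then bounds the energy on $\Sigma_t$ by the initial energy plus an $L^2$-type norm of $f$; vanishing initial data and source force $F\equiv 0$ on the whole domain of dependence, giving uniqueness, while localizing the same argument to the domain of dependence of a compact set yields finite propagation speed, hence $\supp F\subset J(\supp f\cup\supp\varphi\cup\supp\pi)$.

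For existence I would first solve the problem locally: the energy estimate above, combined with a standard duality or mollification argument from the theory of first-order symmetric hyperbolic systems, produces a smooth solution on a slab $(-\varepsilon,\varepsilon)\times\Sigma$, with smooth dependence on $f,\varphi,\pi$. The finite-propagation-speed property lets these local solutions be patched, since on overlaps they must agree by uniqueness, yielding a solution on a maximal time interval. The hard part will be showing this interval is all of $\mathbb{R}$, i.e. ruling out finite-time breakdown, and this is precisely where global hyperbolicity (rather than mere local solvability) is indispensable: because every inextendible causal curve meets $\Sigma$ exactly once, each point of $M$ lies in the domain of dependence of $\Sigma$, and the energy estimate on $J^-(\Sigma_t)$ furnishes an a priori bound on $F$ at each slice depending only on the fixed data and the compactly supported $f$. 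This a priori control prevents blow-up and extends the local solution to a global $F\in\mathcal{E}(M)$ satisfying the required support bound, completing the proof.
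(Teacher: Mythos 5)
The paper never proves Theorem \ref{thm:cauchy_prob}: it is imported as a known result, with the proof deferred to \cite{barWaveEquationsLorentzian2007}, so the only meaningful comparison is against that reference. Your proposal is a correct outline of the classical energy-method proof, and it is a genuinely different route from the cited one. You split $M\cong\mathbb{R}\times\Sigma$ adapted to $\Sigma$ (note this adapted splitting, with the prescribed acausal $\Sigma$ as a level set, is itself a nontrivial Bernal--S\'anchez-type refinement of Geroch's theorem), reduce to a first-order symmetric hyperbolic system, and run energy estimates over lens-shaped regions to get uniqueness and finite propagation speed (hence $\supp F\subset J(\supp(f)\cup\supp(\varphi)\cup\supp(\pi))$), then obtain local existence by duality and globalize using $D(\Sigma)=M$. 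Two points are compressed but standard and fillable: smoothness of $F$ requires commuting derivatives through the system and running the estimate at all orders, and for a \emph{linear} system ``finite-time breakdown'' is not really the issue --- the local existence time depends only on the geometry, so one simply exhausts $M$ by the domains of dependence of compact subsets of $\Sigma$, whose intersections with $J^{\mp}$ of compact sets are compact precisely by global hyperbolicity. By contrast, \cite{barWaveEquationsLorentzian2007} proves the theorem by constructing local fundamental solutions from Riesz distributions and the Hadamard parametrix expansion, then gluing. Your approach is more elementary and more robust (it survives low-regularity coefficients and nonsmooth data), but the parametrix route buys by-products the paper actually relies on downstream: the advanced/retarded Green operators $G_{\pm}$ of the corollary immediately following the theorem, and the short-distance Hadamard structure underlying the Hadamard states invoked in section \ref{sec:punctured}. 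With the energy method these must be extracted afterwards (e.g.\ $G_{\pm}f$ as the solution with vanishing data on a Cauchy surface in the past/future of $\supp(f)$), which is easy but should be said explicitly since the paper's corollary deduces $G_{\pm}$ directly from the theorem.
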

\noindent It is this (classical) result that motivates the restriction to globally hyperbolic space-times when constructing a quantum field theory. In part, this is related to the following consequence of the theorem above.
\begin{corollary}
	Let $M$ globally hyperbolic, there are two linear maps $$G_\pm:\mathcal{D}(M)\rightarrow \mathcal{E}(M),$$
	referred to as the \textup{advanced} and \textup{retarded propagator}, uniquely defined by 
	$$(\Box-V)G_\pm(f)=f,\; \supp(G_\pm(f))\subset J^\pm(\sup(f)).$$
\end{corollary}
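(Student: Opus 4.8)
The plan is to read off both propagators directly from the well-posedness of the Cauchy problem in Theorem \ref{thm:cauchy_prob}, treating $G_+$ in detail and obtaining $G_-$ by a time-reversed argument.

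First I would construct $G_+(f)$. Given $f\in\mathcal{D}(M)$, its support is compact, so using a Cauchy time-function $T$ (available since $M$ is globally hyperbolic) I can pick a Cauchy surface $\Sigma=T^{-1}(\{t_0\})$ lying strictly to the past of $\supp(f)$, i.e. with $\supp(f)\subset I^+(\Sigma)$. Applying Theorem \ref{thm:cauchy_prob} with source $f$ and vanishing initial data $\varphi=\pi=0$ yields a unique $F\in\mathcal{E}(M)$ with $(\Box-V)F=f$, and I set $G_+(f):=F$. Independence of the choice of $\Sigma$, as well as linearity of $G_+$, both follow from the uniqueness clause of the theorem: for two test functions one simply selects a common Cauchy surface to the past of both supports, and then $G_+(f)+G_+(g)$ and $G_+(f+g)$ solve the same zero-data Cauchy problem for $f+g$, hence coincide.

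Next I would establish the support property $\supp(G_+(f))\subset J^+(\supp(f))$. Theorem \ref{thm:cauchy_prob} already gives the two-sided bound $\supp(F)\subset J(\supp(f))$, so the real work is discarding the past part. For a point $p\notin J^+(\supp(f))$ --- equivalently $J^-(p)\cap\supp(f)=\emptyset$ --- I would take $p'$ slightly to the future of $p$, still outside the closed set $J^+(\supp(f))$, and restrict attention to the causally convex, globally hyperbolic region $\Omega=I^+(\Sigma)\cap I^-(p')$. On $\Omega$ the source $f$ vanishes and, since $F\equiv 0$ on $J^-(\Sigma)$ by the uniqueness argument applied below $\Sigma$, $F$ has vanishing Cauchy data on a Cauchy surface of $\Omega$ near $\Sigma$. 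Uniqueness of the homogeneous Cauchy problem on $\Omega$ then forces $F\equiv 0$ on $\Omega$, which is a neighbourhood of $p$, so $p\notin\supp(F)$.

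For uniqueness of $G_+$ itself, suppose $F_1,F_2$ both satisfy $(\Box-V)F_i=f$ with support in $J^+(\supp(f))$; then their difference solves the homogeneous equation and vanishes near any Cauchy surface to the past of $\supp(f)$, hence has zero Cauchy data there and vanishes everywhere by Theorem \ref{thm:cauchy_prob}. The advanced propagator $G_-$ is obtained identically, choosing the auxiliary Cauchy surface to the future of $\supp(f)$ and exchanging the roles of past and future throughout. I expect the one-sided support refinement to be the main obstacle: the cited theorem only delivers the symmetric bound $J(\supp(f))$, and promoting it to $J^+(\supp(f))$ is exactly where the causal geometry of globally hyperbolic space-times --- closedness of $J^\pm$ of compact sets, and propagation of the zero solution through domains of dependence --- must be used.
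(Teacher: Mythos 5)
The paper never proves this statement: it is recalled background, stated as a corollary of Theorem \ref{thm:cauchy_prob} and implicitly deferred to \cite{barWaveEquationsLorentzian2007}, so your proposal has to be judged on its own merits. Your overall strategy is exactly the one the word ``corollary'' suggests, and most of it is sound: the construction of $G_+(f)$ by solving the Cauchy problem with zero data on a Cauchy surface $\Sigma$ to the past of $\supp(f)$, the linearity and $\Sigma$-independence arguments, and the uniqueness argument for $G_+$ are all correct, modulo one small repair. You cannot apply uniqueness literally ``below $\Sigma$'': the region $I^-(\Sigma)$ does not contain $\Sigma$, and Theorem \ref{thm:cauchy_prob} needs a globally hyperbolic region that has $\Sigma$ as a Cauchy surface \emph{and} on which the source vanishes. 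The fix is standard: pick an intermediate Cauchy surface $\Sigma'=T^{-1}(\{t'\})$ with $t_0<t'<\min T(\supp(f))$ and apply the theorem on $I^-(\Sigma')$, which is open, causally convex (hence globally hyperbolic), contains $\Sigma$ as a Cauchy surface, and meets $\supp(f)$ nowhere; this gives $F\equiv 0$ on the open slab $I^-(\Sigma')$, which is also what you need later (vanishing on the closed set $J^-(\Sigma)\cup\Sigma$ alone gives no Cauchy data on any surface lying strictly above $\Sigma$).

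The genuine gap is in the support step, at the sentence ``$F$ has vanishing Cauchy data on a Cauchy surface of $\Omega$ near $\Sigma$.'' No such surface is available in the form your argument needs: for no $t_1>t_0$ is the level set $T^{-1}(\{t_1\})\cap\Omega$ a Cauchy surface of $\Omega=I^+(\Sigma)\cap I^-(p')$. Already in Minkowski space there are inextendible causal curves in $\Omega$ that cross $\Sigma$ just inside $\partial I^-(p')$ and exit through $\partial I^-(p')$ at a point $w$ with $T(w)<t_1$ (null lines grazing the ``corner'' where the past cone of $p'$ meets $\Sigma$); these never meet the level set, so Theorem \ref{thm:cauchy_prob} simply cannot be invoked on $\Omega$ with such a surface. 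The conclusion $F\restriction_\Omega=0$ is true, but reaching it requires a causality-theoretic ingredient that your proof asserts rather than supplies. The standard repair is a domain-of-dependence (finite-propagation-speed) argument: set $S=T^{-1}(\{t_1\})\cap I^-(p')$ with $t_0<t_1<\min(t',T(p))$; since $I^-(p')$ is a past set, every past-inextendible causal curve in $M$ through a point of $\Omega$ with $T>t_1$ stays in $I^-(p')$ until it crosses $\Sigma$, hence crosses $S$, so $p$ lies in the interior of the domain of dependence $D(S)$; one then needs the standard fact that $\mathrm{int}\,D(S)$ is globally hyperbolic with Cauchy surface $S$, after which Theorem \ref{thm:cauchy_prob} applied there (with the slab vanishing providing zero data on $S$, and $\supp(f)\cap I^-(p')=\emptyset$ providing homogeneity) yields $F(p)=0$. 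Alternatively one can construct a Cauchy surface of $\Omega$ inside the slab, but such surfaces must bend down toward the corner (``tents'', not level sets), and their existence is an equivalent amount of work. Either way, this domain-of-dependence input is precisely the point where, as you yourself observe, the causal geometry must enter --- and it is the one step your write-up does not actually carry out.
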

\noindent Using these propagators, one can make the following definition:
\begin{definition}
	The \textup{causal propagator} $$\Delta:\mathcal{D}(M)\rightarrow\mathcal{E}(M),$$ is defined as $$\Delta(f)=G_+(f)-G_-(f).$$
\end{definition}
\noindent This causal propagator plays a key role in the construction of the linear scalar quantum field, in particular in fixing the commutator relations of the algebra of field observables. It has the following properties, as proven in \cite{waldQuantumFieldTheory1994}.
\begin{theorem}
	\label{thm:causprop}
	Let $M$ globally hyperbolic, $\Sigma$ a smooth Cauchy surface, $\mathbf{n}\in\mathfrak{X}(M)$ normal to $\Sigma$. Then
	\begin{enumerate}
		\item The range of the causal propagator $\mathcal{S}_{s.c.}(M):=\Delta(\mathcal{D}(M))$ is the set of all spatially compact solutions to the equation of motion, i.e. $F\in  \mathcal{S}_{s.c}(M)$ iff 
		$$(\Box-V)F=0,\;F\restriction_\Sigma,\nabla_\mathbf{n}F\restriction_\Sigma\in\mathcal{D}(\Sigma).$$
		\item The kernel of $\Delta$ is the linear subspace $(\Box-V)\mathcal{D}(M)\subset \mathcal{D}(M)$.\footnote{Combining point 1 and 2, we see that $\Delta$ defines a linear isomorphism between $\frac{\mathcal{D}(M)}{(\Box-V)\mathcal{D}(M)}$ and $\mathcal{S}_{s.c}(M)$.}
		\item For $f\in\mathcal{D}(M)$ and $F\in\mathcal{E}(M)$ with $(\Box-V)F=0$, we have
		$$\int_M dV fF=\int_\Sigma dA(\Delta(f)\nabla_\mathbf{n}F-F\nabla_\mathbf{n}\Delta(f))\restriction_\Sigma.$$
		\item Defining for $f,g\in\mathcal{D}(M)$ $$\sigma_M(f,g)=\int_M dV f\Delta(g),$$ we have 
		$$\sigma_M(f,g)=\int_\Sigma dA(\Delta(f)\nabla_\mathbf{n}\Delta(g)-\Delta(g)\nabla_\mathbf{n}\Delta(f))\restriction_\Sigma.$$
		As a consequence $\sigma_M$ defines a symplectic form on $\frac{\mathcal{D}(M)}{(\Box-V)\mathcal{D}(M)}$, i.e. a non-degenerate anti-symmetric bi-linear map.
	\end{enumerate}
\end{theorem}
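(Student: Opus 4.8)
The plan is to prove the four assertions in order, since each feeds on the previous one, using only the existence and one-sided support properties of $G_\pm$ from the preceding corollary together with the uniqueness half of Theorem~\ref{thm:cauchy_prob}. For part 1, I would first note $(\Box-V)\Delta(f)=(\Box-V)G_+(f)-(\Box-V)G_-(f)=f-f=0$, so the range consists of solutions, and since $\supp(\Delta(f))\subset J(\supp(f))$ with $J(\supp(f))\cap\Sigma$ compact by global hyperbolicity, the Cauchy data $\Delta(f)\restriction_\Sigma$ and $\nabla_\mathbf{n}\Delta(f)\restriction_\Sigma$ lie in $\mathcal{D}(\Sigma)$; this gives $\Delta(\mathcal{D}(M))\subset\mathcal{S}_{s.c.}(M)$. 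Surjectivity is the delicate direction: given a spatially compact solution $F$, choose Cauchy surfaces $\Sigma_1\subset I^-(\Sigma_2)$ and a smooth time-interpolating $\chi$ equal to $0$ to the past of $\Sigma_1$ and $1$ to the future of $\Sigma_2$, and set $f=(\Box-V)(\chi F)$. Since $(\Box-V)F=0$, this $f$ is supported where $d\chi\neq0$ intersected with the spatial support of $F$, hence compact. One then verifies $G_+(f)=\chi F$ and $G_-(f)=(\chi-1)F$ by observing that each side solves the same inhomogeneous equation and has vanishing data on a Cauchy surface sufficiently far to the past (resp. future) of $\supp(f)$, so uniqueness forces equality; subtracting gives $\Delta(f)=F$.

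For part 2, the inclusion $(\Box-V)\mathcal{D}(M)\subset\ker\Delta$ reduces to checking $G_\pm((\Box-V)g)=g$ for $g\in\mathcal{D}(M)$: the difference solves the homogeneous equation and vanishes far in the past (resp. future), by compactness of $\supp(g)$ and the support property of $G_\pm$, so uniqueness yields equality, whence $\Delta(\Box-V)g=g-g=0$. Conversely, if $\Delta(f)=0$ then $u:=G_+(f)=G_-(f)$ has support in the compact set $J^+(\supp f)\cap J^-(\supp f)$, so $u\in\mathcal{D}(M)$, and $(\Box-V)u=(\Box-V)G_+(f)=f$ exhibits $f\in(\Box-V)\mathcal{D}(M)$.

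For part 3 I would split $M$ along $\Sigma$ into its future part $M^+$ and past part $M^-$. On $M^-$ apply the divergence theorem to the current $J^\mu=(G_+ f)\nabla^\mu F-F\nabla^\mu(G_+ f)$, whose divergence equals $-fF$ (using $\Box(G_+ f)=VG_+ f+f$ and $\Box F=VF$); the flux through a sufficiently past Cauchy surface vanishes because $G_+(f)$ is future-supported, leaving only the $\Sigma$-term. Symmetrically, on $M^+$ I use the past-supported $G_-(f)$. Adding the two contributions, the $G_+$ and $G_-$ boundary terms combine into $\Delta(f)$, yielding the Green identity up to the sign fixed by the $(-+++)$ signature and the orientation of $\mathbf{n}$. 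The two points I expect to be the main obstacle are exactly the cutoff construction in part 1 and, here, the rigorous justification that the boundary fluxes at infinity vanish — this is where spatial compactness of $F$ on each slice and the one-sided support of $G_\pm(f)$ are both essential; the remainder is bookkeeping.

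Finally, part 4 follows as a corollary. Applying part 3 with $F=\Delta(g)$, which is a solution by part 1, gives the Cauchy-surface formula for $\sigma_M(f,g)$ directly. Antisymmetry comes from the adjoint relation $\int_M f\,G_+(g)\,dV=\int_M g\,G_-(f)\,dV$, itself a one-line Green-identity computation since the relevant current is supported in the compact set $J^+(\supp g)\cap J^-(\supp f)$; this yields $\int_M f\,\Delta(g)\,dV=-\int_M g\,\Delta(f)\,dV$. By part 2, $\sigma_M$ vanishes whenever either argument lies in $(\Box-V)\mathcal{D}(M)$, so it descends to $\mathcal{D}(M)/(\Box-V)\mathcal{D}(M)$; and if $\int_M g\,\Delta(f)\,dV=0$ for all $g\in\mathcal{D}(M)$ then $\Delta(f)=0$ as a smooth function, so $[f]=0$ by part 2, establishing non-degeneracy.
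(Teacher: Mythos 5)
Your proof is correct; note that the paper itself gives no proof of Theorem \ref{thm:causprop} at all---it is recalled as standard material with a citation to \cite{waldQuantumFieldTheory1994}---and your argument (the cutoff-function trick for surjectivity of $\Delta$, the compact-overlap support arguments for the kernel, and the Green-identity/divergence-theorem computations split along $\Sigma$) is precisely the standard proof found in that reference. The only slip is a harmless one in your commentary on part 3: there $F$ is an arbitrary smooth solution, so spatial compactness of $F$ is neither available nor needed---the vanishing of the boundary fluxes rests solely on the compactness of $\supp(G_\pm(f))\cap J^\mp(\Sigma)$, which is what your argument in fact uses.
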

\noindent The symplectic space of classical solutions given above forms the basis for the construction of the (real) scalar quantum field. This construction, which generalizes the more familiar canonical quantization of the scalar field on Minkowski space, can be traced back to \cite{dimockAlgebrasLocalObservables1980}, albeit in a Weyl algebra formulation. Here we will give a formulation more similar to the construction as in \cite[Chapter 3]{brunettiAdvancesAlgebraicQuantum2015}.

Given a certain potential, the algebra of linear observables on $M$ is defined as the so-called CCR algebra of the symplectic space $$\left(\frac{\mathcal{D}(M)}{(\Box-V)\mathcal{D}(M)},\sigma_M\right).$$ Informally, this is given as the unital *-algebra $\mathcal{A}(M)$ generated by elements 
$$\{\hat{\phi}(f):f\in\mathcal{D}(M)\},$$ subject to the following relations:
\begin{align*}
	\forall f,g\in \mathcal{D}(M), a\in\mathbb{R}:\hat{\phi}(af+g)=a\hat{\phi}(f)+\hat{\phi}(g),\\
	\forall f\in \mathcal{D}(M):\hat{\phi}(f)^*=\hat{\phi}\left(f\right),\\
	\forall f,g\in \mathcal{D}(M):[\hat{\phi}(f),\hat{\phi}(g)]=i\sigma_M(f,g)\mathds{1},\\
	\forall f\in \mathcal{D}(M):\hat{\phi}((\Box-V)f)=0.
\end{align*}

\noindent This is made precise in the following definition, which for future purposes constructs $\mathcal{A}(M)$ in two steps.
\begin{definition}
	\label{def:linscal}
	For a (not necessarily globally hyperbolic) space-time $M$, we define the \textup{pre-field algebra of real linear scalar observables} $\mathcal{B}(M)$ in the following way. Let $\langle\mathcal{D}(M)\rangle$ the unital free *-algebra over $\mathbb{C}$ generated by $\mathcal{D}(M)$.\footnote{The unital free *-algebra can be seen as the algebra of polynomials over $\mathbb{C}$ with the noncommuting variables $\langle f\rangle$ and $\langle f\rangle^*$ for $f\in\mathcal{D}(M)$.} Define $\mathcal{I}\subset\langle\mathcal{D}(M)\rangle$ as the smallest ideal such that for each $f,g\in\mathcal{D}(M)$ and $a\in\mathbb{C}$ we have
	$$\langle af+g\rangle-a\langle f\rangle-\langle g\rangle\in\mathcal{I},$$
	$$\langle f\rangle-\langle f\rangle^*\in\mathcal{I}.$$
	Define $$\mathcal{B}(M):=\left.\langle\mathcal{D}(M)\rangle\middle/\mathcal{I}\right.,$$
	and denote for $f\in\mathcal{D}(M)$ $$\hat\psi(f):=[\langle f\rangle]_{\mathcal{I}}.$$
	
	For $M$ globally hyperbolic, we can define the \textup{linear scalar quantum field algebra} $\mathcal{A}(M)$. Let $\mathcal{J}\subset \mathcal{B}(M)$ be the smallest ideal such that for each $f,g\in\mathcal{D}(M)$
	$$[\hat{\psi}(f),\hat{\psi}(g)]-i\sigma_M(f,g)\mathds{1}\in \mathcal{J},$$
	and for $f\in\mathcal{D}(M)$ such that for all $g\in\mathcal{D}(M)$
	$$\sigma_M(f,g)=0,$$
	we also have
	$$\psi(f)\in \mathcal{J}.$$
	We now set
	$$\mathcal{A}(M):=\left.\mathcal{B}(M)\middle/\mathcal{J}\right..$$
	For each $f\in\mathcal{D}(M)$ denote $$\hat\phi(f):=[\hat\psi(f)]_{\mathcal{J}}\in\mathcal{A}(M).$$
	
	Both $\mathcal{B}(M)$ and (for $M$ globally hyperbolic) $\mathcal{A}(M)$ are given a netstructure where for $U\subset M$ open, $\mathcal{B}(M,U)\subset\mathcal{B}(M)$ the smallest unital *-subalgebra such that for each $f\in\mathcal{D}(M)$ with $\text{supp}(f)\subset U$ we have $\psi(f)\in\mathcal{B}(M;U)$, from which can we define for $M$ globally hyperbolic
	$$\mathcal{A}(M;U)=[\mathcal{B}(M;U)]_\mathcal{J}.$$
\end{definition}	
\noindent Observe that the pre-field algebra contains no information on dynamics whatsoever. Both the dynamics of the field and commutation relations are implemented by the quotient $\mathcal{A}(M):=\left.\mathcal{B}(M)\middle/\mathcal{J}\right..$, where we should note that if $\sigma_M(f,g)=0$ for all $g\in\mathcal{D}(M)$, this is equivalent to $\Delta f=0$ and hence also to $f\in(\Box-V)\mathcal{D}(M)$, as we saw in theorem \ref{thm:causprop}.

While there are many equivalent ways of defining this algebra of observables, just as there are many ways of describing effectively the same symplectic space of classical solutions, the construction above lends itself very well to the interpretation of the linear observables as being generated by an `operator valued distribution' with integral kernel $\hat\phi(x)$ satisfying the equation of motion.\\

\noindent It should be noted that the net structure $\mathcal{A}(M;\cdot)$ (i.e. a net of *-algebra's indexed by open subsets $U\subset M$ satisfying $\mathcal{A}(M;U)\subset \mathcal{A}(M;V)$ whenever $U\subset V$), has the following convenient properties, as shown to hold in \cite[Chapter 3]{brunettiAdvancesAlgebraicQuantum2015}.
\begin{theorem}
	\label{thm:net-struc}
	Let $M$ globally hyperbolic, $U,V\subset M$. Then 
	\begin{enumerate}
		\item If $U$ and $V$ not causally related, i.e. there is no causal curve connecting $U$ and $V$, then $$[\mathcal{A}(M;U),\mathcal{A}(M;V)]=\{0\}.$$ This property is referred to as \textup{Einstein causality}.
		\item If there is a Cauchy surface $\Sigma\subset M$ such that $\Sigma\subset U$, then $$\mathcal{A}(M;U)=\mathcal{A}(M).$$ This is known as the \textup{time-slicing property}.
		\item If $U$ causally convex (and as a result, globally hyperbolic in itself, see \cite[prop. 6.6.2]{hawkingLargeScaleStructure1973}), then there is a *-isomorphism 
		$$\iota:\mathcal{A}(U)\rightarrow\mathcal{A}(M;U),$$
		where $\iota(\hat\phi_U(f))=\hat\phi_M(f)$.
	\end{enumerate}
\end{theorem}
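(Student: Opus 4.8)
The plan is to reduce all three statements to properties of the generators $\hat\phi(f)$, using that each $\mathcal{A}(M;U)$ is generated, as a unital $*$-subalgebra, by $\{\hat\phi(f):\supp(f)\subset U\}$, together with the data of theorem \ref{thm:causprop}: the support properties of $G_\pm$, the identity $\ker\Delta=(\Box-V)\mathcal{D}(M)$, and the representation $\sigma_M(f,g)=\int_M dV\, f\,\Delta(g)$. In each part the content lies entirely at the level of generators, and is then propagated to the subalgebras by the fact that commuting with (resp.\ lying inside) a fixed subalgebra is an algebra condition.

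For \textbf{Einstein causality} it suffices to show $[\hat\phi(f),\hat\phi(g)]=0$ whenever $\supp(f)\subset U$ and $\supp(g)\subset V$, since the centralizer of a fixed element is a subalgebra, so commutation of generators upgrades to commutation of the generated subalgebras. By the commutation relation this bracket equals $i\sigma_M(f,g)\mathds{1}=i\left(\int_M dV\, f\,\Delta(g)\right)\mathds{1}$. As $\supp(\Delta(g))\subset J(\supp(g))$ and $U,V$ are causally unrelated, $\supp(f)\cap J(\supp(g))=\emptyset$, so the integral vanishes.

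For the \textbf{time-slicing property} I would show that every generator $\hat\phi(g)$, $g\in\mathcal{D}(M)$, already lies in $\mathcal{A}(M;U)$. Slightly deforming $\Sigma$ to the future and the past yields Cauchy surfaces $\Sigma_\pm\subset U$, and I choose a cutoff $\chi\in\mathcal{E}(M)$ with $\chi\equiv 1$ on $J^+(\Sigma_+)$ and $\chi\equiv 0$ on $J^-(\Sigma_-)$, so that $\supp(d\chi)$ lies in the time strip between the surfaces, inside $U$. Writing $F=\Delta(g)$ (a spatially compact solution), I set $g'=(\Box-V)(\chi F)$; since $(\Box-V)F=0$, the function $g'$ is supported where $\chi$ is non-constant, and $\supp(d\chi)$ meets the spatially compact $\supp(F)$ in a compact set contained in $U$. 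The key computation is that $\chi F$ and $(\chi-1)F$ both solve $(\Box-V)u=g'$ while vanishing respectively to the past of $\Sigma_-$ and to the future of $\Sigma_+$; by uniqueness of the Cauchy problem (theorem \ref{thm:cauchy_prob}) they equal $G_+(g')$ and $G_-(g')$, whence $\Delta(g')=\chi F-(\chi-1)F=F=\Delta(g)$. Thus $g-g'\in\ker\Delta=(\Box-V)\mathcal{D}(M)$, so $\hat\phi(g)=\hat\phi(g')\in\mathcal{A}(M;U)$. I expect this deformation/cutoff argument, feeding on well-posedness of the Cauchy problem, to be the main obstacle, as it is the one step where genuine geometric and PDE input enters rather than formal algebra.

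For the third claim, $U$ causally convex and open in globally hyperbolic $M$ is itself globally hyperbolic, so $\mathcal{A}(U)$ is defined and I set $\iota(\hat\phi_U(f))=\hat\phi_M(f)$ for $f\in\mathcal{D}(U)$. Well-definedness needs $\iota$ to respect the defining relations of $\mathcal{A}(U)$: the potential $V=m+\xi R$ and $\Box$ restrict from $M$, so $(\Box_U-V)f=(\Box_M-V)f$, and causal convexity gives $J^\pm_M(\supp f)\cap U=J^\pm_U(\supp f)$, which by uniqueness forces $G_{M,\pm}(f)\restriction_U=G_{U,\pm}(f)$ and hence $\sigma_U(f,g)=\sigma_M(f,g)$ for $f,g\in\mathcal{D}(U)$. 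Thus $\iota$ extends to a $*$-homomorphism onto $\mathcal{A}(M;U)$, surjectivity being immediate from the definition of the net. Injectivity I would obtain from non-degeneracy of $\sigma_U$, which makes the symplectic space of $U$ embed into that of $M$ and the CCR construction carry this into an injection of algebras (equivalently, the algebra over a non-degenerate symplectic space is simple, so the nonzero homomorphism $\iota$ has trivial kernel); alternatively one may exhibit a faithful Fock representation of $\mathcal{A}(M)$ whose restriction to the image stays faithful on $\mathcal{A}(U)$.
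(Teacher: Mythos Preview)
The paper does not supply its own proof of this theorem; it simply records the result and cites \cite[Chapter 3]{brunettiAdvancesAlgebraicQuantum2015}. Your outline is correct and is precisely the standard argument one finds in that reference: Einstein causality from the support properties of $\Delta$, time-slicing via the cutoff $\chi$ between two nearby Cauchy surfaces and the identity $\Delta(g')=\Delta(g)$, and the isomorphism for causally convex $U$ from the restriction $G_{M,\pm}(f)\restriction_U=G_{U,\pm}(f)$ together with faithfulness of Fock representations (the paper's proposition \ref{prop:fock_faith}). There is nothing to compare against, and no gap to flag beyond the usual care in part 2 that $g'=(\Box-V)(\chi F)$ is indeed compactly supported (only derivatives of $\chi$ survive since $(\Box-V)F=0$) and that the identification $\chi F=G_+(g')$ is justified by applying the Cauchy problem with vanishing data on a surface strictly to the past of $\Sigma_-$.
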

\noindent Historically, these net structures of *-algebras were used as the basis for an axiomatic framework to formalize the notion of quantum field theories on Minkowski space-time, see in particular the Haag-Kastler nets as introduced in \cite{haagAlgebraicApproachQuantum1964}. It was this algebraic approach to quantum field theory that formed the basis for Dimock's generalization of the linear scalar field to arbitrary globally hyperbolic space-times in \cite{dimockAlgebrasLocalObservables1980}. In \cite{fewsterAlgebraicQuantumField2015} and references therein it was then recognized that the mathematical structures of nets of algebras on globally hyperbolic space-times could be further generalized into the framework of category theory. Here it is noted that, as a result of point 3 of theorem \ref{thm:net-struc}, $\mathcal{A}$ can be seen as a functor from the category of globally hyperbolic space-times (referred to as $\cat{Loc}$), where morphisms are given by causally convex embeddings, to the category of *-algebras $\cat{Alg}$, where the morphisms are given by *-homomorphisms. This forms the basis of a general axiomatic framework to describe covariant quantum field theories on arbitrary globally hyperbolic space-times.\\

\noindent Of course where there is a *-algebra associated with a physical theory there is a notion of states. We recall the definition.
\begin{definition}
	Let $\mathcal{A}$ be a unital *-algebra. A \textup{state} is a linear map 
	$$\omega:\mathcal{A}\rightarrow \mathbb{C},$$
	such that $\omega(1)=1$ and for each $a\in \mathcal{A}$ we have $\omega(aa^*)\geq0$.
\end{definition}
\noindent For a globally hyperbolic space-time $M$, states on the algebra $\mathcal{A}(M)$ are uniquely characterized by their $n$-point functions 
$$w_n:\mathcal{D}(M)^n\rightarrow\mathbb{C},$$ where $w_n(f_1,...,f_n)=\omega(\hat\phi(f_1)...\hat\phi(f_n))$. We recall an important class of states for the linear scalar fields.
\begin{definition}
	A state $\omega$ on $\mathcal{A}(M)$ is called quasi-free if there is a positive semi-definite symmetric bilinear form (i.e. a real pre-inner product) $\mu:\mathcal{D}(M)^2\rightarrow\mathbb{R}$ satisfying
	$$\vert\sigma_M(f,g)\vert^2\leq 4\mu(f,f)\mu(g,g),$$ such that for each $f\in\mathcal{D}(M)$ we have
	$$\omega\left(\exp(i\hat\phi(f))\right)=\exp(-\frac{1}{2}\mu(f,f)),$$
	where both sides of the equation should be interpreted as a power series expansion and the equality should hold for each order of $f$.
\end{definition}
\noindent It should be noted that via polarization formulas all $n$-point functions of a quasi-free state are uniquely defined by the relation above. In particular 
$$w_2(f,g)=\mu(f,g)+\frac{i}{2}\sigma_M(f,g).$$
Each quasi-free state can be uniquely associated with a one-particle structure.
\begin{definition}
	Given the presymplectic space $(\mathcal{D}(M),\sigma_M)$ a \textup{one-particle structure} $(K,H)$ is given by a Hilbert space $H$ and a real linear map $K:\mathcal{D}(M)\rightarrow H$ such that $K(\mathcal{D}(M))+iK(\mathcal{D}(M))$ is dense in $H$, $K\circ(\Box-V)=0$ and $$2Im(\langle Kf,Kg\rangle)=\sigma_M(f,g).$$
\end{definition}
\noindent Clearly each one-particle structure defines a quasi-free state via $$\mu(f,g)=Re(\langle Kf,Kg\rangle),$$ and hence $w_2(f,g)=\langle Kf,Kg\rangle$. That each quasi-free state defines a unique one-particle structure, is proven in \cite[Appendix A]{kayTheoremsUniquenessThermal1991}.\\

\noindent For a general state $\omega$ on an algebra $\mathcal{A}$ one can find a unique representation (up to unitary equivalence) $(\mathfrak{H}_\omega,\mathfrak{D}_\omega,\pi_\omega,\Omega_\omega)$ such that $\mathfrak{H}_\omega$ a Hilbert space, $\mathfrak{D}_\omega\subset \mathfrak{H}_\omega$ a dense subset, $\pi_\omega:\mathcal{A}\rightarrow L\left(\mathfrak{D}_\omega\right)$ a *-homomorphism and $\Omega_\omega\in\mathfrak{D}_\omega$ a unit-vector with $\pi_\omega(\mathcal{A})\Omega_\omega$ dense in $\mathfrak{H}_\omega$ such that
$$\omega(a)=\langle \Omega,\pi_\omega(a)\Omega\rangle.$$This representation can be obtained using the GNS construction, see for instance \cite[Chapter 5]{brunettiAdvancesAlgebraicQuantum2015}. For quasi-free states on our algebra $\mathcal{A}(M)$ these representations take a special form.
\begin{definition}
	\label{def:fock}
	For $M$ globally hyperbolic, a \textup{(bosonic) Fock-space representation} $(\mathfrak{H},\pi)$ of $\mathcal{A}(M)$ is given by a Hilbert space of the form
	$$\mathfrak{H}=\bigoplus_{n=0}^\infty S_n\left(H^{\otimes n}\right),$$ where $(K,H)$ is some one-particle structure on $(\mathcal{D}(M),\sigma_M)$, $H^{\otimes 0}=\mathbb{C}$ and $S_n:H^{\otimes n}\rightarrow H^{\otimes n}$ the symmetrization operator, where the inner product $\langle.,.\rangle_n$ on $S_n\left(H^{\otimes n}\right)$ is given by
	$$\langle a,b\rangle_0=\overline{a}b,$$
	and for $n>0$
	\begin{multline*}\left\langle \sum_{i_1,...,i_n}a_{i_1,...,i_n}\psi_{i_1}...\psi_{i_n},\sum_{j_1,...,j_n}b_{j_1,...,j_n}\phi_{j_1}...\phi_{j_n}\right\rangle_n=\\\sum_{i_1,j_1...i_n,j_n}\overline{a_{i_1,...,i_n}}b_{j_1,...,j_n}\langle \psi_{i_1},\phi_{j_1}\rangle...\langle \psi_{i_n},\phi_{j_n}\rangle,
	\end{multline*}
	and a *-homomorphism $\pi$ mapping $\mathcal{A}(M)$ into the unbounded operators on $\mathfrak{H}$ such that
	$$\pi(\hat\phi(f))=\hat a(Kf)+\hat a(Kf)^*,$$
	with
	$$\hat a(f)^*(\Psi_0,...,\Psi_n,0,...)=(0,S_1(Kf\otimes \Psi_0),...,S_{n+1}(Kf\otimes \Psi_n),0,...).$$
\end{definition}
\noindent We see that the creation and annihilation operators, $a(\psi)^*$ and $a(\psi)$ respectively (for some $\psi\in H$) satisfy the canonical commutation relations
$$[a(\psi),a(\phi)]=0,\;[a(\psi)^*,a(\phi)^*]=0,\;[a(\psi),a(\phi)^*]=\langle \psi,\phi\rangle.$$ If one constructs the Fock space as a representation of the Weyl algebra, one sees immediately that this representation is faithful from the fact that Weyl algebras are simple (see \cite[Theorem 5.2.8]{bratteliStatesQuantumStatistical1981}) However for completeness we give a direct proof for Fock space representations of $\mathcal{A}(M)$.
\begin{proposition}
	\label{prop:fock_faith}
	The Fock space representations of a linear scalar field theory $\mathcal{A}(M)$ for $M$ globally hyperbolic are faithful, i.e. for each $b\in\mathcal{A}(M)$, $$\pi(b)=0\iff b=0.$$
\end{proposition}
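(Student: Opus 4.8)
The implication $b=0\Rightarrow\pi(b)=0$ is immediate since $\pi$ is a homomorphism, so the plan is to establish the converse, that $\ker\pi=\{0\}$. First I would put an arbitrary $b\in\mathcal{A}(M)$ into a normal form. Only finitely many test functions occur in $b$, and their classes span a finite-dimensional subspace $U\subset\mathcal{D}(M)/(\Box-V)\mathcal{D}(M)$; fix a totally ordered real basis $e_1<\dots<e_N$ of $U$. Using the $\mathbb{R}$-linearity of $\hat\phi$ and the relation $[\hat\phi(f),\hat\phi(g)]=i\sigma_M(f,g)\mathds{1}$, every commutator is a scalar, so $b$ can be rewritten as a finite linear combination of ordered monomials $b=\sum_{\vec\alpha}c_{\vec\alpha}\,\hat\phi(e_{\alpha_1})\cdots\hat\phi(e_{\alpha_n})$ with $\alpha_1\le\dots\le\alpha_n$. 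It then suffices to show that $\pi$ annihilates such an expression only when all $c_{\vec\alpha}$ vanish; as a byproduct this reproves that the ordered monomials are linearly independent (Poincar\'e--Birkhoff--Witt).

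To read off the coefficients I would exploit the particle-number grading of $\mathfrak{H}$. With $\pi(\hat\phi(e_\alpha))=\hat a(Ke_\alpha)+\hat a(Ke_\alpha)^*$ as in Definition \ref{def:fock}, I test $\pi(b)$ against the coherent vectors $\epsilon(\zeta)=\bigoplus_{n}(n!)^{-1/2}\zeta^{\otimes n}$ (equivalently, against finitely excited vacua $\hat a(\zeta_1)^*\cdots\hat a(\zeta_k)^*\Omega$). A routine Wick reduction shows that $e^{-\langle\chi,\eta\rangle}\langle\epsilon(\chi),\pi(b)\epsilon(\eta)\rangle$ is a polynomial in the pairings $\langle\chi,Ke_\alpha\rangle$, $\langle Ke_\alpha,\eta\rangle$ and the contractions $\langle Ke_\alpha,Ke_\beta\rangle$, whose top joint-degree part is $R(\chi,\eta)=\sum_{\vec\alpha}c_{\vec\alpha}\prod_i s_{\alpha_i}$ with $s_\alpha:=\langle\chi,Ke_\alpha\rangle+\langle Ke_\alpha,\eta\rangle$. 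If $\pi(b)=0$ then this matrix element vanishes for all $\chi,\eta$, hence $R\equiv0$.

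The main obstacle is that the $Ke_\alpha$ are typically complex-linearly dependent in $H$ — for the standard vacuum $K$ retains only positive-frequency data, so relations such as $Ke_\alpha=iKe_\beta$ can occur — so it is not obvious that $R\equiv0$ forces the coefficients to vanish; in particular, testing against $\Omega$ alone, which only probes $\langle Ke_\alpha,\eta\rangle$, is insufficient, and one genuinely needs excitations on both sides. The resolution is the following lemma: as $\chi,\eta$ range independently over $H$, the tuple $(s_\alpha)_\alpha$ ranges over all of $\mathbb{C}^N$. Indeed the two summands range over $\bar W$ and $W$ respectively, where $W:=\{(\langle Ke_\alpha,\eta\rangle)_\alpha:\eta\in H\}\subseteq\mathbb{C}^N$, and a direct computation identifies the annihilator $(W+\bar W)^\perp$ with $D\cap\bar D$, where $D:=\{z\in\mathbb{C}^N:\sum_\alpha z_\alpha Ke_\alpha=0\}$. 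Splitting $z\in D\cap\bar D$ into real and imaginary parts produces two real coefficient vectors of $U$ whose images under $K$ vanish; but $Kf=0$ implies $\sigma_M(f,g)=2\,\mathrm{Im}\langle Kf,Kg\rangle=0$ for all $g$, so $K$ is injective on $U$ by nondegeneracy of $\sigma_M$ (Theorem \ref{thm:causprop}), whence $z=0$ and $W+\bar W=\mathbb{C}^N$.

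Consequently the $s_\alpha$ are free variables, the ordered monomials $\prod_i s_{\alpha_i}$ are linearly independent polynomials, and $R\equiv0$ forces $c_{\vec\alpha}=0$ for every top-degree $\vec\alpha$. Subtracting the top-degree part and inducting on the degree $n$ then kills all remaining coefficients, so $b=0$ and $\pi$ is faithful. I expect the Wick bookkeeping to be entirely routine, and the one delicate point to be the lemma above, namely recognising that the complex dependence of the $Ke_\alpha$ is exactly compensated by the real injectivity of $K$, which is what legitimises extracting coefficients from $R$.
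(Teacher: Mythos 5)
Your proof is correct, but at the crucial coefficient-extraction step it takes a genuinely different route from the paper's, and in one respect it is more careful. The paper also reduces $b$ to a normal form $\sum c_{k_1,\dots,k_N}(\hat\phi(f_1))^{k_1}\cdots(\hat\phi(f_N))^{k_N}$, but then assumes \emph{without loss of generality} that the vectors $\psi_j=Kf_j$ are linearly independent in $H$, chooses a dual family $\phi_j$ with $\langle\psi_i,\phi_j\rangle=\delta_{ij}$, and extracts the top coefficient by iterated commutators with annihilation operators, $c_{M,\dots,M}\propto(D_{a(\phi_1)}^M\circ\cdots\circ D_{a(\phi_N)}^M)(\pi(b))$, descending recursively through the lower orders. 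Both the existence of that dual family and the WLOG itself require \emph{complex}-linear independence of the $Kf_j$, which is exactly the obstacle you identify: real-linearly independent classes $e_\alpha$ can have complex-linearly dependent images (a relation such as $Ke_\alpha=iKe_\beta$ is consistent with the one-particle-structure axioms, as a two-dimensional example $K(x_1,x_2)=x_1+ix_2$ on $(\mathbb{R}^2,\sigma)$ shows), and in that situation $b$ simply cannot be rewritten as a polynomial in fields whose $K$-images are complex-independent, so the paper's reduction is not available. Your lemma that $W+\bar W=\mathbb{C}^N$ --- i.e.\ that the real injectivity of $K$ granted by nondegeneracy of $\sigma_M$ exactly compensates the possible complex dependence --- closes this gap; the price is the coherent-state/Wick bookkeeping, which is indeed routine, while the paper's commutator trick buys a shorter argument that is only valid under its (in general unobtainable) independence assumption. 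One point of hygiene: the coherent vectors $\epsilon(\zeta)$ do not lie in the finite-particle domain on which $\pi$ acts, so you should either rely on your parenthetical alternative (matrix elements between finitely excited vacua, of which the coherent-state expression is the generating function) or note that $\pi(b)\epsilon(\eta)$ converges when applied termwise, thanks to the $\sqrt{n+1}$-type bounds of creation and annihilation operators on the $n$-particle sectors; either remark suffices to make the evaluation against $\epsilon(\chi),\epsilon(\eta)$ legitimate.
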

\begin{proof}
	First note that for $f\in\mathcal{D}(M)$ $Kf=0$ if and only if $\hat\phi(f)=0$. This follows from the fact that $Kf=0$ implies $\sigma(f,g)=0$ for all $g\in\mathcal{D}(M)$ which means $f\in(\Box-V)\mathcal{D}(M)$.
	
	Now suppose $b\in\mathcal{A}(M)$. Without loss of generality there are a finite number $f_j\in\mathcal{D}(M)$ for $j=1,...,N$ such that $\psi_j=Kf_j\in H$ linearly independent and an $M\in\mathbb{N}$ such that
	$$b=\sum_{k_1,...k_N=0}^Mc_{k_1,...,k_N}(\hat\phi(f_1))^{k_1}...(\hat\phi(f_N))^{k_N},$$
	which means
	$$\pi(b)=\sum_{k_1,...k_N=0}^Nc_{k_1,...,k_N,l_1,...,l_N}(a(\psi_1)^*+a(\psi_1))^{k_1}...(a(\psi_N)^*+a(\psi_N))^{k_N}.$$
	Note that $H_N=\text{span}(\psi_1,...,\psi_N)$ is an $N$-dimensional Hilbert space, on which we can find a basis $\{\phi_j\in H_N:j=1,...,N\}$ such that $$\langle \psi_j,\phi_j\rangle=\delta_{ij}.$$
	Assume that $\pi(b)=0$. This must in particular mean that $[\pi(b),A]=0$ for any linear operator $A$ on $\mathfrak{H}$. 
	Defining $$D_A(B)=[A,B],$$ we can now calculate
	$$c_{M,...,M}=(D_{a(\phi_1)}^M\circ...\circ D_{a(\phi_N)}^M)(\pi(b))=0.$$
	This can be repeated for all lower order contributions to yield $c_{k_1,...,k_N}=0$. Thus $b=0$.
\end{proof}

\noindent A key conceptual result (see for instance \cite[Chapter 5]{brunettiAdvancesAlgebraicQuantum2015}) is that the class of Fock-space representations (up to unitary equivalence) exactly matches the class of GNS representations for a quasi-free state $\omega$ per the correspondence of quasi-free states to one-particle structures. Here of course $\mathfrak{H}_\omega=\mathfrak{H}$ and $\pi_\omega=\pi$, but furthermore $\Omega_\omega=(1,0,....)$ and $\mathfrak{D}_\omega=\pi(\mathcal{A}(M))\Omega_\omega$. These representations allow for a particle interpretation where $\hat a(\psi)$ can be interpreted as creating a particle in the one particle state $\psi\in H$ (taken to be a unit vector) such that we can define number operators $N(\psi)=\hat a^*(\psi)\hat a(\psi)$, whose eigenvectors can be interpreted as corresponding to states where a finite number of particles in the one-particle state $\psi$ have been excited from the vacuum state $\Omega_\omega$. The eigenvalue then corresponds to the number of these particles. In the absence of sufficient symmetries there is generally no clear way to select a preferred particle interpretation and associated vacuum state, after all for free scalar fields of Minkowski space-time a preferred Fock-space representation is selected by the fact that it allows a unitary implementation of space-time symmetries that leave particle numbers invariant. Nevertheless we still see the fact that a theory allows for Fock-spaces and particle interpretations in the first place as a very useful, if not necessary feature of a (linear scalar) quantum field theory. Therefore in section \ref{sec:qft}, where we discuss the construction of linear scalar quantum fields on semi-globally hyperbolic space-times, we demand that these theories allow for Fock-space representations and in fact use the one-particle structures as a main ingredient for the construction of our theories.
	\end{appendices}
\printbibliography
\end{document}